\font\msbm=msbm10
\numberwithin{equation}{section}
\theoremstyle{plain}
\newtheorem{theorem}{Theorem}[section]
\newtheorem{lemma}[theorem]{Lemma}
\newtheorem{claim}[theorem]{Claim}
\newtheorem{corollary}[theorem]{Corollary}
\newtheorem{definition}{Definition}[section]
\newtheorem{remark}[theorem]{Remark}
\def\mathbb#1{\hbox{\msbm{#1}}}
\newcommand{\be}{\boldsymbol{e}}
\newcommand{\bone}{\boldsymbol{1}}
\newcommand{\BA}{\boldsymbol{A}}
\newcommand{\BE}{\boldsymbol{E}}
\newcommand{\BG}{\boldsymbol{G}}
\newcommand{\BJ}{\boldsymbol{J}}
\newcommand{\BM}{\boldsymbol{M}}
\newcommand{\BP}{\boldsymbol{P}}
\newcommand{\BQ}{\boldsymbol{Q}}
\newcommand{\BR}{\boldsymbol{R}}
\newcommand{\BS}{\boldsymbol{S}}
\newcommand{\BU}{\boldsymbol{U}}
\newcommand{\BV}{\boldsymbol{V}}
\newcommand{\BW}{\boldsymbol{W}}
\newcommand{\BX}{\boldsymbol{X}}
\newcommand{\BY}{\boldsymbol{Y}}
\newcommand{\BZ}{\boldsymbol{Z}}
\newcommand{\BDelta}{\boldsymbol{\Delta}}
\newcommand{\BPhi}{\boldsymbol{\Phi}}
\newcommand{\BPsi}{\boldsymbol{\Psi}}
\newcommand{\BLambda}{\boldsymbol{\Lambda}}
\newcommand{\BSigma}{\boldsymbol{\Sigma}}
\newcommand{\PP}{\mathcal{P}}
\newcommand{\I}{\boldsymbol{I}}
\newcommand{\RR}{\mathbb{R}}
\newcommand{\lag}{\langle}
\newcommand{\rag}{\rangle}
\DeclareMathOperator{\argmin}{argmin}
\DeclareMathOperator{\argmax}{argmax}
\DeclareMathOperator{\E}{\mathbb{E}}
\DeclareMathOperator{\diag}{diag}
\newcommand{\Od}{\text{O}(d)}
\DeclareMathOperator{\rank}{rank}
\renewcommand{\Pr}{\mathbb{P}}
\long\def\\#1//{}
\begin{document}
\title{\bf Near-Optimal Performance Bounds for Orthogonal and Permutation Group Synchronization\\ via Spectral Methods}
\author{Shuyang Ling\thanks{New York University Shanghai (Email: sl3635@nyu.edu). This work is (partially) financially supported by the National Key R\&D Program of China, Project Number 2021YFA1002800, National Natural Science Foundation of China (NSFC) No.12001372, Shanghai Municipal Education Commission (SMEC) via Grant 0920000112, and NYU Shanghai Boost Fund.
} 
}

\maketitle

\begin{abstract}
Group synchronization asks to recover group elements from their pairwise measurements. It has found numerous applications across various scientific disciplines. In this work, we focus on orthogonal and permutation group synchronization which are widely used in computer vision such as object matching and structure from motion. Among many available approaches, the spectral methods have enjoyed great popularity due to their efficiency and convenience.
We will study the performance guarantees of the spectral methods in solving these two synchronization problems by investigating how well the computed eigenvectors approximate each group element individually. We establish our theory by applying the recent popular~\emph{leave-one-out} technique and derive a~\emph{block-wise} performance bound for the recovery of each group element via eigenvectors. In particular, for orthogonal group synchronization, we obtain a near-optimal performance bound for the group recovery in presence of additive Gaussian noise. For permutation group synchronization under random corruption, we show that the widely-used two-step procedure (spectral method plus rounding) can recover all the group elements exactly if the SNR (signal-to-noise ratio) is close to the information theoretical limit. Our numerical experiments confirm our theory and indicate a sharp phase transition for the exact group recovery.

\end{abstract}

{\bf Keywords:} Spectral methods, orthogonal/permutation group synchronization, signal processing, eigenvector perturbation, object matching

\section{Introduction} 
Suppose there are $n$ group elements $\{g_i\}_{i=1}^n\in{\cal G}$ and we observe their noisy pairwise measurements 
\begin{equation}\label{eq:gsync}
g_{ij} = g_i^{-1}g_j +w_{ij}, \quad (i,j)\in{\cal E}
\end{equation}
where $w_{ij}$ is the noise and ${\cal E}$ is the edge set of an underlying network.
How to recover these elements $g_i$ from the noisy observations $\{g_{ij}\}_{(i,j)\in {\cal E}}$? Depending on the specific group type, the group synchronization problem is widely used in many applications including computer vision~\cite{OVBS17,SHSS16}, robotics~\cite{LHBC19,RCBL19}, clock synchronization~\cite{GK06} and cryo-electron microscopy~\cite{S18}. 
In this paper, we will focus on the synchronization of the orthogonal and permutation group.

\paragraph{Orthogonal group synchronization:} The group ${\cal G}$ in~\eqref{eq:gsync} is the orthogonal group $\Od$, 
\begin{equation}
\Od : = \{ \BR: \BR\in\RR^{d\times d}, \quad \BR^{\top}\BR = \BR\BR^{\top} = \I_d \}.
\end{equation}

The general $\Od$ synchronization includes $\mathbb{Z}_2$-synchronization ($d=1$), angular synchronization $(d=2)$, SO(3) synchronization as special cases~\cite{ABBS14,AKKSB12,S11}. It often arises in rotation estimation and structure-from-motion~\cite{AKKSB12,OVBS17}, and also plays a significant role in SLAM (simultaneous localization and mapping) in robotics~\cite{LHBC19,RCBL19}.

\paragraph{Permutation group synchronization:}  
The underlying group ${\cal G}$ in~\eqref{eq:gsync} becomes permutation group, which is represented by permutation matrices $\Pi_d$:
\begin{equation}\label{def:pmmat}
\Pi_d : = \{\BR\in\{0,1\}^{d\times d}: ~ \BR^{\top}\BR = \BR\BR^{\top} = \I_d \}.
\end{equation}

Essentially, the permutation group synchronization is a special case of the $\Od$ synchronization since $\Pi_d$ is a subgroup of $\Od.$
Permutation group is directly related to the multi-way matching problem (map synchronization) in computer vision. Suppose there are $n$ images of the same object and each of them has $d$ features. Given a set of partially known feature correspondence among these $n$ images, how to find the~\emph{all the correct pairwise bijection}? This matching problem is one of the core problems in image registration, structure from motion, and object matching problem~\cite{HG13,PKS13,SHSS16}. This multi-way matching problem can be reformulated as recovering a set of permutation matrices from their pairwise products where each bijection corresponds to a permutation matrix~\cite{PKS13}.

\vskip0.25cm

Note that every element $\BR$ in $\Od$ or $\Pi(d)$ satisfies $\BR^{-1} = \BR^{\top}$. Therefore, the general synchronization problem reduces to recovering $n$ group elements $\{\BG_i\}_{i=1}^n$ from its noisy measurements
\[
\BG_{ij} = \BG_i\BG_j^{\top} + \text{noise}, \quad 1\leq i,j\leq n
\]
where the edge set ${\cal E}$ is assumed to be a complete graph throughout this manuscript. 
From now on, we let $\BA_G := [\BG_{ij}]_{1\leq i,j\leq n}$  be the $nd\times nd$ data matrix.

Given its practical importance, many efforts have been taken to solve the group synchronization problem. 
In absence of noise, group synchronization is easily solvable by sequentially recovering the group elements. However, this sequential strategy no longer works in presence of noise since the noise will be amplified. One common approach is to find the least squares estimator. However, it is usually an NP-hard problem to obtain the least squares estimator exactly, even for the simplest group $\mathbb{Z}_2=\{1,-1\}.$ 
As a result, many optimization approaches, including convex relaxation and nonconvex methods, are developed to tackle various challenging scenarios. 
In this work, we will instead focus on the spectral methods for orthogonal/permutation group synchronization. There are several variants of spectral methods for $\Od$ and $\Pi(d)$ group synchronization which are based on the data matrix $\BA_G := [\BG_{ij}]_{1\leq i,j\leq n}$~\cite{PKS13,SHSS16} or its corresponding (normalized) connection Laplacian matrix~\cite{BSS13}. Here we will focus the spectral methods which begin with computing the top $d$ eigenvectors of the observed data $\BA_G$ and then approximate each group element by rounding all the $d\times d$ blocks of the eigenvectors. In particular, we will investigate its performance and answer the following questions:
\[
\emph{\text{When does the spectral method recover the underlying group elements?}}
\]
\[
\emph{\text{How does the performance depend on the noise?}}
\]

\subsection{Related works and our contribution}

Group synchronization has found many applications in signal processing, computer vision, and machine learning. Some prominent examples include community detection~\cite{ABBS14,BBV16}  ($\mathbb{Z}_2$ synchronization), joint alignment~\cite{CC18} (finite cyclic group $\mathbb{Z}_n$), angular synchronization~\cite{BBS17,S11,ZB18}, statistical ranking~\cite{Y12} and phase retrieval~\cite{IPSV20} (unitary group U(1)), object matching~\cite{HG13,PKS13} (permutation group), rotation estimation~\cite{AKKSB12} (SO(3) group), clock synchronization~\cite{GK06} (cyclic group on a finite interval), and simultaneous localization and mapping (SLAM) in robotics~\cite{RCBL19,LHBC19} (special Euclidean group SE$(d)$). {There have been many efforts on solving the group synchronization problem in different settings by using various approaches including optimization-based approach~\cite{ABBS14,CC18,LYS20,RCBL19,WS13,ZB18}, spectral methods~\cite{S11,SHSS16,DCT21,Y12}, and message-passing type methods~\cite{LS19,PWBM18b}.}
For general group synchronization, one important topic is to determine how the noise strength affects the performance of algorithms and solvability.
The fundamental recovery criterion for information recovery from pairwise measurements is studied~\cite{CSG16}. The accuracy and noise sensitivity of the spectral method for general compact groups are presented in~\cite{RG20}. 
From now on, we will briefly review the recent literatures on orthogonal and permutation group synchronization and highlight those works which motivate this work.

Orthogonal group synchronization is often considered in rotation estimation arising from computer vision and robotics. One of the most widely used approaches to tackle general $\Od$ synchronization is to find the least squares estimator. As pointed out before, it is often an NP-hard problem to find the least squares estimator since the objective function is usually highly nonconvex and even discrete in some cases. This poses a significant challenge to  practical implementation. One important idea to overcome this technical difficulty is to find appropriate relaxations which are solvable within polynomial time. 
Convex relaxation has proven to be a very powerful method~\cite{WS13,Z19,L20b,RCBL19}. However, the solution to the convex relaxation program is not necessarily equal to that of the original program, i.e., the tightness does not always hold. The study of the tightness of convex relaxation has been a research focus in orthogonal group synchronization.
In~\cite{WS13}, Wang and Singer investigated the semidefinite program (SDP) relaxation of the orthogonal group synchronization under random corruption and characterized the phase transition of group recovery from noisy measurements. The tightness of the SDP relaxation for angular synchronization, as a special case of $\Od$ synchronization, is studied in~\cite{BBS17} with a near-optimal performance bound on the signal-to-noise ratio introduced in the very inspiring work~\cite{ZB18}. 
Recent works~\cite{Z19,L20b} propose suboptimal deterministic conditions which guarantee the tightness of the SDP relaxation for general $\Od$ synchronization. 
A similar route of research can also be found for permutation group synchronization. Huang and Guibas studied the convex relaxation approach of the permutation group synchronization in~\cite{HG13} and provided theoretical guarantees for correct recovery. The work~\cite{CGH14} investigated exact and robust object matching via SDP relaxation under partially known similarity between objects and the performance bound is near-optimal up to a log-factor. 
Despite the usefulness of convex relaxation, it remains highly nontrivial to solve large-scale SDPs. In practice, efficient first-order gradient-based approaches are preferred such as Riemannian optimization~\cite{AMS09,B15,RCBL19},  the Burer-Monteiro factorization~\cite{BM05,MMMO17,BVB20}, and {iterative reweighing strategy~\cite{SLL20}}. The major issue of Riemannian optimization is the inherent nonconvexity of the objective function, which could potentially create local optima. Fortunately, we have seen a surge of research in exploring the provably convergent nonconvex methods in solving $\mathbb{Z}_2$ synchronization~\cite{BBV16,LXB19}, angular synchronization~\cite{B16,LYS17,ZB18}, {permutation group synchronization~\cite{SLL20}}, and $\Od$ or SO$(d)$ synchronization in~\cite{B15,MMMO17,L20b,LS19}.

The spectral method is another popular method in group synchronization which is extremely convenient to use~\cite{ARF16,CKS15,DCT21,PKS13,PKSS14,S11,SHSS16,Y12}. 
Singer studied the spectral methods for angular synchronization~\cite{S11} with performance guarantees derived from random matrix theory.
The spectral method is also used for point cloud registration~\cite{CKS15} which is closely related to orthogonal group synchronization, and rigid-motion synchronization~\cite{ARF16} in special Euclidean group SE(3). {A recent work~\cite{DCT21} applies the SVD-based spectral methods to recover the ranking of some real numbers from their subsampled pairwise noisy differences which is a synchronization problem over the real line (a non-compact group).}
The work~\cite{PKS13} proposed the spectral relaxation of permutation group synchronization and {derived an $\ell_2$-norm performance bound} with tools from Gaussian random matrix theory;~\cite{SHSS16} provided a very careful block-wise analysis of the spectral methods for permutation group synchronization with a fixed underlying general network.
In~\cite{RG20}, the authors derived the sharp asymptotic formula for the mean squared error between the top eigenvectors of $\BA_G = [\BG_{ij}]_{1\leq i,j\leq n}$ and the planted group elements which exhibits a sharp phase transition. While the results in~\cite{RG20} hold for general compact group synchronization (of course applies to $\Od$ synchronization here), the block-wise recovery error bound for each group element was not obtained in~\cite{RG20}.

To derive a performance bound for the spectral methods, it suffices to approximate how close the eigenvectors of $\BA_G : =  [\BG_{ij}]_{1\leq i,j\leq n}$ are to the hidden group elements. This is essentially the perturbation of the eigenvectors of a clean low-rank matrix corrupted by random noise. 
This topic has been popular in random matrix theory, studied in a series of works~\cite{KX16,OVW16,OVW18}. 
Naive theoretical guarantees for the spectral methods can be easily derived by using Davis-Kahan theorem~\cite{DK70,S98,V07} which leads to an error bound under $\ell_2$- or Frobenius norm. However, the obtained bound is far from optimal since it does not yield a bound for each group component. Instead, we are more interested in an~\emph{entrywise} or \emph{block-wise} analysis of eigenvectors~\cite{EBW18,BDS21} which is usually quite challenging.
{Recently, there is an increasing trend of research focusing on providing an entrywise analysis of the eigenvectors in several statistical models~\cite{AFWZ20,DCT21,DLS21,FWZ18,CFMW19,MWCC20,ZB18}. In particular, the~\emph{leave-one-out} technique} has been shown highly powerful in deriving near-optimal performance bounds in the examples such as $\mathbb{Z}_2$- and angular synchronization~\cite{AFWZ20}, spectral clustering for stochastic block model~\cite{AFWZ20,DLS21}, ranking problem~\cite{CFMW19}, and covariance estimation~\cite{FWZ18}. It is also used in analyzing the convergence of first-order gradient method in solving inverse problems arising from signal processing and machine learning~\cite{MWCC20,ZB18}. 
Our work has benefitted greatly from~\cite{AFWZ20,ZB18} which provide an entrywise analysis of eigenvectors and its application in $\mathbb{Z}_2$-synchronization, community detection under the stochastic block model, and matrix completion. 

Our contribution consists of several aspects: we first study the spectral methods for $\Od$ synchronization under Gaussian noise: namely first computing the top $d$ eigenvectors of $\BA_G$ and use them to estimate the group elements. We provide a block-wise near-optimal error bound for each group element (modulo a constant) which justifies the usefulness of the spectral methods in $\Od$ synchronization. This analysis can be regarded as a natural generalization from $\mathbb{Z}_2$-synchronization in~\cite{AFWZ20} and angular synchronization in~\cite{ZB18}.
Then we study the permutation group synchronization under uniform random corruption. We are interested in when the two-step approach, namely, eigenvectors followed by rounding procedure, can give the exact recovery of the planted permutation matrix. The derived bound is also nearly optimal in terms of information theoretical limits, and improves the bound in~\cite{SHSS16} and matches the bounds obtained via the SDP relaxation in~\cite{HG13,CGH14}.  It is well worth noting that~\cite{BGHH18} studies a more general setting of permutation group synchronization and provides a near-optimal performance bound for the spectral methods. However, the technical approach is quite different from ours. Our theory is developed by applying the recent popular~\emph{leave-one-out} technique. However, the~\emph{block-wise} analysis of eigenvectors requires additional technical treatments. Our work resolved one question raised in~\cite{AFWZ20} about the block-wise analysis of eigenvectors for matrices with row/column block-wise independence. {This framework is quite flexible and can be applied to other problems which require the block-wise analysis of eigenvectors.}

\subsection{Organization}
Section~\ref{s:prelim} introduces the mathematical models and the spectral methods for group synchronization. We present the main results in Section~\ref{s:main} with numerical experiments to support our theory in Section~\ref{s:numerics}. The proofs are provided in Section~\ref{s:proof}.

\subsection{Notation}
Given a matrix $\BX$, $\BX^{\top}$ is the transpose of $\BX$ and $\BX\succeq 0$ means $\BX$ is positive semidefinite. $\I_n$ is the $n\times n$ identity matrix, {$\BJ_n$ is the $n\times n$ ``1" matrix}, and $\bone_n$ is an $n\times 1$ ``1" vector. $\|\BX\|$ denotes the operator norm of $\BX$ and $\|\BX\|_F$ is the Frobenius norm. For two matrices $\BX$ and $\BY$, we denote $\BX\otimes\BY$ their Kronecker product, i.e., the $(i,j)$-block of $\BX\otimes \BY$ is $X_{ij}\BY$. {For a matrix $\BX$, we let $\sigma_{i}(\BX)$ and $\lambda_i(\BX)$ be the $i$th largest singular value and eigenvalue of $\BX$ respectively.}
For two nonnegative functions $f(n)$ and $g(n)$, we denote $f(n)\lesssim g(n)$ and $f(n) = O(g(n))$ if there exists an absolute positive constant $C$ such that $f(n)\leq C g(n)$ for all $n$. 

\section{Preliminaries}\label{s:prelim}

This paper will study two benchmark models of group synchronization {under additive noise and uniform corruption (multiplicative noise).} 
\begin{itemize}
\item Orthogonal group synchronization under additive Gaussian noise. The pairwise noisy measurement $\BG_{ij}$ is observed between $\BG_i$ and $\BG_j$, 
\begin{equation}
\BG_{ij} = \BG_i\BG_j^{\top} + \sigma \BW_{ij}\label{model:od}\tag{OD}
\end{equation}
where $\BG_i\in \Od$ and $\BW_{ij}\in\RR^{d\times d}$ is a Gaussian random matrix.

\item Permutation group synchronization under uniform random corruption. 
Consider 
\begin{equation}
\BG_{ij} = \begin{cases}
\BG_i\BG_j^{\top}, &\text{ with probability } p, \\
\BP_{ij}, & \text{with probability } 1-p,
\end{cases} \label{model:pm}\tag{PM}
\end{equation}
where $\{\BG_i\}_{i=1}^n$ are the hidden permutation matrices and $\BP_{ij}\in\RR^{d\times d}$ is an independent random permutation uniformly sampled from $d!$ permutation matrices. In other words, 
\begin{equation}
\BG_{ij}= 
\begin{cases}
X_{ij} \BG_i\BG_j^{\top} + (1-X_{ij})\BP_{ij}, & i\neq j, \\
\I_d, & i = j,
\end{cases} \label{model:pm2}\tag{PM2}
\end{equation}
where $X_{ij}\sim$Bernoulli($p$) is independent of $\BP_{ij}.$
\end{itemize}

For both models, our goal is to recover $\BG_i$ from the noisy measurements $\BG_{ij}.$ One common method is to find the least squares estimator by minimizing
\[
\min_{\BR_i\in \Od}~ \sum_{i,j} \|\BR_i\BR_j^{\top} - \BG_{ij}\|_F^2 
\]
whose global minimizer equals the global maximizer of the following generalized quadratic form:
\begin{equation}\label{eq:prog}
\max_{\BR_i\in \Od} ~\sum_{i,j} \left\lag \BG_{ij}, \BR_i\BR_j^{\top} \right\rag.
\end{equation}
However, it is in general NP-hard to find the global optimizer. Therefore, one wants to find an appropriate relaxation of~\eqref{eq:prog}. The idea of spectral relaxation uses a simple fact: by letting $\BR$ be an $nd\times d$ matrix whose $i$th block equals $\BR_i$, then~\eqref{eq:prog} is equivalent to
\[
\max_{\BR\in \Od^{\otimes n}} \lag \BA_G, \BR\BR^{\top}\rag
\]
where $\BA_G$ is an $nd\times nd$ symmetric matrix whose $(i,j)$-block is $\BG_{ij}$.
Note that all $\BR\in\Od^{\otimes n}$ satisfies $\BR^{\top}\BR = n\I_d$. The spectral method simply replaces the constraints $\BR\in \Od^{\otimes n}$ by $\BR^{\top}\BR = n\I_d$,
\begin{equation}\label{eq:spectra}
\max_{\BR\in \RR^{nd\times d}} \lag \BA_G, \BR\BR^{\top}\rag\quad \text{ subject to } \quad\BR^{\top}\BR = n\I_d
\end{equation}
whose global maximizer equals the top $d$ eigenvectors of $\BA_{G}.$

\vskip0.2cm

As a result, the spectral method is very convenient to use: simply compute the top $d$ eigenvectors of the matrix $\BA_G$, denoted by an $nd\times d$ partial orthogonal matrix $\BPhi$  where $\BPhi^{\top} = [\BPhi_1^{\top}, \cdots, \BPhi_n^{\top}]$ and $\BPhi_i$ is the $i$th $d\times d$ block. In particular, we normalize $\BPhi$ to be $\BPhi^{\top}\BPhi = n\I_d$, i.e.,  each column is of norm $\sqrt{n}$.
Then we implement a rounding procedure to obtain the estimation of $\BG_i.$ We summarize the aforementioned procedures in Algorithm~\ref{algo1}.

\vskip0.25cm

\begin{algorithm}[h!]
\caption{Spectral methods for orthogonal group synchronization}
\begin{algorithmic}[1]
\State Compute the top $d$ eigenvectors $\BPhi$ of $\BA_G$ with $\BPhi^{\top}\BPhi = n\I_d.$
\State Compute $\widehat{\BG}_i = \PP(\BPhi_i)$ for all $1\leq i\leq n$ where $\BPhi_i$ is the $i$th block of $\BPhi$ and
\begin{equation}\label{def:P}
\PP(\BX) : = \BU\BV^{\top}
\end{equation}
where $\BU$ and $\BV$ are the left and right singular vectors of $\BX.$

\end{algorithmic}
\label{algo1}
\end{algorithm}

For permutation matrix, a slight modification of the rounding procedure is implemented. Simply speaking, once we get $\BPhi_i$, we estimate $\BG_i$ via 
\[
\widehat{\BG}_i = \argmin_{\BR_i \in \Pi_d } \left\| \BR_i - \PP(\BPhi_1)\PP(\BPhi_i)^{\top} \right\|_F^2
\]
where $\Pi_d$ is the set of all $d\times d$ permutation matrices.
This linear assignment problem can be solved by the Hungarian algorithm in polynomial time~\cite{K55}. The entire procedures are summarized in Algorithm~\ref{algo2}.

\begin{algorithm}[h!]
\caption{Spectral methods for permutation group synchronization}
\begin{algorithmic}[1]
\State Compute the top $d$ eigenvectors $\BPhi$ of $\BA_G$ with $\BPhi^{\top}\BPhi = n\I_d.$
\State Compute $\PP(\BPhi_i)$ for all $1\leq i\leq n$ where $\BPhi_i$ is the $i$th block of $\BPhi$ and
\[
\PP(\BX) : = \BU\BV^{\top}
\]
where $\BU$ and $\BV$ are the left and right singular vectors of $\BX.$

\item Compute $\PP(\BPhi_1)\PP(\BPhi_i)^{\top}$ for $1\leq i\leq n.$
\item Get the estimate for $\BG_i$ via
\[
\widehat{\BG}_i = \argmin_{\BR_i \in \Pi_d } \| \BR_i - \PP(\BPhi_1)\PP(\BPhi_i)^{\top} \|_F^2
\]
\end{algorithmic}
\label{algo2}
\end{algorithm}

How well do these algorithms work? {We consider the $\Od$ synchronization under additive Gaussian noise as an example. The data matrix $\BA_G$ can be naturally written into a spiked matrix model: $\BA_G = \BG\BG^{\top} + \BDelta$ where $\BDelta = \sigma\BW$. Note that without any noise, the top $d$ eigenvectors exactly give the group elements.}
If the noise $\BDelta$ is small, then one can easily invoke the classical matrix perturbation argument, e.g. Davis-Kahan theorem (Theorem~\ref{thm:dk}), to obtain an error bound between the top $d$ eigenvectors and the planted group elements in terms of operator or Frobenius norm. Namely, 
\[
{\min_{\BQ\in \Od}\|\BPhi - \BG \BQ\| \lesssim \frac{\|\BDelta\BG\|}{\lambda_{d}(\BA_G)} }
\]
which will be derived more carefully later in the proof section. 

On the other hand, it is much more appealing to provide an error bound for
\[
\max_{1\leq i\leq n} \| \widehat{\BG}_i - \BG_i\BQ \| \qquad \text{where }\qquad\widehat{\BG}_i = \PP(\BPhi_i)
\]
for some orthogonal matrix $\BQ\in\RR^{d\times d}$ since this would provide us an error bound for the recovery of each group element. In other words, we need to control the estimation error for each block $\BPhi_i$, which is essentially a generalization of the entrywise bound for the eigenvector discussed in~\cite{AFWZ20,ZB18}.
 However, the Davis-Kahan bound does not immediately yield a tight bound for the deviation of each $\BPhi_i$ from $\BG_i\BQ$  for some $\BQ\in\Od$. {This will be the main focus of our paper: we obtain the block-wise perturbation bound of $\BPhi$ via the~\emph{leave-one-out technique}. We will introduce this technique briefly in Section~\ref{ss:loo} and provide more details in Section~\ref{s:proof}.}

\section{Main theorem}\label{s:main}

In this section, we will provide theoretical guarantees for the spectral methods in solving the $\Od$ and $\Pi_d$ synchronization problem under the statistical models~\eqref{model:od} and~\eqref{model:pm} respectively.

\subsection{Main results}
Our main contribution is providing a near-optimal block-wise error bound of $\widehat{\BG}_i$ for all $1\leq i\leq n$. For the $\Od$ synchronization under Gaussian noise, we have the following theorem.

\begin{theorem}[Performance for orthogonal group synchronization]\label{thm:od}
Suppose the parameter $\sigma$ in the model~\eqref{model:od} satisfies
\[
\sigma <\frac{ c_0\sqrt{n}}{\sqrt{d} + \sqrt{\log n}}
\]
for some small constant $c_0 > 0.$ Then with high probability, the estimation $\widehat{\BG}_i$ of $\BG_i$ from Algorithm~\ref{algo1} satisfies
\[
\left\| \widehat{\BG}_i  \widehat{\BG}_j^{\top}-\BG_i\BG_j^{\top} \right\| \lesssim \sigma\sqrt{n^{-1}d}, \quad \forall i\neq j.
\]
In other words, 
\[
\max_{1\leq i\leq n}\|\widehat{\BG}_i - \BG_i\BQ_j\| \lesssim \sigma\sqrt{n^{-1}d}
\]
by letting $\BQ_j = \BG_j^{\top}\widehat{\BG}_j$ for any $1\leq j\leq n.$
\end{theorem}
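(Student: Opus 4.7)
The plan is to combine a Davis--Kahan perturbation bound with the leave-one-out technique to upgrade a global Frobenius-norm eigenvector bound into a uniform block-wise bound. Write $\BA_G = \BZ\BZ^\top + \sigma\BW$, where $\BZ\in\RR^{nd\times d}$ stacks the hidden orthogonal blocks $\BG_1,\ldots,\BG_n$. The signal $\BZ\BZ^\top$ has rank $d$ with all nonzero eigenvalues equal to $n$, and standard Gaussian concentration gives $\|\sigma\BW\|\lesssim \sigma\sqrt{nd}$, which is $\ll n$ under the assumed SNR. Weyl's inequality then places the top $d$ eigenvalues, collected in a diagonal matrix $\BLambda$, inside a window around $n$, and Davis--Kahan yields a $d\times d$ alignment $\BQ_0$ for which $\BPhi$ is globally close to $\BZ\BQ_0$; this bound is not, however, per-block.

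To refine this I would exploit the eigenvector identity $\BLambda\BPhi_i=\sum_j \BG_{ij}\BPhi_j$. Substituting $\BG_{ij}=\BG_i\BG_j^\top+\sigma\BW_{ij}$ and setting $\BQ_0 := \BZ^\top\BPhi\BLambda^{-1}$ yields the clean identity
\begin{equation*}
\BPhi_i - \BG_i\BQ_0 \;=\; \sigma\Bigl(\sum_{j=1}^n \BW_{ij}\BPhi_j\Bigr)\BLambda^{-1}.
\end{equation*}
Since $\|\BLambda^{-1}\|\asymp 1/n$, the block-wise target $\sigma\sqrt{d/n}$ reduces to proving $\|\sum_j \BW_{ij}\BPhi_j\|\lesssim \sqrt{nd}$ uniformly in $i$. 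A short side computation using $\BPhi^\top\BPhi=n\I_d$ also verifies that $\BQ_0^\top\BQ_0 = \I_d + O(\sigma\sqrt{d/n})$, so $\BQ_0$ lies within $O(\sigma\sqrt{d/n})$ of an orthogonal matrix $\BQ_0':=\PP(\BQ_0)\in\Od$.

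The main obstacle is that $\sum_j \BW_{ij}\BPhi_j$ cannot be handled by a direct Gaussian tail bound because $\BPhi_j$ depends on $\BW_{ij}$. Here the leave-one-out construction enters: define $\BA_G^{(i)}$ by zeroing out the entire $i$-th block row and column of noise (keeping the signal), let $\BPhi^{(i)}$ be its top-$d$ eigenvectors with $\BPhi^{(i)\top}\BPhi^{(i)}=n\I_d$, and pick an orthogonal $\BH^{(i)}\in\Od$ aligning $\BPhi$ with $\BPhi^{(i)}$. Then decompose
\begin{equation*}
\sum_{j=1}^n \BW_{ij}\BPhi_j \;=\; \Bigl(\sum_{j=1}^n \BW_{ij}\BPhi_j^{(i)}\Bigr)\BH^{(i)\top} + \BW_{i,\cdot}\bigl(\BPhi-\BPhi^{(i)}\BH^{(i)\top}\bigr),
\end{equation*}
where $\BW_{i,\cdot}=[\BW_{i1},\ldots,\BW_{in}]\in\RR^{d\times nd}$. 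The first summand has operator norm $\lesssim\sqrt{nd}$ via a Gaussian $\epsilon$-net argument conditional on $\BA_G^{(i)}$, since $\BW_{i,\cdot}$ is independent of $\BPhi^{(i)}$. The second summand is the main technical difficulty: the naive bound $\|\BW_{i,\cdot}\|\cdot\|\BPhi-\BPhi^{(i)}\BH^{(i)\top}\|_F$ is too loose, and one must propagate a sharp per-block estimate on $\|\BPhi_j-\BPhi_j^{(i)}\BH^{(i)\top}\|$ uniformly in $j$ via a self-bounding argument that feeds the block-wise identity back into itself. This careful bookkeeping of the $\sqrt{d}$-factors through the block structure is the technical heart of the proof, and ultimately yields $\|\BPhi_i-\BG_i\BQ_0\|\lesssim \sigma\sqrt{d/n}$ after a union bound over $i$.

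The final step converts the bound on $\BPhi_i$ into one on $\widehat{\BG}_i=\PP(\BPhi_i)$. Since $\BG_i\BQ_0$ lies within $O(\sigma\sqrt{d/n})$ of the orthogonal matrix $\BG_i\BQ_0'$ and has smallest singular value close to $1$, the Lipschitz continuity of the polar factor near $\Od$ gives $\|\widehat{\BG}_i-\BG_i\BQ_0'\|\lesssim\sigma\sqrt{d/n}$ uniformly. The claim $\|\widehat{\BG}_i\widehat{\BG}_j^\top-\BG_i\BG_j^\top\|\lesssim\sigma\sqrt{d/n}$ then follows by the triangle inequality since $\BQ_0'$ cancels in the product, and the equivalent statement with $\BQ=\BG_j^\top\widehat{\BG}_j\in\Od$ is obtained by right-multiplication by $\widehat{\BG}_j$ and the fact that $\|\widehat{\BG}_j\|=1$.
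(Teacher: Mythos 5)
Your proposal is correct, but it takes a recognizably different organizational route from the paper's proof, and the difference is worth noting. The paper chooses the orthogonal Procrustes alignment $\BQ=\PP(\BZ^{\top}\BPhi)$ as its surrogate and then approximates $\BPhi_i$ by the $i$-th block of $\BA\BZ\BQ\BLambda^{-1}$; this leads to the estimate \eqref{eq:keyest}, whose residual $\BDelta_i^{\top}(\BPhi-\BZ\BQ)$ is then split into three terms $T_1,T_2,T_3$ in \eqref{eq:T}. The third term, $T_3=\|\BDelta_i^{\top}\BZ\|\cdot\|\BQ_i\BS_i-\BQ\|$, compares two distinct orthogonal alignments and requires a dedicated lemma (Lemma~\ref{lem:QS}), which the paper singles out as precisely the new difficulty of the $d\ge2$ case. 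You instead take $\BQ_0:=\BZ^{\top}\BPhi\BLambda^{-1}$, which is \emph{not} orthogonal but makes the residual an exact algebraic identity, $\BPhi_i-\BG_i\BQ_0=\sigma\,\BW_i^{\top}\BPhi\BLambda^{-1}$, with no error term from the alignment at all; the price is a short side argument that $\BQ_0$ is within $O(\eta)$ of $\Od$, which you correctly supply. Your leave-one-out decomposition therefore has only two terms rather than three, and the analogue of $T_3$ never appears. This is a genuine simplification. What both arguments share, and what you correctly identify as the crux, is the need for a sharp operator-norm bound on $\|\BPhi-\BPhi^{(i)}\BH^{(i)\top}\|$ of order $\eta\max_i\|\BPhi_i\|$ rather than the $O(\sigma\sqrt d)$ one gets from naive Davis--Kahan: this is Lemma~\ref{lem:dk2}, and your description of the self-consistent bootstrapping through $\max_j\|\BPhi_j^{(i)}\|\lesssim\max_j\|\BPhi_j\|$ is exactly how that lemma is established. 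Your final Lipschitz-of-polar-factor step matches Lemma~\ref{lem:lip}.

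Two small remarks. First, when you state the target $\|\sum_j\BW_{ij}\BPhi_j\|\lesssim\sqrt{nd}$, the union bound over $i$ actually forces $\sqrt{n}(\sqrt d+\sqrt{\log n})$; this is the same mismatch between Theorem~\ref{thm:od} as stated and the $\eta$ in Theorem~\ref{thm:keys}, so it is not a defect of your argument relative to the paper, but you should carry the $\sqrt{\log n}$ through. Second, you call the Cauchy--Schwarz bound $\|\BW_{i,\cdot}\|\cdot\|\BPhi-\BPhi^{(i)}\BH^{(i)\top}\|_F$ ``too loose'': it is the \emph{Frobenius} norm and the \emph{naive Davis--Kahan} estimate of the operator-norm difference that are too loose, but the operator-norm Cauchy--Schwarz step combined with the sharp Lemma~\ref{lem:dk2} bound does suffice, so no genuinely new inequality beyond Lemma~\ref{lem:dk2} is needed there.
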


\begin{remark}
Theorem~\ref{thm:od} includes $\mathbb{Z}_2$- and angular synchronization as special cases. In particular, if $d = 1$, the problem reduces to $\mathbb{Z}_2$-synchronization and the bound is equivalent to the one derived in~\cite{AFWZ20}; for $d=2$, our result is closely related to the angular synchronization explored in~\cite{ZB18} since SO(2) is isomorphic to U(1).
\end{remark}

{
Simply speaking, Theorem~\ref{thm:od} provides a theoretical guarantee for the spectral estimator in the orthogonal group synchronization under additive Gaussian noise: the distance of the spectral estimator from the planted signal is controlled by the noise strength. As discussed before, the spectral methods are viewed as a relaxation of the equivalent least squares objective function~\eqref{eq:prog}. Therefore, they are unlikely to produce the globally optimal least squares estimator~\eqref{eq:prog}. However, the proximity of the spectral estimator to the ground truth provides allows nonconvex optimization approaches to have a high-quality initialization and enjoy a global convergence to the globally optimal least squares estimator~\cite{CC18,MWCC20,ZB18,L20c,L21a,L21b}.}

{
Now we briefly discuss the optimality of our result. Note that the model $\BA_G = \BG\BG^{\top} + \sigma\BW$ for the $\Od$ synchronization under additive Gaussian is essentially the well-known spiked matrix model or the real deformed Wigner matrices~\cite{BN11,CDF09}.}
Note that in random matrix theory, it has been extensively studied when the top eigenvectors of $\BA_G$ are correlated with the planted signals (low-rank matrix), see e.g.~\cite{BN11,PWBM18,CDF09}. For this finite-rank spiked matrix model, it has been shown in~\cite{CDF09} if the noise level $\sigma$ is above the threshold $\sigma > \sqrt{n/d}$, the leading $d$ eigenvalues of $\BA_G$ fail to exit the limiting semicircle compact support of the GOE (Gaussian orthogonal ensemble) for a sufficiently large $n$. This implies the spectral method (plus rounding) is expected to identify the planted signal only in the regime $\sigma \lesssim \sqrt{n/d}$.
Thus our bound in Theorem~\ref{thm:od} differs from this threshold only by a logarithmic and constant factor. Though not explicitly stated, it is believed that $\sigma =\sqrt{n/d}$ is the threshold above which is  information-theoretically possible to detect the spikes~\cite{JCL20}.

\vskip0.25cm

The theoretical result for permutation group synchronization is summarized as follows. 
\begin{theorem}[Performance for permutation group synchronization]\label{thm:pm}
Suppose the parameter $p$ in the model~\eqref{model:pm} satisfies
\begin{equation}\label{eq:pm_bound}
p > C_0\sqrt{\frac{\log (nd)}{n}}
\end{equation}
for some universal large constant $C_0 > 0$.
Then with high probability, the estimation $\widehat{\BG}_i$ of $\BG_i$ from Algorithm~\ref{algo2} satisfies
\[
\left\| \widehat{\BG}_i  \widehat{\BG}_j^{\top}-\BG_i\BG_j^{\top} \right\| \lesssim \frac{1}{p}  \sqrt{\frac{\log (nd)}{n}}, \quad \forall i\neq j.
\]
In particular, if $\| \widehat{\BG}_i\widehat{\BG}_j^{\top}-\BG_i\BG_j^{\top} \|  < \frac{1}{2}$, then Algorithm~\ref{algo2} recovers the hidden permutation matrices $\BG_i$ exactly.

\end{theorem}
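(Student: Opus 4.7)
The plan is to adapt the spectral-method-plus-leave-one-out framework used for Theorem~\ref{thm:od} to the permutation setting, and then convert the block-wise approximation into exact recovery through an operator-norm separation argument on $\Pi_d$. Let $\BZ\in\RR^{nd\times d}$ be the stacked matrix whose $i$-th block is $\BG_i$, so $\BZ^\top\BZ=n\I_d$ and $\BZ\BZ^\top$ plays the role of ground truth. Decompose $\BA_G=\BM+\BDelta$ with $\BM=\E[\BA_G]$: the off-diagonal $(i,j)$-block of $\BM$ equals $p\BG_i\BG_j^\top+\frac{1-p}{d}\BJ_d$, while the diagonal block is $\I_d$. Since every permutation fixes $\bone_d$, the rank-one direction $(\bone_n\otimes\bone_d)/\sqrt{nd}$ already lies in $\Ran(\BZ)$, so $\BM$ differs from $p\BZ\BZ^\top$ only by a rank-one perturbation inside $\Ran(\BZ)$ and an $O(1)$ diagonal correction. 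Consequently $\BM$ has $d$ leading eigenvalues of order $pn$ with a spectral gap of the same order and leading eigenspace equal (up to negligible error) to $\Ran(\BZ)$. The off-diagonal blocks of $\BDelta$ are independent, mean-zero and entrywise bounded, so a block-wise matrix Bernstein bound yields $\|\BDelta\|\lesssim\sqrt{n\log(nd)}$ with high probability; the hypothesis $p\gtrsim\sqrt{\log(nd)/n}$ guarantees $\|\BDelta\|\ll\lambda_d(\BM)$.

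Davis--Kahan then produces an average (Frobenius) bound $\min_{\BQ\in\Od}\|\BPhi-\BZ\BQ\|_F\lesssim\|\BDelta\BZ\|_F/\lambda_d(\BM)$, but this does not resolve individual blocks. To obtain a uniform block-wise bound, I would, as in the proof of Theorem~\ref{thm:od}, introduce for every $l\in[n]$ a leave-one-out surrogate $\BA_G^{(l)}$ obtained by replacing the $l$-th block row and column of $\BA_G$ by their expectations, and let $\BPhi^{(l)}$ denote its top-$d$ eigenvectors, which is then independent of the $l$-th block row/column of $\BDelta$. Reading off the $l$-th block of $\BA_G\BPhi=\BPhi\BLambda$ gives
\[
\BPhi_l\BLambda=(\BM\BPhi)_l+\sum_{j\neq l}\BDelta_{lj}\BPhi_j,
\]
and substituting $\BPhi_j$ by $\BPhi^{(l)}_j$ reduces the key cross term $\sum_{j\neq l}\BDelta_{lj}\BPhi^{(l)}_j$ to a sum of independent matrices that can be bounded by matrix Bernstein conditional on $\BPhi^{(l)}$. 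Combining this with the perturbation estimate $\|\BPhi\BO-\BPhi^{(l)}\BO^{(l)}\|\lesssim\|\BDelta\|/\lambda_d(\BM)$ and dividing by $\lambda_d(\BA_G)\asymp pn$ produces
\[
\max_{1\le l\le n}\|\BPhi_l-\BG_l\BQ\|\lesssim p^{-1}n^{-1/2}\sqrt{\log(nd)}
\]
for a common alignment $\BQ\in\Od$. The main obstacle is the block structure: in the $\mathbb{Z}_2$-case of~\cite{AFWZ19}, each ``entry'' of $\BPhi$ is a scalar and the cross term is a sum of independent scalars, whereas here each ``entry'' is a $d\times d$ random matrix coupled by the permutation constraint and the mean shift $\frac{1-p}{d}\BJ_d$; keeping the per-block error at $\sqrt{\log(nd)}$ rather than $\sqrt{d\log(nd)}$ requires treating the $d\times d$ blocks as units in the concentration inequality and exploiting the normalization $\BPhi^{(l)\top}\BPhi^{(l)}=n\I_d$.

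For the exact recovery step, the block-wise bound transfers through the Lipschitz property of the polar factor $\PP$ to $\|\PP(\BPhi_1)\PP(\BPhi_i)^\top-\BG_1\BG_i^\top\|\lesssim p^{-1}n^{-1/2}\sqrt{\log(nd)}$, and combining this estimate over pairs yields $\|\widehat{\BG}_i\widehat{\BG}_j^\top-\BG_i\BG_j^\top\|\lesssim p^{-1}n^{-1/2}\sqrt{\log(nd)}$. For any two distinct permutations $\BP,\BP'\in\Pi_d$, the product $\BP^\top\BP'$ is a non-identity permutation whose spectrum consists of roots of unity including at least one value different from $1$, which forces $\|\BP-\BP'\|\geq\sqrt{3}$. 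Whenever the displayed error is below $1/2$, the nearest permutation returned by Hungarian rounding in Algorithm~\ref{algo2} must therefore be the true $\BG_1\BG_i^\top$ exactly, so $\widehat{\BG}_i\widehat{\BG}_j^\top=\BG_i\BG_j^\top$ for all $i\neq j$.
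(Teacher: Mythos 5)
Your high-level plan coincides with the paper's: formulate the problem as a matrix spike model $\BZ\BZ^\top+\BDelta$ (the paper rescales via $\BA_G=p\BA+(1-p)d^{-1}\BJ_{nd}+(1-p)\I_{nd}$ in Lemma~\ref{lem:AG}, which absorbs your rank-one $\BJ_d$ correction), introduce a leave-one-out surrogate $\BA^{(l)}$ whose $l$-th block row/column is deterministic, decompose the cross term $\BDelta_l^\top(\BPhi-\BZ\BQ)$, bound the pieces with matrix Bernstein and Davis--Kahan, and pass to the estimates via the polar-factor Lipschitz bound (Lemma~\ref{lem:lip}). Two things deserve more care.

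First, the exact-recovery step as you wrote it has a genuine gap because it mixes norms. You establish the operator-norm separation $\|\BP-\BP'\|\geq\sqrt{3}$ for distinct permutations, and combine it with the operator-norm bound $\|\BM-\I_d\|<1/2$ for $\BM:=\PP(\BPhi_1)\PP(\BPhi_i)^\top$. That logic would identify the nearest permutation in \emph{operator} norm, but Algorithm~\ref{algo2} performs Hungarian rounding in \emph{Frobenius} norm. From $\|\BM-\I_d\|<1/2$ one only gets $\|\BM-\I_d\|_F<\sqrt{d}/2$, which for moderate $d$ exceeds the Frobenius separation $\|\BP-\BP'\|_F\geq 2$; so the triangle-inequality argument does not close in the metric actually used by the algorithm. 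The fix (and the paper's argument) is entrywise: $\|\BM-\I_d\|<1/2$ implies $|M_{kk}-1|<1/2$ and $|M_{k\ell}|<1/2$ for $k\neq\ell$, so the diagonal entries strictly dominate all off-diagonals and the linear assignment problem $\max_{\BR\in\Pi_d}\lag\BR,\BM\rag$ is uniquely optimized at $\I_d$.

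Second, your perturbation estimate for $\BPhi$ versus $\BPhi^{(l)}$ is written as $\lesssim\|\BDelta\|/\lambda_d(\BM)$; the relevant perturbation is $\BA_G-\BA_G^{(l)}$, which lives only in one block row/column, and the paper's Lemma~\ref{lem:dk2pm} extracts the localized bound $\|(\BDelta-\BDelta^{(l)})\BPhi^{(l)}\|$ (and then bootstraps $\max_j\|\BPhi_j\|=O(1)$ via~\eqref{eq:Phij}); using the global $\|\BDelta\|$ happens to give the right order here but obscures why. Relatedly, the paper isolates the term $T_3=\|\BDelta_i^\top\BZ\|\cdot\|\BQ_i\BS_i-\BQ\|$ as the genuinely new ingredient for $d\geq 2$ (Lemma~\ref{lem:QSpm}); your sketch folds this into a general remark about "treating blocks as units," and a complete write-up would need to control the misalignment $\|\BQ-\BQ_i\BS_i\|$ explicitly, which does not arise in the scalar $\mathbb{Z}_2$ case of~\cite{AFWZ19}.
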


\begin{remark}
The work~\cite{SHSS16} provides a block-wise bound for permutation group synchronization on general networks in which $p > C_0 n^{-\frac{1}{2}}\log^3 (n)$ is needed for the exact recovery of all the permutation matrices with high probability.~\cite{CGH14} shows that the SDP relaxation can recover the underlying hidden permutation matrices with high probability if $p > C_0 n^{-\frac{1}{2}}\log^2(nd)$. The bound~\eqref{eq:pm_bound} matches the state-of-the-art performance bound in~\cite{BGHH18} which considers the general simultaneous mapping and clustering problem. However, as pointed out earlier, our technique is quite different from~\cite{BGHH18}. {Note that the information theoretic limit for the exact recovery in $\Pi_d$ synchronization is discussed in~\cite[Corollary 1]{CSG16}: no method whatsoever is able to recover the ground truth if $p < O(1/\sqrt{n})$. Therefore, our bound differs from the information-theoretic limit by a logarithmic factor.  }
\end{remark}

Another synchronization model which is highly relevant to the two aforementioned models is the $\Od$ group synchronization with uniform multiplicative noise~\cite{WS13}:
\[
\BG_{ij} = 
\begin{cases}
\BG_i\BG_j^{\top}, & \text{with probability }p, \\
\BR_{ij}, & \text{with probability }1-p, \\
\end{cases}
\]
where $\BR_{ij}$ is sampled from the uniform Haar distribution over $\Od$\footnote{Simply speaking, Haar distribution on $\Od$ is the unique invariant probability measure on the compact group $\Od$.}. Though it is not analyzed in our manuscript, the proof technique for the permutation group synchronization under uniform corruption could be directly modified to tackle this $\Od$ synchronization under uniform multiplicative corruption.

\subsection{The sketch of proof: leave-one-out technique}\label{ss:loo}
We provide a proof sketch for Theorem~\ref{thm:od} and~\ref{thm:pm}, and will proceed to give more technical details in Section~\ref{s:proof}. The main idea follows from the~\emph{leave-one-out} technique employed in~\cite{AFWZ20} to study $\mathbb{Z}_2$-synchronization and community detection under the stochastic block model. The major difference of our setting here is the blockwise independence of the noise matrix as well as the multi-dimensionality of the eigenspace, which requires additional technical treatments.

With a bit of calculation, both~\eqref{model:od} and~\eqref{model:pm} can be formulated under the framework of the spiked matrix model. Without loss of generality, we assume each $\BG_i$ is an identity matrix $\I_d$ and it suffices to consider
\begin{equation}\label{model:ms}
\BA := \BZ\BZ^{\top} + \BDelta
\end{equation}
where $\BZ^{\top} = [\I_d,\cdots,\I_d]\in\RR^{d\times nd}$. Here $\BDelta$ is the random noise matrix. More precisely, 
\begin{itemize}
\item For model~\eqref{model:od}, the noise matrix $\BDelta$ is
\[
\BDelta = \sigma \BW, 
\]
where $\BW\in\RR^{nd\times nd}$ is a symmetric Gaussian random matrix.

\item For model~\eqref{model:pm}, the corruption matrix $\BDelta$ is
\begin{equation}\label{def:Deltapm}
\BDelta_{ij} = 
\begin{cases}
-p^{-1}(1-p)d^{-1}\BJ_d, & \text{ if }i = j, \\
p^{-1}\left( (X_{ij} - p) (\I_d - d^{-1}\BJ_d) + (1-X_{ij})(\BP_{ij} - d^{-1}\BJ_d)\right),  & \text{ if } i\neq j.
\end{cases}
\end{equation}
where $\BP_{ij}$ is a random permutation matrix drawn uniformly from the set of all $d\times d$ permutation matrices {and $\BJ_d$ is a $d\times d$ matrix whose entries are all equal to 1}. In fact, the top $d$ eigenvectors of $\BA$ and $\BA_G = [\BG_{ij}]_{1\leq i,j\leq n}$ are the same provided that the noise $\BDelta$ is small and $\BG_i = \I_d$. We will justify this fact in Lemma~\ref{lem:AG} in Section~\ref{s:proofpm}.

\end{itemize}

Now we briefly introduce the main idea of the leave-one-out technique in obtaining a block-wise error bound for the top $d$ eigenvectors $\BPhi$ of $\BA$. Assume $(\BPhi, \BLambda)$ is the top $d$ leading eigen-pairs of $\BA$, i.e., 
\[
\BA\BPhi = \BPhi\BLambda, \quad \BPhi^{\top}\BPhi = n\I_d, \quad \BLambda = \diag(\lambda_1, \cdots, \lambda_d).
\]
In other words, it holds
$\BPhi = \BA\BPhi\BLambda^{-1}.$ 
The idea of estimating each $\BPhi_i$ relies on choosing a suitable~\emph{surrogate} which is easy to approximate and also close to $\BPhi_i$. One commonly-used choice is to use one-step fixed point iteration, which is inspired by~\cite{AFWZ20}.
By definition, $\BPhi\in\RR^{nd\times d}$ is the fixed point of the following map:
\begin{equation}\label{eq:fix}
f(\BX) := \BA\BX\BLambda^{-1}
\end{equation}
where $\BLambda\in\RR^{d\times d}$ consists of the top $d$ eigenvectors of $\BA.$  

Note that the recovered orthogonal group $\{\BG_i\}_{i=1}^n$ is unique modulo a global rotation. Therefore, we initialize this fixed point map~\eqref{eq:fix} by
 choosing $\BX = \BZ\BQ$ where $\BQ$ minimizes the distance $d_F(\BPhi,\BZ)$ between $\BPhi$ and $\BZ$ is minimized, i.e., 
 \begin{equation}
d_F(\BPhi,\BZ) : =  \min_{\BQ\in \Od}\|\BPhi - \BZ\BQ\|_F.
 \end{equation}
We hope $f(\BZ\BQ)$ is close to $\BPhi$ \emph{uniformly} for each $d\times d$ block. Let's perform a preliminary analysis for the approximation error bound of $\BPhi_i$ with the $i$th block of $f(\BZ\BQ) = \BA\BZ\BQ\BLambda^{-1}.$ Let $\BDelta_i$ be the $i$th block column of $\BDelta$, and then it holds
\begin{align}
\| [\BPhi - \BA\BZ\BQ\BLambda^{-1}]_i \| & =\| [ \BA(\BPhi- \BZ\BQ)\BLambda^{-1}]_i \| \nonumber \\
& = \| (\BZ + \BDelta_i)^{\top} (\BPhi- \BZ\BQ)\BLambda^{-1} \| \nonumber \\
& \leq \|\BLambda^{-1}\|  \cdot \left(  \| \BZ^{\top}(\BPhi- \BZ\BQ) \| + \|  \BDelta_i^{\top} (\BPhi- \BZ\BQ) \| \right). \label{eq:keyest}
\end{align}

The Davis-Kahan theorem~\cite{DK70} provides a tight bound of the first term and the goal is to estimate $\|  \BDelta_i^{\top} (\BPhi- \BZ\BQ) \|$. Note that $\BDelta_{i}$ and $\BPhi-\BZ\BQ$ are not statistically independent. Therefore, despite that each $\BDelta_{ij}$ is either a Gaussian random matrix or a bounded centered random permutation matrix, we cannot immediately apply concentration inequality to obtain a tight bound of $\|  \BDelta_i^{\top} (\BPhi- \BZ\BQ) \|$. The remedy is to use the recently popular~\emph{leave-one-out} trick.

\vskip0.25cm

The idea is to replace $\BPhi$ by $\BPhi^{(i)}$ which is the top $d$ eigenvectors of the following auxiliary matrix $\BA^{(i)} = \BZ\BZ^{\top} + \BDelta^{(i)}$:
\begin{equation}\label{def:Deltai}
\BA^{(i)}_{k\ell} = 
\begin{cases}
\BA_{k\ell}, & \text{ if } k \neq i \text{ and } \ell \neq i, \\
\I_d, & \text{ if } k = i \text{ or } \ell= i,
\end{cases} \qquad 
\BDelta^{(i)} = 
\begin{cases}
\BDelta_{k\ell}, & \text{ if } k \neq i \text{ and } \ell \neq i, \\
0, & \text{ if } k = i \text{ or } \ell= i.
\end{cases}
\end{equation}
In other words, $\BA$ and $\BA^{(i)}$ only differ by the $i$th block column and row of $\BDelta.$ Because of this minor difference, the corresponding eigenspace $\BPhi$ and $\BPhi^{(i)}$ are very close. More importantly, $\BPhi^{(i)}$ is independent of $\BDelta_i$ since $\BPhi^{(i)}$ only depends on $\BDelta^{(i)}$ which excludes $\BDelta_i$. This important fact allows one to apply the concentration inequality to get a satisfactory bound of $\|\BDelta_i^{\top} (\BPhi - \BZ\BQ) \|$ which will be discussed in more details in Section~\ref{s:proof}.

\section{Numerics}\label{s:numerics}

In this section, we will provide numerical evidence to show that the bound in the main theorems are near-optimal.

\subsection{Orthogonal group synchronization}
We first investigate the performance of Algorithm 1 under various noise levels.
Consider $\BA_G = \BG\BG^{\top} + \sigma\BW$ where $\BW$ is a symmetric $nd\times nd$ Gaussian random matrix where $n=1000$ and $d=2,3,5$ and 10. Here we introduce another parameter  $\kappa$ such that $\sigma = \kappa\sqrt{n/d}$ because
Theorem~\ref{thm:od} implies
\begin{equation}\label{eq:num1}
\sigma = \kappa \sqrt{\frac{n}{d}}, \quad \kappa \leq 
\frac{c_0}{1 + \sqrt{d^{-1}\log n}}
\end{equation}
would provide a non-trivial bound for some constant $c_0.$
We let $\kappa$ vary from $0.05$ to 0.5 since we know that the spectral method is expected to fail for $\kappa > 1$ and to succeed for $\kappa <1$ based on the random matrix theory.
Here once we obtain $\widehat{\BG}_i$, we calculate the maximum blockwise deviation of $\widehat{\BG}_i$ from $\BG_i$ by using 
\[
\max_{ i\neq 1}\|\widehat{\BG}_1\widehat{\BG}_i^{\top} - \BG_1\BG_i^{\top}\|.
\]
For each $\kappa$, we run 25 experiments and obtain the boxplot of error, as is shown in Figure~\ref{fig:od_strong}. {The bottom/top edges of each blue box stand for the 25th and 75th percentile of the estimation error, the central red mark indicates the median error, and each red dot is an outlier. We can see that the median block-wise error grows approximately linearly with respect to $\kappa$ (equivalently, $\sigma$) if $\kappa$ is smaller than some threshold for each fixed $d$. Once $\kappa$ exceeds that threshold, the algorithm cannot provide a nontrivial estimation of $\BG_i$ as the error becomes 2. In addition, it is also interesting to see that the range of $\kappa$ for a nontrivial error bound (i.e., the threshold) gets larger as $d$ increases. This may be explained by the inequality above~\eqref{eq:num1}: as $d$ gets larger, the denominator $1+\sqrt{d^{-1}\log n}$ in~\eqref{eq:num1} becomes smaller and thus allows a larger range of $\kappa$ for a non-trivial error bound.}

\begin{figure}[h!]
\centering
\begin{minipage}{0.48\textwidth}
\includegraphics[width=80mm]{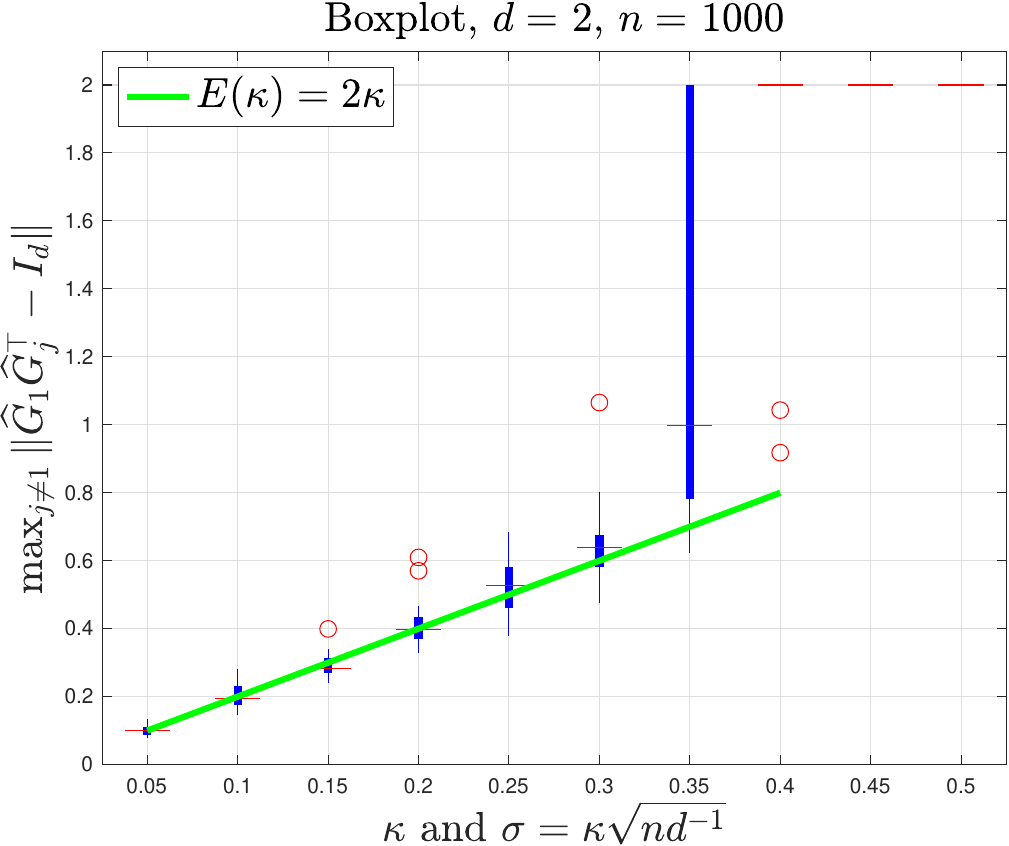}
\end{minipage}
\hfill
\begin{minipage}{0.48\textwidth}
\includegraphics[width=80mm]{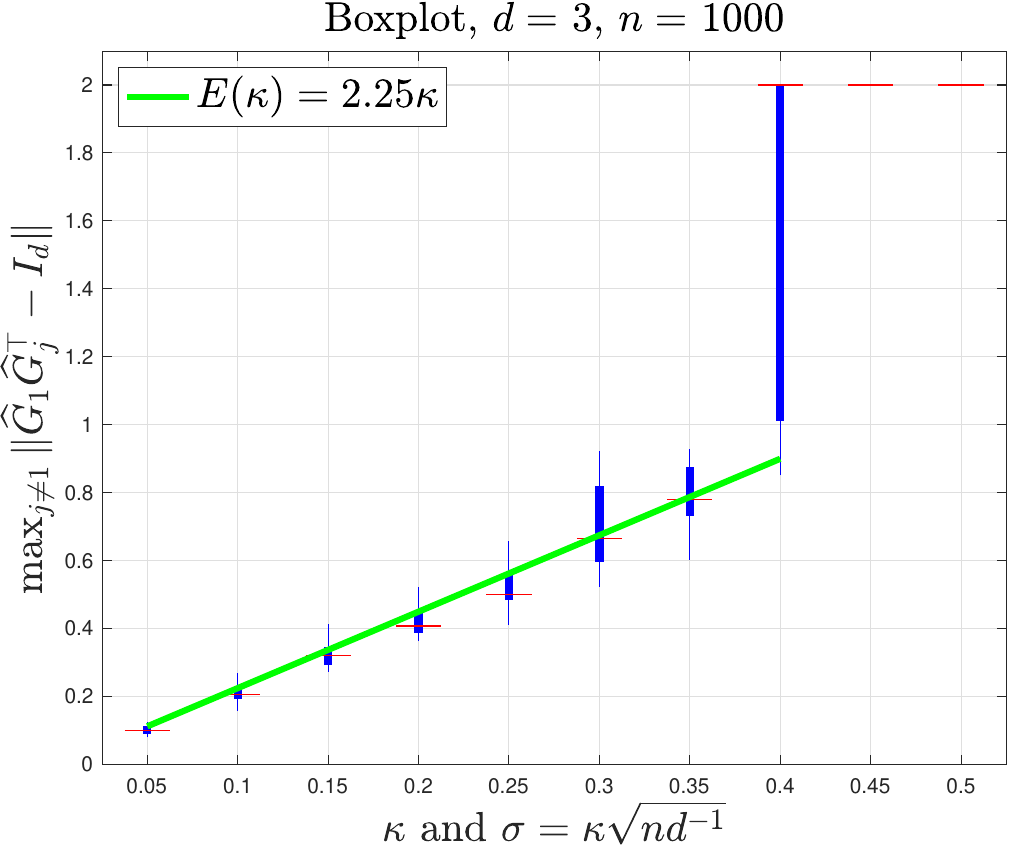}
\end{minipage}
\vskip0.3cm
\vfill
\begin{minipage}{0.48\textwidth}
\includegraphics[width=80mm]{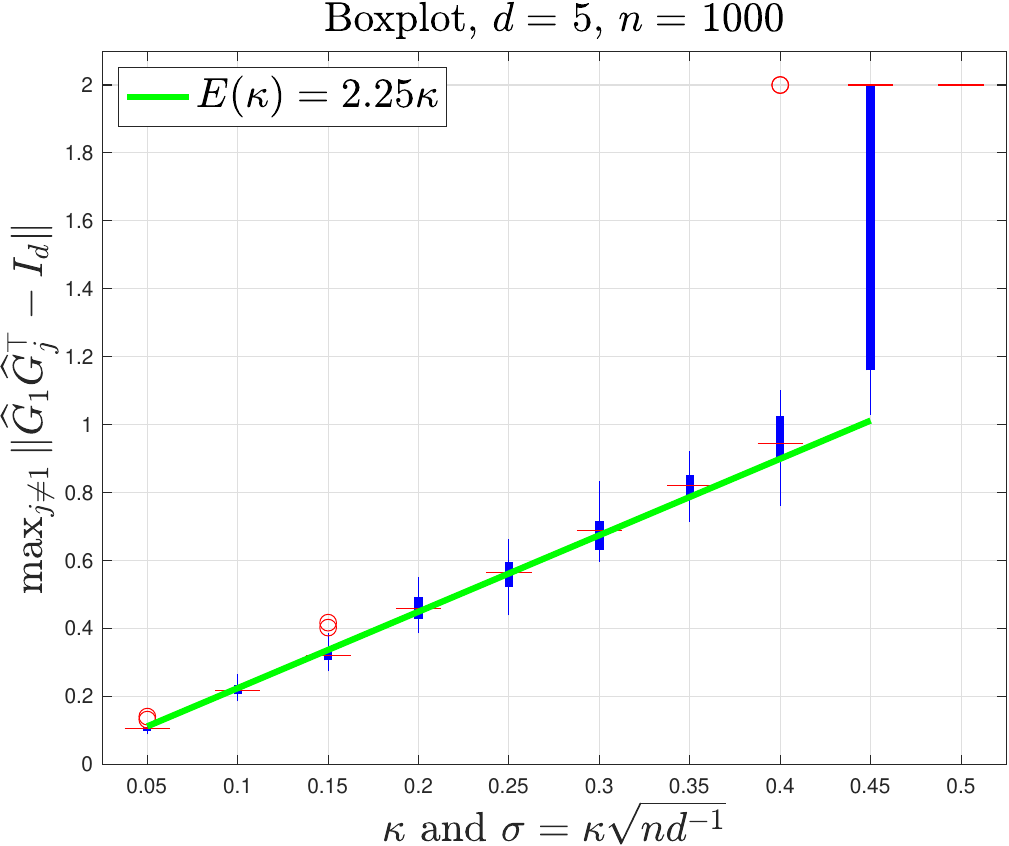}
\end{minipage}
\hfill
\begin{minipage}{0.48\textwidth}
\includegraphics[width=80mm]{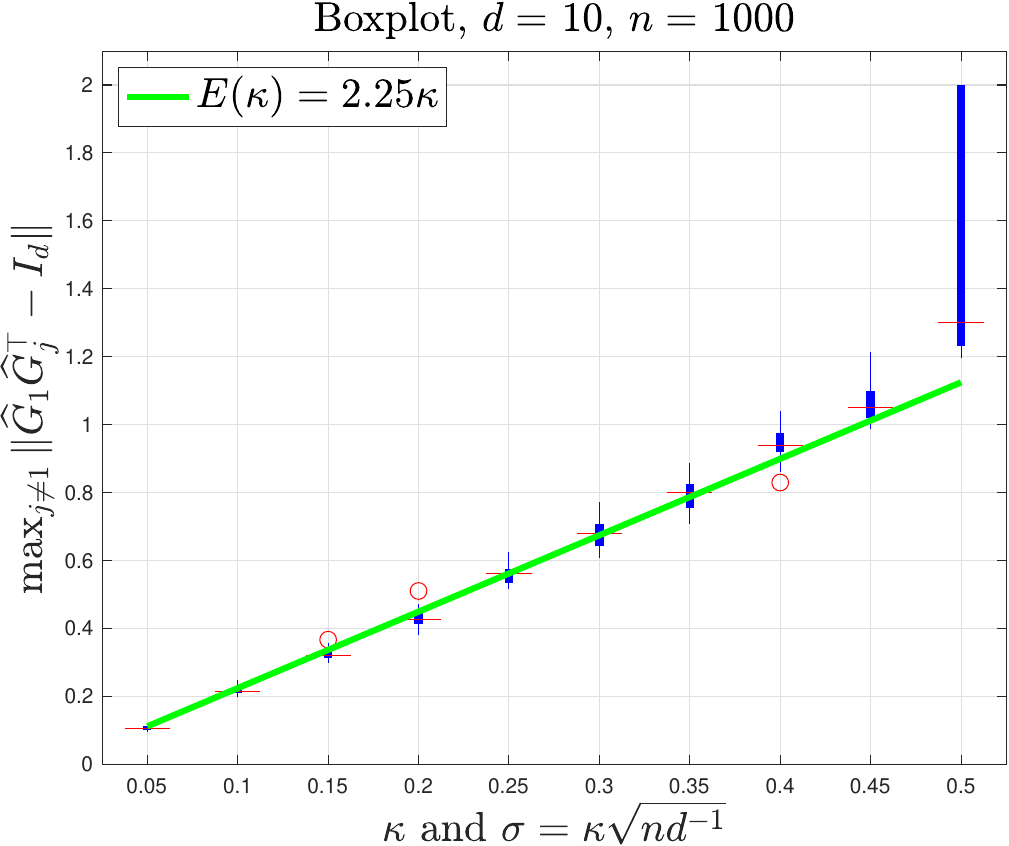}
\end{minipage}
\caption{Blockwise estimation error $E(\kappa)$ v.s. noise level $\kappa$. Each red circle denotes an outlier, the central red mark indicates the median, the bottom/top edges of the blue box indicate the 25th and 75th percentiles respectively. The green line describes the linear relation between the median blockwise estimation error and $\kappa.$}
\label{fig:od_strong}
\end{figure}

\subsection{Permutation group synchronization}
We consider the permutation group synchronization under random corruption. Without loss of generality, we assume $\BG_i = \I_d$ and
$\BG_{ij} = X_{ij}\I_d + (1 - X_{ij})\BP_{ij}$ as introduced in~\eqref{model:pm2}. Then we compute the top $d$ eigenvectors and estimate $\widehat{\BG}_i$ by solving
\[
\widehat{\BG}_i = \argmax_{\BR\in \Pi_d} \left\lag \BR, \PP(\BPhi_1)\PP(\BPhi_i)^{\top} \right\rag.
\]
The goal is to study how the performance of Algorithm 2 depends on the parameters $(n,p)$. Note that our theorem indicates that the algorithm works if $p$ is larger than $\sqrt{n^{-1}\log (nd)}$. Therefore, we introduce the parameter $\kappa$ so that
\[
p = \kappa \sqrt{\frac{\log(nd)}{n}}.
\]
We let $n$ vary from 50 to 1000, and $\kappa$ between $0.1$ and 2.
We use two ways of measuring the recovery performance. 

\vskip0.25cm

{\bf Exact recovery:} We compute the total number of instances in which $\widehat{\BG}_i$ equals $\BG_i$ for all $1\leq i\leq n$. For each pair of $(n,\kappa)$, we run 25 experiments and calculate the proportion of successful instances. Figure~\ref{fig:pm_strong} implies that for $\kappa > 1$, the exact recovery holds with high probability for both $d=5$ and 10. This confirmed the near-optimality of our performance bound in Theorem~\ref{thm:pm}. 

\begin{figure}[h!]
\centering
\begin{minipage}{0.48\textwidth}
\includegraphics[width=80mm]{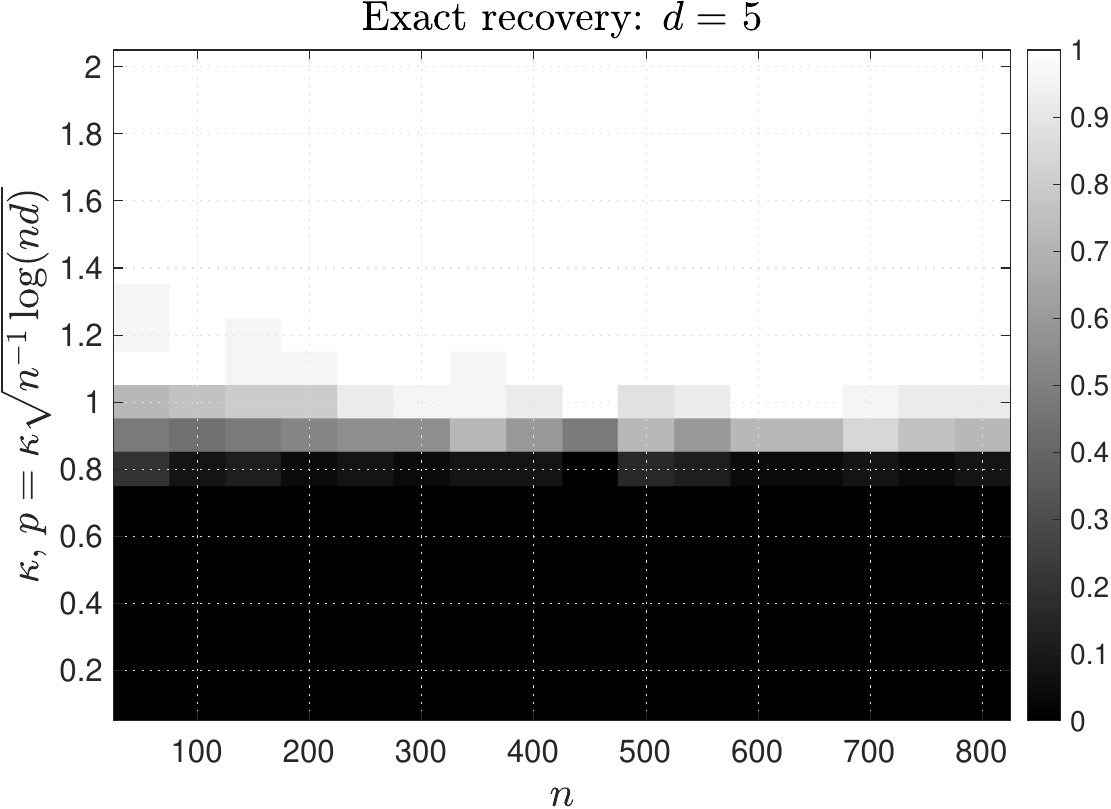}
\end{minipage}
\hfill
\begin{minipage}{0.48\textwidth}
\includegraphics[width=80mm]{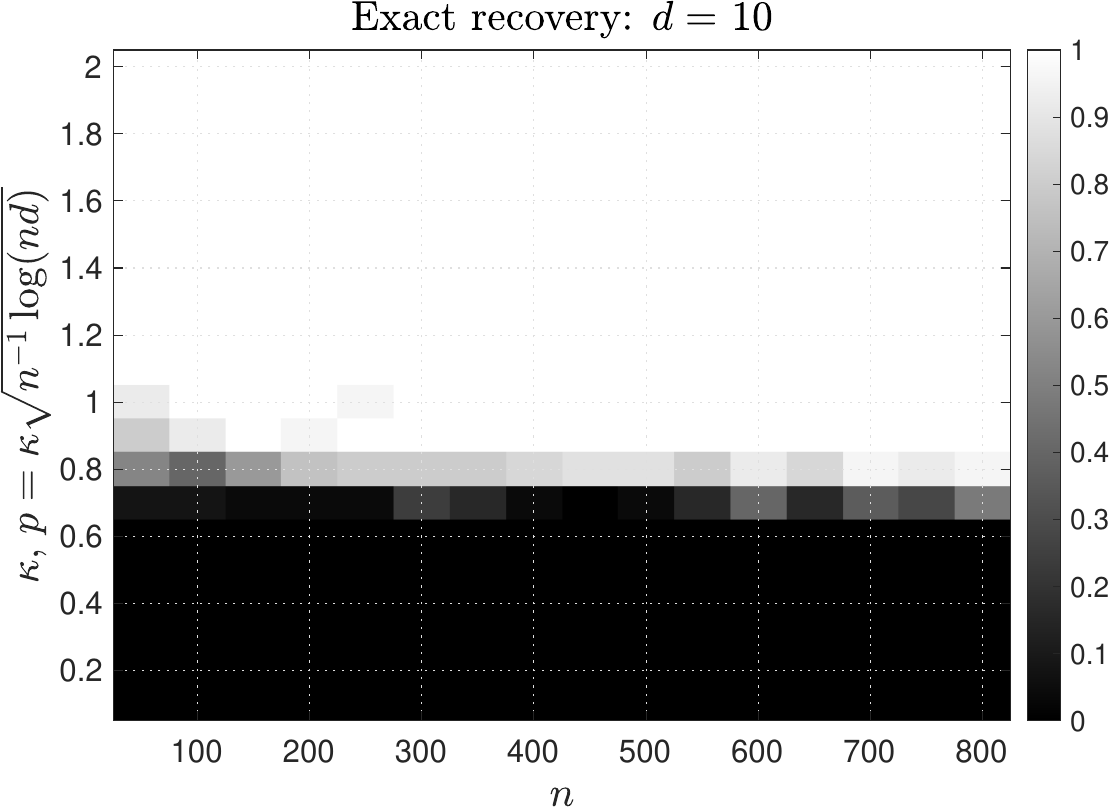}
\end{minipage}
\caption{{Phase transition for the exact recovery of the permutation group synchronization under uniform corruption model. Black region: failure; white region: success. The boundary for the phase transition approximately occurs at $p = \sqrt{n^{-1}\log(nd)}$.}}
\label{fig:pm_strong}
\end{figure}

\vskip0.25cm

{\bf Weak recovery:} Instead of looking at the exact recovery of all the permutation matrices, we compute the~\emph{average alignment} to see if most permutation matrices are recovered even if the assumption of Theorem~\ref{thm:pm} is violated, i.e., adding slightly  more corruption to the observed data.  The average alignment associated with $\{\widehat{\BG}_i\}$ is defined as
\[
\frac{1}{nd}\sum_{i=1}^n \lag \widehat{\BG}_i, \I_d\rag
\]
which is a number between 0 and 1. We simulate 25 instances and the compute the mean of the average alignment. 
The phase transition plot is provided in Figure~\ref{fig:pm_weak}, which shows that if $\kappa> 0.6$, the recovered permutation matrix is highly aligned with the planted signal. We leave the characterization of the critical threshold for the exact/weak recovery to the future work.

\begin{figure}[h!]
\centering
\begin{minipage}{0.48\textwidth}
\includegraphics[width=80mm]{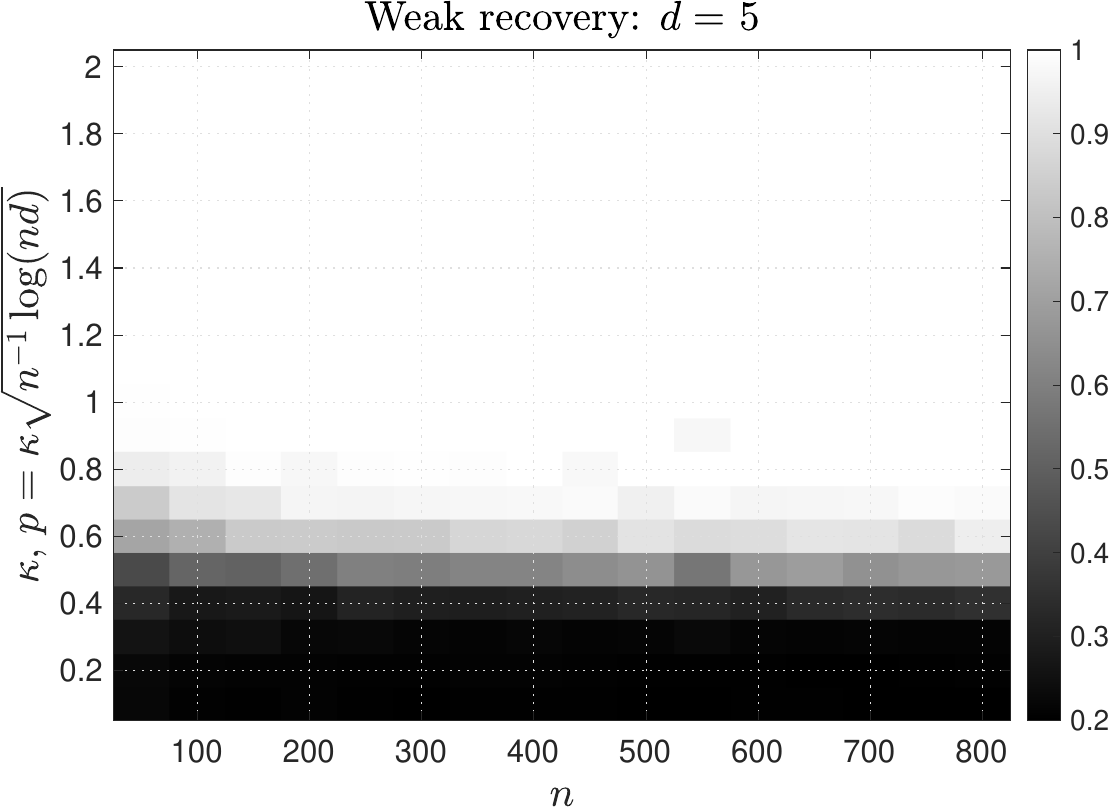}
\end{minipage}
\hfill
\begin{minipage}{0.48\textwidth}
\includegraphics[width=80mm]{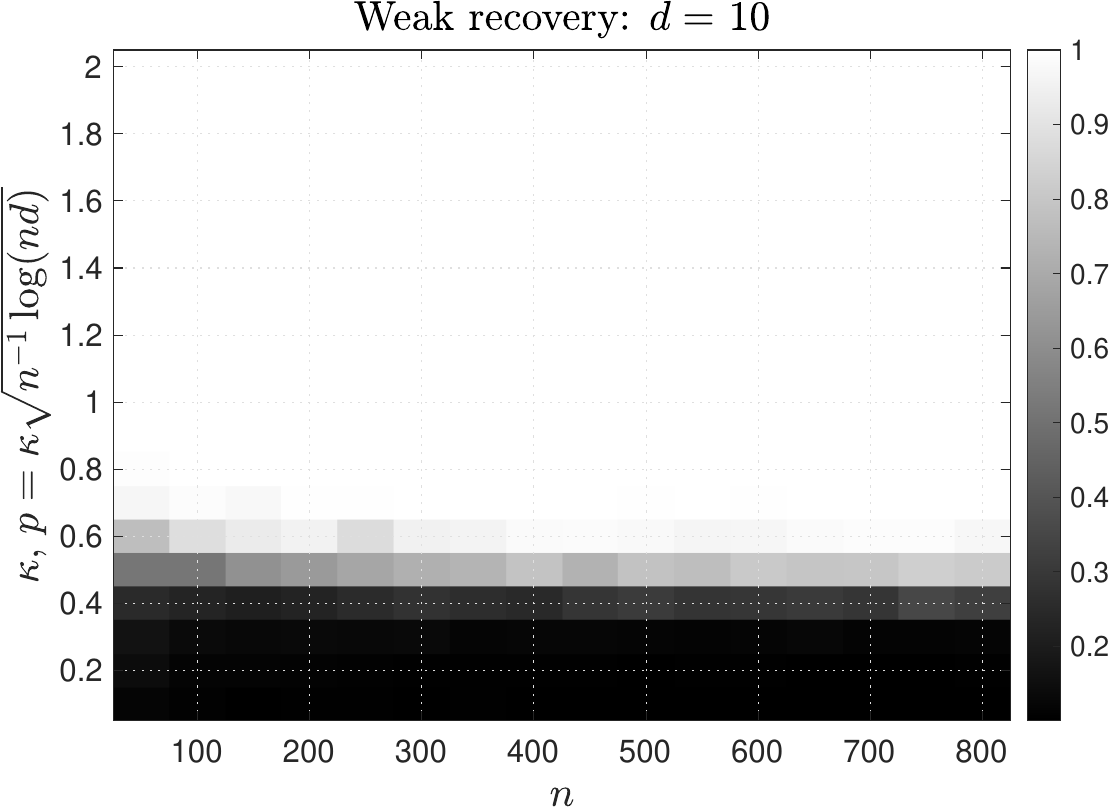}
\end{minipage}
\caption{Phase transition for weak recovery}
\label{fig:pm_weak}
\end{figure}

\section{Proofs}\label{s:proof}
This section is devoted to the proof of Theorem~\ref{thm:od} and~\ref{thm:pm}.

\subsection{Proof of Theorem~\ref{thm:od} and~\ref{thm:pm}}

{As discussed in Section~\ref{ss:loo}, the blockwise estimation of $\BPhi$ is given by~\eqref{eq:keyest}:
\[
\| [\BPhi - \BA\BZ\BQ\BLambda^{-1}]_i \|  \leq \|\BLambda^{-1}\|  \cdot \left(  \| \BZ^{\top}(\BPhi- \BZ\BQ) \| + \|  \BDelta_i^{\top} (\BPhi- \BZ\BQ) \| \right)
\]
where $(\BPhi, \BLambda)$ are the top $d$ eigen-pairs of $\BA$ satisfying $\BA\BPhi = \BPhi\BLambda$ and $\BA = \BZ\BZ^{\top} + \BDelta$ is given in~\eqref{model:ms}. We have shown that the key is to control $\BDelta_i^{\top} (\BPhi - \BZ\BQ)$ by using the leave-one-out technique and it is much easier to bound $\|\BZ^{\top}(\BPhi - \BZ\BQ)\|$ as well as the spectra of $\BLambda.$} 

{Now we will proceed to estimate $\|\BDelta_i^{\top} (\BPhi - \BZ\BQ)\|$ by approximating it with $\|\BDelta_i^{\top} (\BPhi^{(i)} - \BZ\BQ)\|$ where $\BPhi^{(i)}$ is the top $d$ eigenvectors from the auxiliary matrix $\BA^{(i)}$ defined in~\eqref{def:Deltai}. In addition, we also need to bound the distance between $\BPhi$ and $\BPhi^{(i)}$.}
To do that, we formally define an operator which will be frequently used in the discussion. One can view it as a generalization of taking the ``phase" of a matrix which is also known as the matrix sign function~\cite{AFWZ20,G11}.

\begin{definition}
For any matrix $\BX\in\RR^{m\times r}$ with $m\geq r$, we define
\[
\PP(\BX) = \BU\BV^{\top}
\]
where $\BX = \BU\BSigma\BV^{\top}$ is the SVD (singular value decomposition) of $\BX$ with $\BU^{\top}\BU = \BV^{\top}\BV = \I_r$.
In particular, if $\rank(\BX) = r$, i.e., $\BSigma$ is invertible, then
\[
\PP(\BX) = \BX(\BX^{\top}\BX)^{-\frac{1}{2}}.
\]
\end{definition}
{Note that if the input matrix $\BX$ is not full rank, i.e., $\rank(\BX)< r$, then $\PP(\cdot)$ is not unique because the SVD of $\BX$ is not unique. Therefore, it is better to treat $\PP(\BX)$ as a set-valued operator which outputs one representative from the set $\{\BU\BV^{\top}: \BX=\BU\BSigma\BV^{\top}\text{ is the SVD of } \BX\}$.}

The main purpose of introducing $\PP(\cdot)$ is to correctly define the distance among $\BPhi$, $\BPhi^{(i)}$, and $\BZ.$ From now on, we let 
\begin{align}
\begin{split}\label{def:QS}
\BQ & :=  \argmin_{\BR\in \Od} \| \BPhi - \BZ\BR \|_F,  \qquad  \BQ = \PP(\BZ^{\top}\BPhi), \\
{\BQ^{(i)}} &: = \argmin_{\BR\in \Od} \| \BPhi^{(i)} - \BZ\BR \|_F, \quad  \BQ^{(i)} = \PP(\BZ^{\top}\BPhi^{(i)}), \\
{\BS^{(i)} }& :=  \argmin_{\BR\in \Od} \| \BPhi - \BPhi^{(i)}\BR \|_F,  \quad \BS^{(i)} = \PP((\BPhi^{(i)})^{\top}\BPhi),
\end{split}
\end{align}
where the explicit forms of $\BQ$, $\BQ^{(i)}$, and $\BS^{(i)}$ are easy to derive.

Now we decompose $\|\BDelta_i^{\top} (\BPhi - \BZ\BQ) \|$ into three terms and find an upper bound for each of them:
\begin{align}
\| \BDelta_i^{\top} (\BPhi - \BZ\BQ)\|  
& \leq 
\|\BDelta_i^{\top} (\BPhi - \BPhi^{(i)}\BS^{(i)} + \BPhi^{(i)}\BS^{(i)} -   \BZ\BQ^{(i)}\BS^{(i)} + \BZ(\BQ^{(i)}\BS^{(i)} - \BQ))\| \nonumber \\
& \leq \|\BDelta_i\|\cdot \|\BPhi - \BPhi^{(i)}\BS^{(i)} \| + \|\BDelta_i^{\top}(  \BPhi^{(i)}-   \BZ\BQ^{(i)})\| + \|\BDelta_i^{\top} \BZ\| \cdot \|\BQ^{(i)} \BS^{(i)} - \BQ\| \nonumber  \\
& =: T_1 + T_2 + T_3. \label{eq:T}
\end{align}
Now it suffices to control each $T_k,1\leq k\leq 3.$

{One may wonder what the main difference of the blockwise analysis of the eigenvectors from the aforementioned works~\cite{AFWZ20,ZB18} is.}  The difference comes from the appearance of $T_3$, which does \emph{not} show up for $d = 1$. Therefore, the estimation of $\|\BQ - \BQ^{(i)}\BS^{(i)}\|$ requires additional treatments for $d\geq 2.$

{Now we present the final estimation of $\|\BPhi_i - [\BA\BZ\BQ]_i\BLambda^{-1}\|$ in~\eqref{eq:keyest}, which is used to establish Theorem~\ref{thm:od} and~\ref{thm:pm}. Note that the proof of Theorem~\ref{thm:keys} naturally includes the estimation of~\eqref{eq:T}. We leave the estimation of~\eqref{eq:T} in Section~\ref{s:proofod} and~\ref{s:proofpm}. } 

\begin{theorem}[{Key theorem}]\label{thm:keys}
Under the assumption of Theorem~\ref{thm:od} and~\ref{thm:pm}, i.e.,
\begin{itemize}
\item For~\eqref{model:od}, 
\begin{equation}\label{def:etaod}
\eta := \sigma n^{-\frac{1}{2}}(\sqrt{d} + \sqrt{\log n}) , \quad \sigma< C_0^{-1} \sqrt{n}(\sqrt{d} + \sqrt{\log n})^{-1}.
\end{equation}

\item For~\eqref{model:pm}, 
\begin{equation}\label{def:etapm}
\eta := p^{-1}n^{-\frac{1}{2}}\sqrt{\log(nd)}, \quad p > C_0\sqrt{n^{-1}\log(nd)}.
\end{equation}
\end{itemize}
Here $C_0 > 0$ is an absolute large constant.
Then with at least $1 - O(n^{-1}d^{-1})$, it holds 
\[
\|\BPhi_i - [\BA\BZ\BQ]_i\BLambda^{-1}\| = \|\BPhi_i - (n\I_d + \BDelta_i^{\top}\BZ)\BQ\BLambda^{-1}\| \lesssim \eta \max_{1\leq i\leq n}\|\BPhi_i\|,
\]
uniformly for all $1\leq i\leq n$. Moreover, we have
\[
| \sigma_j(\BPhi_i) - 1| \lesssim \eta,\qquad \forall 1\leq i\leq n,~1\leq j\leq d.
\]

\end{theorem}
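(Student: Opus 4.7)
The plan is to exploit the fixed-point identity $\BPhi = \BA\BPhi\BLambda^{-1}$ at the block level. Taking the $i$th block of both sides and subtracting the surrogate $[\BA\BZ\BQ]_i\BLambda^{-1}$ yields
\begin{equation*}
\BPhi_i - [\BA\BZ\BQ]_i\BLambda^{-1} = \bigl(\BZ^{\top}(\BPhi-\BZ\BQ) + \BDelta_i^{\top}(\BPhi-\BZ\BQ)\bigr)\BLambda^{-1},
\end{equation*}
using that the $i$th block of $\BZ$ is $\I_d$ and that the $i$th block row of $\BDelta$ equals $\BDelta_i^{\top}$ by symmetry. The first term does not depend on $i$ and can be handled by a Davis--Kahan / Neumann-type expansion giving a quadratic bound $\|\BZ^{\top}(\BPhi-\BZ\BQ)\|\lesssim\|\BDelta\|^2/\lambda_d(\BA)$. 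The uniform-in-$i$ content therefore reduces to bounding $\|\BDelta_i^{\top}(\BPhi-\BZ\BQ)\|$, which I attack via the leave-one-out decomposition $T_1+T_2+T_3$ of~\eqref{eq:T}.

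Before that decomposition I would first record the relevant spectral estimates. Matrix Bernstein / Gaussian concentration yields $\|\BDelta\|\lesssim \sigma\sqrt{nd}$ for~\eqref{model:od} and $\|\BDelta\|\lesssim p^{-1}\sqrt{nd}$ (up to logarithmic factors) for~\eqref{model:pm}; the standing hypothesis then places us in the regime $\|\BDelta\|\ll n$, so $\lambda_d(\BA)\asymp n$ and $\|\BLambda^{-1}\|\asymp n^{-1}$ by Weyl. Single-block concentration gives $\|\BDelta_i\|\lesssim \sigma(\sqrt{d}+\sqrt{\log n})$ in the Gaussian case and $p^{-1}\sqrt{\log(nd)}$ in the permutation case, while a matrix Hoeffding/Bernstein bound on the sum $\BDelta_i^{\top}\BZ=\sum_{j}\BDelta_{ij}$ yields $\|\BDelta_i^{\top}\BZ\|$ of order $n\eta$.

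For the three pieces, the treatment goes as follows. For $T_1$, Davis--Kahan applied to $(\BA,\BA^{(i)})$, which differ only on block row/column $i$ with $\|\BA-\BA^{(i)}\|\lesssim\|\BDelta_i\|$, produces $\|\BPhi-\BPhi^{(i)}\BS_i\|\lesssim\|\BDelta_i\|\sqrt{n}/\lambda_d(\BA)$, so after multiplication by $\|\BDelta_i\|$ and division by $\lambda_d$ the contribution of $T_1$ is of order $\|\BDelta_i\|^2/n^{3/2}$, which is strictly smaller than $\eta$ under the noise hypothesis. For $T_2$, I would condition on $\BDelta^{(i)}$, making $\BPhi^{(i)}$ and $\BQ_i$ deterministic while $\BDelta_i$ remains a row of independent block entries, then apply a matrix Gaussian/Hoeffding concentration to $\sum_{j\ne i}\BDelta_{ij}\bigl([\BPhi^{(i)}]_j-\BQ_i\bigr)$; combined with a Davis--Kahan bound on $\|\BPhi^{(i)}-\BZ\BQ_i\|_F$ this again keeps $T_2$ at the correct order. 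For $T_3$, the genuinely new ingredient for $d\ge 2$, I would use $\BQ=\PP(\BZ^{\top}\BPhi)$, $\BQ_i=\PP(\BZ^{\top}\BPhi^{(i)})$, $\BS_i=\PP((\BPhi^{(i)})^{\top}\BPhi)$, the identity $\PP(\BZ^{\top}\BPhi^{(i)}\BS_i)=\BQ_i\BS_i$, and the local Lipschitz estimate $\|\PP(\BX)-\PP(\BY)\|\lesssim\|\BX-\BY\|/\sigma_d(\BX)$ valid for full-rank $\BX,\BY$; exploiting $\sigma_d(\BZ^{\top}\BPhi)\asymp n$ converts $\|\BQ_i\BS_i-\BQ\|$ into $O(\|\BDelta_i\|/n)$, and multiplying by $\|\BDelta_i^{\top}\BZ\|\lesssim n\eta$ keeps $T_3$ of the same (small) order as $T_1$.

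Assembling all three pieces, dividing by $\lambda_d(\BA)\asymp n$, and taking a union bound over $i\in\{1,\ldots,n\}$ produces the central inequality $\|\BPhi_i-(n\I_d+\BDelta_i^{\top}\BZ)\BQ\BLambda^{-1}\|\lesssim\eta\max_k\|\BPhi_k\|$ on the desired $1-O(n^{-1}d^{-1})$ event. The singular-value conclusion $|\sigma_j(\BPhi_i)-1|\lesssim\eta$ then follows from Weyl's inequality: the surrogate $(n\I_d+\BDelta_i^{\top}\BZ)\BQ\BLambda^{-1}$ differs from the orthogonal matrix $\BQ$ in operator norm by at most $O(\eta)$, and the block-wise bound transfers this discrepancy to $\BPhi_i$. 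The step I anticipate to be most delicate is the circular dependence hidden in $T_3$: the Lipschitz expansion of $\PP$ demands lower bounds on $\sigma_d(\BZ^{\top}\BPhi^{(i)})$ and $\sigma_d((\BPhi^{(i)})^{\top}\BPhi)$ of order $n$, and the right-hand side $\max_k\|\BPhi_k\|$ must be controlled a priori. I would resolve both through a bootstrap: start from a crude Frobenius-level bound via Davis--Kahan to obtain preliminary estimates $\max_k\|\BPhi_k\|\lesssim 1$ and the required full-rank lower bounds on the leave-one-out inner products, then rerun the three-term argument to obtain the sharp $\eta$-scale bound and close the loop.
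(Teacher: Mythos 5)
Your overall architecture (fixed-point identity, the $T_1+T_2+T_3$ leave-one-out split, Lipschitz estimate for $\PP$, bootstrap to control $\max_k\|\BPhi_k\|$) mirrors the paper, but the treatment of $T_1$ contains a genuine gap, and it is precisely the step where the leave-one-out structure must be exploited rather than just invoked.

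First, a notational slip that propagates: $\BDelta_i$ in the paper is the $i$th \emph{block column}, an $nd\times d$ matrix, so for~\eqref{model:od} one has $\|\BDelta_i\|\lesssim\sigma\sqrt{nd}$, not $\sigma(\sqrt d+\sqrt{\log n})$. The latter is the bound for a single $d\times d$ block $\BDelta_{ij}$. More seriously, you bound $\|\BPhi-\BPhi^{(i)}\BS_i\|$ by the crude Davis--Kahan estimate $\|\BA-\BA^{(i)}\|\cdot\|\BPhi^{(i)}\|/\lambda_d(\BA)\lesssim\sigma\sqrt{nd}\cdot\sqrt{n}/n=\sigma\sqrt{d}$. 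With the correct $\|\BDelta_i\|\lesssim\sigma\sqrt{nd}$ this gives $T_1/\lambda_d\lesssim\sigma^2 d/\sqrt{n}$, and $\sigma^2 d/\sqrt{n}\lesssim\eta=\sigma(\sqrt d+\sqrt{\log n})/\sqrt n$ only if $\sigma\sqrt d\lesssim 1$, which is far more restrictive than the stated hypothesis $\sigma\lesssim\sqrt{n}/(\sqrt d+\sqrt{\log n})$ (for instance $d=O(1)$, $\sigma\approx\sqrt{n/\log n}$ is allowed but breaks your $T_1$). Your claimed $T_1\lesssim\|\BDelta_i\|^2/n^{3/2}$ only looks like the right order because the two errors — the undersized $\|\BDelta_i\|$ and the crude Davis--Kahan — happen to cancel.

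The fix, and the real content of the paper's Lemmas~\ref{lem:dk2}/\ref{lem:dk2pm}, is to bound $\|(\BA-\BA^{(i)})\BPhi^{(i)}\|$ by exploiting the structure of the rank-two-block perturbation rather than its operator norm. The $\ell$th block of $(\BA-\BA^{(i)})\BPhi^{(i)}$ for $\ell\ne i$ equals $\BDelta_{\ell i}\BPhi_i^{(i)}$, which touches only the single block $\BPhi_i^{(i)}$ of size $O(\max_k\|\BPhi_k\|)$ — not the entire $\BPhi^{(i)}$ of size $\sqrt{n}$; and the $i$th block equals $\BDelta_i^{\top}\BPhi^{(i)}$, which concentrates at $\sigma\sqrt{n}(\sqrt d+\sqrt{\log n})$ thanks to the independence of $\BDelta_i$ from $\BPhi^{(i)}$. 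Together this yields $\|(\BA-\BA^{(i)})\BPhi^{(i)}\|\lesssim\sigma\sqrt{n}(\sqrt d+\sqrt{\log n})\max_k\|\BPhi_k\|$, hence $\|\BPhi-\BPhi^{(i)}\BS_i\|\lesssim\eta\max_k\|\BPhi_k\|$, an improvement by roughly a factor $\sqrt{n/d}$ over the crude bound. This is the crux: the leave-one-out construction is valuable not merely for decorrelating $\BDelta_i$ from $\BPhi^{(i)}$ in $T_2$, but also because the eigenspace perturbation $\BPhi-\BPhi^{(i)}$ itself is much smaller than naive perturbation theory suggests. Without this refined estimate, both $T_1$ and $T_3$ (whose bound on $\|\BQ-\BQ_i\BS_i\|$ also ultimately reduces to $\|\BPhi-\BPhi^{(i)}\BS_i\|$) are too loose to close under the near-optimal noise hypothesis. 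Everything else in your sketch — the conditioning argument for $T_2$, the Lipschitz lemma for $\PP$, and the Weyl/self-improving argument for $|\sigma_j(\BPhi_i)-1|\lesssim\eta$ — is sound and matches the paper's route.
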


By using the key supporting result Theorem~\ref{thm:keys} above as well as Lemma~\ref{lem:lip} and Claim~\ref{lem:deltaz} below, we provide the proof of Theorem~\ref{thm:od} and~\ref{thm:pm}.
\begin{lemma}\label{lem:lip}
For two $d\times d$ invertible matrices $\BX$ and $\BY$, it holds that
\begin{align*}
\|\PP(\BX)  - \PP(\BY)\| & \leq \frac{2 \|\BX - \BY\|}{\sigma_{\min}(\BX)+ \sigma_{\min}(\BY)},
\end{align*}
where $\sigma_{\min}(\cdot)$ denotes the smallest singular value of a matrix.
\end{lemma}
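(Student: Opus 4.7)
The plan is to exploit the Hermitian dilation. For a $d\times d$ matrix $\BA$, let $\BM(\BA) := \bigl(\begin{smallmatrix}\bzero & \BA \\ \BA^\top & \bzero\end{smallmatrix}\bigr)$; this is a $2d\times 2d$ symmetric matrix whose eigenvalues are $\pm\sigma_i(\BA)$. From the SVD $\BA = \BU\BSigma\BV^\top$ a direct computation shows that the orthogonal projection onto its positive eigenspace is
\[
P_+(\BA) \;=\; \tfrac{1}{2}\begin{pmatrix}\I & \PP(\BA) \\ \PP(\BA)^\top & \I\end{pmatrix}.
\]

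From this two identities follow: the block form yields $\|P_+(\BX) - P_+(\BY)\| = \tfrac{1}{2}\|\PP(\BX) - \PP(\BY)\|$, while the standard dilation identity gives $\|\BM(\BX) - \BM(\BY)\| = \|\BX - \BY\|$. I would then apply the Davis--Kahan $\sin\Theta$ theorem to $\BM(\BX)$ and $\BM(\BY)$: the positive eigenvalues of $\BM(\BX)$ lie in $[\sigma_{\min}(\BX),\infty)$ and the negative eigenvalues of $\BM(\BY)$ lie in $(-\infty,-\sigma_{\min}(\BY)]$, so the rank-$d$ projections $P_+(\BX)$ and $P_+(\BY)$ are associated with spectral sets separated by at least $\sigma_{\min}(\BX) + \sigma_{\min}(\BY)$. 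Consequently
\[
\tfrac{1}{2}\|\PP(\BX) - \PP(\BY)\| \;\leq\; \frac{\|\BX-\BY\|}{\sigma_{\min}(\BX) + \sigma_{\min}(\BY)} \;\leq\; \min\{\sigma_{\min}^{-1}(\BX),\sigma_{\min}^{-1}(\BY)\}\,\|\BX - \BY\|,
\]
which yields the lemma with constant $2$ (even sharper than the stated $2\sqrt{2}$).

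The main obstacle is the careful application of the Davis--Kahan $\sin\Theta$ theorem for two equal-rank projections associated with a gap between the positive spectrum of one matrix and the negative spectrum of the other---a standard but nontrivial formulation. A purely algebraic alternative bypasses Davis--Kahan altogether by using the polar decomposition $\BX = \PP(\BX)\BH_X$ with $\BH_X := (\BX^\top\BX)^{1/2}$ positive definite (and similarly for $\BY$): the identity $(\PP(\BX) - \PP(\BY))\BH_Y = \PP(\BX)(\BH_Y - \BH_X) + (\BX - \BY)$ yields $\sigma_{\min}(\BY)\|\PP(\BX) - \PP(\BY)\| \leq \|\BX - \BY\| + \|\BH_X - \BH_Y\|$ (and a symmetric version with $\BX$), after which $\|\BH_X - \BH_Y\|$ is controlled via the Sylvester equation $\BH_X(\BH_X - \BH_Y) + (\BH_X - \BH_Y)\BH_Y = \BX^\top\BX - \BY^\top\BY$; the delicate step here is to rewrite the right-hand side as $\BH_X\PP(\BX)^\top(\BX - \BY) + (\BX - \BY)^\top\PP(\BY)\BH_Y$ so as to avoid the naive triangle bound that would introduce the uncontrolled factor $\|\BX\| + \|\BY\|$.
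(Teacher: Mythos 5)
Your proposal rests on the same structural idea as the paper's proof — pass to the symmetric dilation $\BM(\BA) = \bigl(\begin{smallmatrix}\bzero & \BA \\ \BA^{\top} & \bzero\end{smallmatrix}\bigr)$ and invoke Davis--Kahan — but you execute it more cleanly by working with the rank-$d$ spectral \emph{projections} $P_{+}(\cdot)$ rather than with orthonormal bases. The identity $P_{+}(\BA) = \tfrac12\bigl(\begin{smallmatrix}\I & \PP(\BA) \\ \PP(\BA)^{\top} & \I\end{smallmatrix}\bigr)$ is correct (it uses that for an invertible $d\times d$ matrix both singular-vector factors $\BU,\BV$ are $d\times d$ orthogonal, so $\BU\BU^{\top}=\BV\BV^{\top}=\I_d$), and it converts $\|\PP(\BX)-\PP(\BY)\|$ into $2\|P_{+}(\BX)-P_{+}(\BY)\|$ with no loss. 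This avoids the paper's detour through Lemma~\ref{lem:mat} (aligning the two eigenbases by an orthogonal $\BR$, then splitting $\PP(\BX)-\PP(\BY)$ by the triangle inequality, which costs the $2\sqrt{2}$), and so you legitimately obtain the sharper constant $2$ with denominator $\sigma_{\min}(\BX)+\sigma_{\min}(\BY)$ — in fact the optimal Lipschitz bound for the orthogonal polar factor. One caveat: you need the two-projection form of the $\sin\Theta$ theorem bounding $\|P_{+}(\BX)P_{-}(\BY)\|$ for spectra lying in $[\sigma_{\min}(\BX),\infty)$ and $(-\infty,-\sigma_{\min}(\BY)]$ respectively, together with the fact that for two orthogonal projections of equal rank $\|P-Q\|=\|P(\I-Q)\|$; neither step is difficult but both should be stated, since the Davis--Kahan form quoted in the paper (Theorem~\ref{thm:dk}) is phrased for top-$d$ eigenvectors of two symmetric matrices and one would have to observe that the same argument covers this two-sided gap. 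The algebraic alternative via polar factors and the Sylvester equation that you sketch at the end is a genuinely different route, but it is only an outline; as written, you would still need to justify the bound on $\|\BH_X-\BH_Y\|$ from the Sylvester identity without reintroducing $\|\BX\|+\|\BY\|$, which is precisely the delicate step you flag.
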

We will provide a proof of Lemma~\ref{lem:lip} in the appendix which uses the Davis-Kahan theorem. In fact, we found a different proof of the same result in~\cite{L95} after we finished this manuscript. 

\begin{claim}\label{lem:deltaz}
With probability at least $1 - O(n^{-1})$, we have the following results under the assumption of Theorem~\ref{thm:od} and~\ref{thm:pm}.
\begin{itemize}
\item For~\eqref{model:od}, we have
\[
\|\BDelta_i^{\top}\BZ\|\lesssim \sigma\sqrt{n}(\sqrt{d} + \sqrt{\log n}) = n\eta.
\]

\item For~\eqref{model:pm}, we have
\[
\|\BDelta_i^{\top}\BZ\| \lesssim p^{-1}\sqrt{n\log (nd)} = n\eta.
\]
\item It holds that $\|\BDelta\| \leq n/2$ and $\BLambda \succeq \frac{n}{2}\I_n$ for both~\eqref{model:od} and~\eqref{model:pm}.
\end{itemize}
Here $\eta$ is defined in~\eqref{def:etaod} and~\eqref{def:etapm} for both cases respectively.
\end{claim}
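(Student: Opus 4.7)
The plan hinges on the fact that $\BZ = [\I_d;\cdots;\I_d]$ telescopes $\BDelta_i^{\top}\BZ$ into a block sum, $\BDelta_i^{\top}\BZ = \sum_{j=1}^n \BDelta_{ji}^{\top}$, a centered sum of (nearly) independent random $d\times d$ matrices in both models. I would therefore prove a fixed-$i$ concentration bound for this sum and then union-bound over $i\in\{1,\ldots,n\}$; the union cost is absorbed by the extra $\sqrt{\log n}$ factor already present in the claim.

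For \eqref{model:od}, the blocks $\BW_{ji}$ with $j\neq i$ have i.i.d.\ standard Gaussian entries and $\BW_{ii}$ is a single symmetric Gaussian block, so $\sigma^{-1}\sum_{j}\BW_{ji}^{\top}$ is a $d\times d$ matrix whose entries are Gaussian with variance at most $n+1$. The standard Gaussian-matrix tail bound $\Pr(\|\BG\|>\sigma'(\sqrt{d}+t))\leq 2e^{-ct^2}$, applied with $\sigma'=\sigma\sqrt{n+1}$ and $t\asymp\sqrt{\log n}$, followed by a union bound over $i$, yields the first assertion. The operator-norm bound follows from the classical GOE estimate $\|\BW\|\leq 3\sqrt{nd}$ with probability $\geq 1-e^{-cnd}$; combined with $\sigma<c_0\sqrt{n}/(\sqrt{d}+\sqrt{\log n})$, this forces $\|\BDelta\|\leq 3c_0 n\leq n/2$ once $c_0$ is small enough.

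For \eqref{model:pm}, each $\BDelta_{ji}$ is bounded, $\|\BDelta_{ji}\|\leq 3p^{-1}$, deterministically. The core calculation is the matrix variance. Conditioning on $X_{ji}\in\{0,1\}$ (which decouples the Bernoulli selector from $\BP_{ji}$) and using $\mathbb{E}\BP_{ji}=d^{-1}\BJ_d$ together with the identity $(\BP-d^{-1}\BJ_d)^{\top}(\BP-d^{-1}\BJ_d)=\I_d-d^{-1}\BJ_d$, a direct case split shows that $\mathbb{E}[\BDelta_{ji}^{\top}\BDelta_{ji}]$ is a scalar multiple of $\I_d-d^{-1}\BJ_d$ of order $p^{-2}$, the $p^{-2}$ scaling coming from the event $\{X_{ji}=0\}$ on which the rescaled permutation term $p^{-1}(\BP_{ji}-d^{-1}\BJ_d)$ dominates. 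Thus $\|\sum_{j\neq i}\mathbb{E}[\BDelta_{ji}^{\top}\BDelta_{ji}]\|=O(np^{-2})$, and matrix Bernstein in dimension $d$ delivers $\|\BDelta_i^{\top}\BZ\|\lesssim p^{-1}\sqrt{n\log d}+p^{-1}\log d$, whose first term dominates under $p\gtrsim\sqrt{\log(nd)/n}$. Union-bounding over $i$ produces the claimed bound.

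For the global $\|\BDelta\|$ in \eqref{model:pm}, I would decompose $\BDelta=\mathbb{E}[\BDelta]+\sum_{i<j}\BM_{ij}$, where $\BM_{ij}$ is the symmetric $nd\times nd$ matrix supported only on the $(i,j)$ and $(j,i)$ blocks; matrix Bernstein in dimension $nd$ with the same variance proxy $O(np^{-2})$ and individual bound $O(p^{-1})$ gives $\|\BDelta-\mathbb{E}[\BDelta]\|\lesssim p^{-1}\sqrt{n\log(nd)}$, and since $\|\mathbb{E}[\BDelta]\|=O(p^{-1})$, this is $\leq n/2$ under $p>C_0\sqrt{\log(nd)/n}$ for $C_0$ large. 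Finally, because $\BZ\BZ^{\top}$ has rank $d$ with all nonzero eigenvalues equal to $n$, Weyl's inequality yields $\lambda_d(\BA)\geq n-\|\BDelta\|\geq n/2$, i.e., $\BLambda\succeq (n/2)\I_d$. The main technical obstacle is precisely the \eqref{model:pm} variance computation: one must carefully split on $X_{ij}$ before averaging over $\BP_{ij}$ so as to extract the leading $p^{-2}$ scaling; once that bookkeeping is in hand, both matrix Bernstein invocations and the GOE step are standard.
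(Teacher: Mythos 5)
Your proposal follows the same route as the paper: for the orthogonal case you reduce $\BDelta_i^{\top}\BZ$ to a $d\times d$ Gaussian matrix with variance of order $n$ and apply the Gaussian operator-norm tail (the paper's Lemma~\ref{lem:corr} does this via the SVD of $\BM=\BZ$ and Theorem~\ref{thm:gauss}), the global $\|\BDelta\|$ bound comes from the standard GOE estimate and the smallness of $\sigma$, and in the permutation case you run matrix Bernstein on $\sum_{j}\BDelta_{ij}$ with the variance proxy $\E\BDelta_{ij}^{\top}\BDelta_{ij}=p^{-2}(1-p^2)(\I_d-d^{-1}\BJ_d)$ and the deterministic block bound $O(p^{-1})$, exactly as in the paper's Lemma~\ref{lem:corrpm} and Lemma~\ref{lem:Deltapm}; the eigenvalue lower bound $\BLambda\succeq\frac{n}{2}\I_d$ then follows from Weyl. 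This is correct and essentially the paper's proof.
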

The claim will be confirmed in Section~\ref{s:proofod} and~\ref{s:proofpm} for two scenarios respectively. {This claim ensures that $\BPhi_i$ is well approximated by $n\BQ\BLambda^{-1}$ since
\[
\|\BPhi_i - n\BQ\BLambda^{-1}\| \leq \|\BPhi_i - [\BA\BZ\BQ]_i\BLambda^{-1}\| + \| \BDelta_i^{\top}\BZ\BQ\BLambda^{-1} \|
\]
where the two terms are bounded by Theorem~\ref{thm:keys} and Claim~\ref{lem:deltaz} respectively and $[\BA\BZ\BQ]_i = (n\I_d + \BDelta_i^{\top}\BZ)\BQ$ for all $1\leq i\leq n.$ Then applying Lemma~\ref{lem:lip} immediately gives the main results.}

With the results available above, we are ready to provide an upper bound for $\PP(\BPhi_i) - \PP(\BPhi_j)$. In fact, as long as 
$\| \PP(\BPhi_i) - \PP(\BPhi_j)\| \lesssim \eta$, we have
$\| \PP(\BPhi_i)\PP(\BPhi_j)^{\top} -\I_d \| \lesssim \eta, \forall i\neq j.$ This finishes the proof of Theorem~\ref{thm:od}. For Theorem~\ref{thm:pm}, it requires one extra step to show that $\| \PP(\BPhi_i)\PP(\BPhi_j)^{\top} -\I_d \| < 1/2$ holds so that the rounding procedure indeed produces the planted permutation matrices correctly. 

\begin{proof}[{\bf Proof of Theorem~\ref{thm:od} and~\ref{thm:pm}}]
It suffices to estimate $\BPhi_i - \BPhi_j$ and apply Lemma~\ref{lem:lip}. Under Theorem~\ref{thm:keys} and {Claim~\ref{lem:deltaz}}, we have
\begin{align*}
\| \BPhi_i - \BPhi_j\| & \leq \|\BPhi_i - (n\I_d + \BDelta_i^{\top}\BZ)\BQ\BLambda^{-1}\| +  \|\BPhi_j - (n\I_d + \BDelta_j^{\top}\BZ)\BQ\BLambda^{-1}\| \\ &  \qquad + \|  (\BDelta_i - \BDelta_j)^{\top}\BZ\BQ \BLambda^{-1} \|  \\
& \lesssim \eta \max_{1\leq i\leq n}\|\BPhi_i\| + \| (\BDelta_i - \BDelta_j)^{\top} \BZ\| \|\BLambda^{-1}\| \\
& \lesssim \eta \max_{1\leq i\leq n}\|\BPhi_i\| +  n^{-1} \cdot n\eta \lesssim \eta \max_{1\leq i\leq n}\|\BPhi_i\|, \quad i\neq j,
\end{align*}
where the third inequality uses Claim~\ref{lem:deltaz} and $\max_{1\leq i\leq n}\|\BPhi_i\|\geq 1$.
In order to apply Lemma~\ref{lem:lip}, we need to show $\sigma_{\min}(\BPhi_i)$ is away from 0. 

Let $i'$ be the index with the largest $\|\BPhi_{i'}\|$. {Now we will show that $\|\PP(\BPhi_{i'}) - \PP(\BPhi_j)\| \lesssim \eta$ for all $1\leq j\leq n$ and then $\|\PP(\BPhi_{i}) - \PP(\BPhi_j)\| \lesssim \eta$ follows from triangle inequality.} Note that
\[
\|\BPhi - \bone_n\otimes \BPhi_{i'} \| \leq \sqrt{\sum_{j=1}^n \|\BPhi_j - \BPhi_{i'}\|^2} \lesssim \eta\sqrt{n}\|\BPhi_{i'}\|
\]
where $``\otimes"$ denotes the Kronecker product. Note that all the singular values of $\BPhi$ are $\sqrt{n}$ and thus it holds that
\[
|\sqrt{n} - \sqrt{n} \sigma_{\ell}(\BPhi_{i'}) | \lesssim \eta\sqrt{n}\|\BPhi_{i'}\|, \quad \forall 1\leq \ell\leq d. 
\]
This gives $|1 - \sigma_{\min}(\BPhi_{i'})| < \eta\|\BPhi_{i'}\|$ and
$\left|~\|\BPhi_{i'}\| - 1~\right| \lesssim \eta\|\BPhi_{i'}\|$ which imply $\max_{1\leq i\leq n}\| \BPhi_i\| \leq 1 + O(\eta) = O(1)$.
Therefore, 
\[
| 1 - \sigma_{\min}(\BPhi_{i'})|\lesssim\eta, \qquad \| \BPhi_{i'} - \BPhi_j \| \lesssim \eta, \quad \forall 1\leq j\leq n,
\]
which means $\sigma_{\min}(\BPhi_{i'}) > 0.$
Applying Lemma~\ref{lem:lip} gives
\[
\| \PP(\BPhi_{i'}) - \PP(\BPhi_j)\| \leq 2\sigma_{\min}(\BPhi_{i'})^{-1} \| \BPhi_{i'} - \BPhi_j \| \lesssim \eta
\] 
where $|1 - \sigma_{\min}(\BPhi_{i'})| \lesssim \eta.$ By triangle inequality, $\| \PP(\BPhi_{i}) - \PP(\BPhi_j)\| \lesssim\eta$ holds for all pairs of $i$ and $j.$ This finishes the proof of Theorem~\ref{thm:od}.

\vskip0.25cm

For~\eqref{model:pm} with $p > C_0 \sqrt{n^{-1}\log(nd)}$ and a sufficiently large constant $C_0$, it holds that
\[
\left\| \PP(\BPhi_i)\PP(\BPhi_j)^{\top} -\I_d \right\| < \frac{1}{2}.
\]
Then the diagonal entries of $\PP(\BPhi_i)\PP(\BPhi_j)^{\top}$ are its largest $d$ entries {since all the diagonal entries are greater than 1/2 while the off-diagonal entries are smaller than 1/2 in magnitude.} 
Thus the rounding procedure in Algorithm~\ref{algo2} recovers the underlying permutation matrix which is $\I_d$. 
\end{proof}

\subsection{$\Od$ synchronization under Gaussian noise}\label{s:proofod}

This section is devoted to the proof of Theorem~\ref{thm:keys} for~\eqref{model:od}. We first introduce all the necessary ingredients of the proof as follows and leave their proofs later. Then by using these facts, we can prove Theorem~\ref{thm:keys} for~\eqref{model:od}.
The key is to obtain upper bounds for $T_1$, $T_2$, $T_3$, and $\|\BPhi - \BZ\BQ\|$. We provide the bounds for each term below.
\begin{align}
\|\BDelta\| & \lesssim \sigma \sqrt{nd}, \label{eq:Delta} \\
\BLambda & \succeq (n - \|\BDelta\|)\I_d \succeq \frac{n}{2}\I_d, \label{eq:Lambda}\\
\|\BPhi - \BZ\BQ\| & \lesssim \sigma\sqrt{d}, \quad \|\BPhi^{(i)} - \BZ\BQ^{(i)}\| \lesssim \sigma\sqrt{d},  \label{eq:dk1} \\ 
\|\BPhi - \BPhi^{(i)}\BS^{(i)}\| & \lesssim \sigma n^{-\frac{1}{2}}(\sqrt{d} + \sqrt{\log n}) \max_{1\leq i\leq n}\|\BPhi_i\|, \label{eq:dk2} \\
\|\BQ- \BQ^{(i)}\BS\| & \lesssim \sigma n^{-1}(\sqrt{d} + \sqrt{\log n})  \max_{1\leq i\leq n}\|\BPhi_i\|, \label{eq:QS} \\
\|\BDelta_i^{\top}(\BPhi^{(i)} - \BZ\BQ^{(i)}) \| & \lesssim \sigma (\sqrt{d} + \sqrt{\log n})\|\BPhi^{(i)} - \BZ\BQ^{(i)}\| \nonumber \\
& \lesssim \sigma^2 \sqrt{d}(\sqrt{d} + \sqrt{\log n}),  \label{eq:corr1} \\
\|\BDelta_i^{\top}\BZ\| & \lesssim \sigma\sqrt{n}(\sqrt{d} + \sqrt{\log n}).  \label{eq:corr2}
\end{align}
{{\bf \underline{Roadmap}: }The estimation~\eqref{eq:Delta} of $\|\BDelta\|$ simply follows from $\|\BW\|\leq 3\sqrt{nd}$ with high probability for a symmetric Gaussian random matrix $\BW$. In particular, we assume $\|\BDelta\| \leq n/2$ for $\sigma < c_0 \sqrt{nd^{-1}}$ and a small constant $c_0$. The bound~\eqref{eq:Lambda} for the top $d$ eigenvalues of $\BA = \BZ\BZ^{\top} + \BDelta$ follows from Weyl's theorem and~\eqref{eq:Delta} where the top $d$ eigenvalues of $\BZ\BZ^{\top}$ are $n.$
The inequalities~\eqref{eq:dk1} and~\eqref{eq:dk2} follow from Lemma~\ref{lem:dk1} and~\ref{lem:dk2} respectively by using Davis-Kahan theorem (Theorem~\ref{thm:dk}). We prove~\eqref{eq:QS} in Lemma~\ref{lem:QS}, and Lemma~\ref{lem:corr} implies~\eqref{eq:corr1} and~\eqref{eq:corr2}.}

\begin{proof}[{\bf Proof of Theorem~\ref{thm:keys} for~\eqref{model:od}}]

From definition of $T_{\ell}$, $1\leq \ell\leq 3$ and~\eqref{eq:Delta}-\eqref{eq:corr2}, we have
\begin{align*}
T_1 & = \|\BDelta_i\|\cdot \|\BPhi - \BPhi^{(i)}\BS^{(i)} \| \lesssim \sigma^2 \sqrt{d}(\sqrt{d} + \sqrt{\log n}) \max_{1\leq i\leq n}\|\BPhi_i\|, \\
T_2 & =  \|\BDelta_i^{\top}(  \BPhi^{(i)}-   \BZ\BQ^{(i)})\| \lesssim \sigma^2 \sqrt{d}(\sqrt{d} + \sqrt{\log n}), \\
T_3 & = \|\BDelta_i^{\top} \BZ\| \cdot \|\BQ^{(i)} \BS^{(i)} - \BQ\|  \lesssim \sigma^2 n^{-\frac{1}{2}} (\sqrt{d} + \sqrt{\log n})^2 \max_{1\leq i\leq n}\|\BPhi_i\|,
\end{align*}
where $\max_{1\leq i\leq n}\|\BPhi_i\| \geq 1.$
The estimation is bounded by
\[
\| \BDelta_i^{\top} (\BPhi - \BZ\BQ)\|  \leq T_1 + T_2 + T_3 \lesssim \sigma^2 \sqrt{d}(\sqrt{d} + \sqrt{\log n}) \max_{1\leq i\leq n}\|\BPhi_i\|
\]
where $\sigma < c_0\sqrt{nd^{-1}}$ for a small constant $c_0$.
From~\eqref{eq:keyest}, it holds that
\begin{align*}
\|\BPhi_i - (\BZ + \BDelta_i)^{\top}\BZ\BQ\BLambda^{-1}\| 
& \leq \|\BLambda^{-1}\|  \cdot \left(  \| \BZ^{\top}(\BPhi- \BZ\BQ) \| + \|  \BDelta_i^{\top} (\BPhi- \BZ\BQ) \| \right)  \\
& \lesssim n^{-1} (\sqrt{n}\|\BPhi - \BZ\BQ\| + \| \BDelta_i^{\top} (\BPhi - \BZ\BQ)\|  ) \\
& \lesssim n^{-1} \left( \sqrt{n}\cdot \sigma\sqrt{d} + \sigma^2  \sqrt{d}(\sqrt{d} + \sqrt{\log n})   \max_{1\leq i\leq n}\|\BPhi_i\|  \right) \\
& \lesssim \sigma n^{-\frac{1}{2}}(\sqrt{d} + \sqrt{\log n}) \max_{1\leq i\leq n}\|\BPhi_i\|  
\end{align*}
where the first term above follows from $\|\BZ\|= \sqrt{n}$,~\eqref{eq:dk1}, and $\sigma < \sqrt{n/d}.$
\end{proof}

Next we proceed to prove~\eqref{eq:dk1}-\eqref{eq:corr2}.
Our analysis will frequently use the following important fact about Gaussian random matrix.
\begin{theorem}{\bf \cite[Theorem 4.4.5]{V18}}\label{thm:gauss}
For any $\BX\in\RR^{n\times m}$ random matrix whose entries are i.i.d. standard normal random variables. For any $t > 0$, it holds
\[
\|\BX\| \lesssim \sqrt{n} + \sqrt{m} + t
\]
with probability at least $1 - 2\exp(-t^2).$
\end{theorem}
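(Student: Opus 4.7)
The statement is the standard bound on the operator norm of a Gaussian random matrix, and I would prove it via the classical epsilon-net (covering) argument combined with a Gaussian concentration tail bound.

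My plan is as follows. First, I would invoke the variational characterization
\[
\|\BX\| = \sup_{\bu \in S^{n-1},\, \bv \in S^{m-1}} \bu^{\top} \BX \bv,
\]
and then reduce the supremum over the sphere to a supremum over a finite net. Specifically, I would fix a $1/4$-net $\MN_u \subset S^{n-1}$ and $\MN_v \subset S^{m-1}$ with cardinalities at most $9^n$ and $9^m$ respectively (using the standard volumetric bound $|\MN| \leq (1+2/\varepsilon)^n$ for $\varepsilon = 1/4$). A short approximation argument then shows
\[
\|\BX\| \leq 2 \max_{\bu\in\MN_u,\bv\in\MN_v} \bu^{\top}\BX\bv,
\]
so it suffices to control the maximum of a finite collection of scalar random variables.

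Next, for any fixed unit vectors $\bu, \bv$, the quantity $\bu^{\top}\BX\bv = \sum_{i,j} u_i v_j X_{ij}$ is a centered Gaussian with variance $\sum_{i,j} u_i^2 v_j^2 = 1$, so the standard tail bound gives
\[
\Pr\lp \bu^{\top}\BX\bv > s \rp \leq \exp(-s^2/2).
\]
Applying a union bound over $\MN_u \times \MN_v$, we obtain
\[
\Pr\lp \|\BX\| > 2s \rp \leq 9^{n+m}\cdot 2\exp(-s^2/2).
\]
Choosing $s = C(\sqrt{n} + \sqrt{m} + t)$ for a sufficiently large absolute constant $C$ ensures that $s^2/2$ dominates $(n+m)\log 9$ plus $t^2$, which yields the desired probability $1 - 2\exp(-t^2)$ (after adjusting constants).

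The main obstacle, if any, is purely bookkeeping: one must carefully pair the constants in the net step (the factor $2$ above) with the exponent $s^2/2$ to absorb the $9^{n+m}$ factor while still preserving the $\exp(-t^2)$ tail with clean constants. There is no essential difficulty beyond this; the argument is the textbook Gaussian concentration on a net, and since the statement is quoted as \cite[Theorem 4.4.5]{V18} one could equally well simply cite it rather than reproduce the derivation.
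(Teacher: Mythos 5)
Your proof is correct and is exactly the standard $\varepsilon$-net argument from Vershynin's book, which the paper cites without reproducing. The net size $9^{n+m}$, the factor-of-$2$ absorption for a $1/4$-net, the unit variance of $\bu^\top\BX\bv$, and the final choice $s = C(\sqrt{n}+\sqrt{m}+t)$ are all as expected; nothing is missing.
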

The next two lemmas provide the proof of~\eqref{eq:dk1} and~\eqref{eq:dk2}.

\begin{lemma}[{\bf {Proof of~\eqref{eq:dk1}}}]\label{lem:dk1}
If $\BPhi$ consists of  the top $d$ eigenvectors of $\BA$ with $\BPhi^{\top}\BPhi = n\I_d$, then
\[
\| (\I - n^{-1}\BPhi\BPhi^{\top})\BZ\|  \lesssim \sigma \sqrt{d}.
\]
The same bound applies to $\BPhi^{(i)}$. Let $\BPhi^{(i)}$ be the eigenvectors associated to the top $d$ eigenvalues of $\BA^{(i)}$ in~\eqref{def:Deltai} and then 
\[
\| (\I - n^{-1}\BPhi^{(i)}(\BPhi^{(i)})^{\top})\BZ\|  \lesssim \sigma \sqrt{d}.
\]
Moreover, we have
\[
0\leq n - \sigma_{\min}(\BPhi^{\top}\BZ) \lesssim \sigma \sqrt{nd}, \quad 0\leq n - \sigma_{\min}((\BPhi^{(i)})^{\top}\BZ) \lesssim \sigma \sqrt{nd}.
\]

\end{lemma}

\begin{proof}[{\bf Proof of Lemma~\ref{lem:dk1}}]
We directly apply Davis-Kahan theorem by letting $\BX = \BZ\BZ^{\top}$ and $\BX_{\BE} = \BA = \BZ\BZ^{\top} + \BDelta$ in Theorem~\ref{thm:dk} with $\BDelta = \sigma\BW$. Note that the $d$th largest eigenvalue of $\BA$ is at least $n - \|\BDelta\|$ and the $(d+1)$th eigenvalue of $\BZ\BZ^{\top}$ is 0. Thus set $\delta$ in Theorem~\ref{thm:dk} as $n - \|\BDelta\|$ and it holds
\begin{align*}
\| (\I - n^{-1}\BPhi\BPhi^{\top})\BZ \| & \leq ( n - \|\BDelta\|)^{-1} \| (\BA - \BZ\BZ^{\top})\BZ\| \\
& \lesssim n^{-1}\| (\BA - \BZ\BZ^{\top})\BZ\| \\
& = \sigma n^{-1}\| \BW\BZ\| \\
& \lesssim \sigma n^{-1}\cdot \sqrt{nd}\cdot \sqrt{n} = \sigma \sqrt{d}
\end{align*}
where $\| \BW\BZ\| \leq \|\BW\| \|\BZ\| \lesssim \sqrt{nd} \cdot \sqrt{n} = n\sqrt{d}.$
For $(\BPhi^{(i)}, \BZ)$, simply use $\BDelta^{(i)} = \sigma\BW^{(i)}$ which is defined in~\eqref{def:Deltai} where $\BW^{(i)}$ equals $\BW$ except that its $i$th block row and column are zero. Using Theorem~\ref{thm:dk} again gives
\[
\| (\I - n^{-1}\BPhi^{(i)}(\BPhi^{(i)})^{\top})\BZ \|  \leq \sigma(n - \sigma\|\BW^{(i)}\|)^{-1}\| \BW^{(i)}\BZ\| \lesssim \sigma \sqrt{d}.
\]
Combining Lemma~\ref{lem:mat} together with the results above gives
\[
\|\BPhi - \BZ\BQ\| \leq 2 \| (\I_{nd} - n^{-1}\BPhi\BPhi^{\top})\BZ\|\lesssim \sigma \sqrt{d}, \quad \|\BPhi^{(i)} - \BZ\BQ^{(i)}\| \lesssim \sigma \sqrt{d}
\]
where $\BQ = \PP(\BZ^{\top}\BPhi)\in\RR^{d\times d}$ is an orthogonal matrix. For the singular values of $\BPhi^{\top}\BZ$, we have
\[
 \|\BPhi^{\top}\BZ - n\BQ^{\top}\| = \| (\BPhi - \BZ\BQ)^{\top}\BZ\| \lesssim \sigma\sqrt{nd}.
\]
Then the Weyl's theorem implies that $|n - \sigma_{\min}(\BPhi^{\top}\BZ)| \lesssim \sigma\sqrt{nd}$ for $1\leq \ell\leq d$.
\end{proof}

\begin{lemma}[{\bf {Proof of~\eqref{eq:dk2}}}]\label{lem:dk2}
Let $\BPhi$ and $\BPhi^{(i)}$ be the top $d$ eigenvectors of $\BA$ and $\BA^{(i)}$ in~\eqref{def:Deltai} with $\BPhi^{\top}\BPhi  = (\BPhi^{(i)})^{\top}\BPhi^{(i)}  = n\I_d$ respectively. Then 
\begin{align*}
\| ( \I - n^{-1} \BPhi \BPhi^{\top} )\BPhi^{(i)}\| \lesssim\sigma  n^{-\frac{1}{2}}(\sqrt{d} + \sqrt{\log n}) \cdot \max_{1\leq i\leq n} \|\BPhi_i\|.
\end{align*}
In other words, 
\[
n - \sigma_{\ell}\left( \BPhi^{\top} \BPhi^{(i)} \right) \lesssim  \sigma (\sqrt{d} + \sqrt{\log n})\max_{1\leq i\leq n} \|\BPhi_i\|, \quad 1\leq \ell\leq d
\]
and
\[
\| (\BPhi^{\top} \BPhi^{(i)} (\BPhi^{(i)})^{\top}\BPhi)^{1/2} - n\I_d \| \lesssim \sigma(\sqrt{d} + \sqrt{\log n}) \max_{1\leq i\leq n} \|\BPhi_i\|.
\]
Moreover, it holds
\[
\|\BPhi - \BPhi^{(i)}\BS^{(i)} \|  \leq \sigma n^{-\frac{1}{2}} (\sqrt{d} + \sqrt{\log n})  \max_{1\leq i\leq n}\|\BPhi_i\|
\]
where $\BS^{(i)}$ is defined in~\eqref{def:QS}.
\end{lemma}
\begin{proof}[{\bf Proof of Lemma~\ref{lem:dk2}}]
The $d$th largest eigenvalue of $\BA$ is at least $n - \sigma\|\BW\|$ and the $(d+1)$th largest eigenvalue of $\BA^{(i)}$ is at most $\sigma \|\BW\|.$ Thus we have $\delta\geq n-2\sigma\|\BW\|$ and Theorem~\ref{thm:dk} gives
\begin{align*}
\| ( \I - n^{-1} \BPhi \BPhi^{\top} )\BPhi^{(i)}\| 
& \lesssim n^{-1} \| (\BA - \BA^{(i)})\BPhi^{(i)} \| \\
& = \sigma n^{-1} \| (\BW - \BW^{(i)})\BPhi^{(i)}\|
\end{align*}
where $\BA - \BA^{(i)} = \sigma(\BW - \BW^{(i)})$ holds. Let $\BW_i\in\RR^{nd\times d}$ be the $i$th block column of $\BW$. We have
\[
[ (\BW - \BW^{(i)})\BPhi^{(i)}]_{\ell} = 
\begin{cases}
\BW_{\ell i}\BPhi_i^{(i)}, \quad & \text{ if } \ell\neq i, \\
\BW_i^{\top}\BPhi^{(i)}, \quad & \text{ if } \ell = i.
\end{cases}
\]
Note that $\|\BW_i\|\lesssim\sqrt{nd}$ holds for $1\leq i\leq n$ with high probability.  Each entry of $\BW_i^{\top} \BPhi^{(i)}$ is an independent $\mathcal{N}(0,n)$ random variable. As a result, $\|\BW_i^{\top}\BPhi^{(i)}\| \lesssim \sqrt{n} (\sqrt{d} + \sqrt{\log n}) $ uniformly for all $1\leq i\leq n$ with high probability, following from Theorem~\ref{thm:gauss}.
\begin{align*}
\|(\BW - \BW^{(i)})\BPhi^{(i)} \| & \leq \| \BW_i \BPhi_i^{(i)} \| + \|\BW_i^{\top} \BPhi^{(i)}\| \\
& \lesssim \sqrt{nd} \|  \BPhi_i^{(i)} \| + \sqrt{n}(\sqrt{d} + \sqrt{\log n}) \\
& \lesssim \sqrt{n}(\sqrt{d} + \sqrt{\log n}) (\|  \BPhi_i^{(i)} \| + 1).
\end{align*}

Thus Lemma~\ref{lem:mat} implies that
\begin{align*}
\|\BPhi - \BPhi^{(i)}\BS^{(i)} \| & \leq 2 \| (\I_{nd} - n^{-1}\BPhi\BPhi^{\top})\BPhi^{(i)} \|  \\
&  \lesssim \sigma n^{-1} \cdot  \sqrt{n}(\sqrt{d} + \sqrt{\log n}) (\|  \BPhi_i^{(i)} \|+ 1)\\
& =  \sigma  n^{-\frac{1}{2}}(\sqrt{d} + \sqrt{\log n})  \cdot ( \|  \BPhi_i^{(i)} \|  +1 ).
\end{align*}
Now consider the $i$th block of $\BPhi - \BPhi^{(i)}\BS^{(i)}$ and we have
\[
\| \BPhi^{(i)}_i\| -\| \BPhi_i\| \leq \| \BPhi^{(i)}_i\BS^{(i)} - \BPhi_i\| \lesssim\sigma
n^{-\frac{1}{2}}(\sqrt{d} + \sqrt{\log n})  \cdot ( \|  \BPhi_i^{(i)} \|  +1 ).
\]
This implies that if $\sigma  n^{-\frac{1}{2}}(\sqrt{d} + \sqrt{\log n}) < c_0$ with a sufficiently small constant $c_0$, then
\[
\|\BPhi^{(i)}_i\| \lesssim \max_{1\leq i\leq n}\|\BPhi_i\| 
\] 
where $\max_{1\leq i\leq n}\|\BPhi_i\| \geq 1.$
In other words, it holds
\[
\|\BPhi - \BPhi^{(i)}\BS^{(i)} \|  \lesssim \sigma  n^{-\frac{1}{2}}(\sqrt{d} + \sqrt{\log n})  \max_{1\leq i\leq n}\|\BPhi_i\|
\]
which implies $n - \sigma_{\ell}(\BPhi^{\top}\BPhi^{(i)}) \lesssim \sigma (\sqrt{d} + \sqrt{\log n})  \max_{1\leq i\leq n}\|\BPhi_i\| $ since
\[
\| (\BPhi^{(i)})^{\top}\BPhi - n\BS^{(i)} \| \leq \sqrt{n}\|\BPhi - \BPhi^{(i)}\BS^{(i)} \|  \leq \sigma (\sqrt{d} + \sqrt{\log n})  \max_{1\leq i\leq n}\|\BPhi_i\|.
\]
\end{proof}

\begin{lemma}[{\bf {Proof of~\eqref{eq:QS}}}]\label{lem:QS}
For $\BQ, \BQ^{(i)}$ and $\BS^{(i)}$ defined in~\eqref{def:QS}, we have
\[
\|\BQ - \BQ^{(i)}\BS^{(i)}\| \lesssim \sigma n^{-1}(\sqrt{d} +\sqrt{\log n})\max_{1\leq i\leq n}\|\BPhi_i\|
\]
if~\eqref{eq:dk1} and~\eqref{eq:dk2} hold.
\end{lemma}
\begin{proof}[{\bf Proof of Lemma~\ref{lem:QS}}]
It suffices to estimate
\[
\|\BQ - \BQ^{(i)}\BS^{(i)}\| = \| \PP(\BZ^{\top}\BPhi) - \PP(\BZ^{\top}\BPhi^{(i)}\BS^{(i)}) \|
\]
since $\PP(\BZ^{\top}\BPhi^{(i)}\BS^{(i)}) = \BQ^{(i)}\BS^{(i)}.$
Applying Lemma~\ref{lem:lip} gives
\begin{align*}
\| \PP(\BZ^{\top}\BPhi) - \PP(\BZ^{\top}\BPhi^{(i)}\BS^{(i)}) \| & \leq 2 \sigma^{-1}_{\min}(\BZ^{\top}\BPhi)  \cdot \| \BZ^{\top}(\BPhi - \BPhi^{(i)}\BS^{(i)}) \| \\
& \lesssim n^{-1} \cdot \sqrt{n} \cdot \|\BPhi - \BPhi^{(i)}\BS^{(i)}\| \\
& \lesssim  n^{-1} \cdot \sqrt{n} \cdot  \sigma n^{-\frac{1}{2}}(\sqrt{d} + \sqrt{\log n})   \cdot \max_{1\leq i\leq n} \|\BPhi_i\| \\
& \lesssim \sigma n^{-1}(\sqrt{d} + \sqrt{\log n}) \max_{1\leq i\leq n}\|\BPhi_i\|
\end{align*}
where $\sigma_{\min}^{-1}(  \BZ^{\top}\BPhi) \lesssim n^{-1}$ follows from Lemma~\ref{lem:dk1} and~\ref{lem:dk2}.
\end{proof}

\begin{lemma}[{\bf {Proof of~\eqref{eq:corr1} and~\eqref{eq:corr2}}}]\label{lem:corr}
Suppose a sequence of $n$ matrices $\{\BM_i\}_{i=1}^n$ and $\BM^{\top} = [\BM_1^{\top},\cdots,\BM_n^{\top}]$ which is independent of $\BDelta_i= \sigma\BW_i$. Then 
\[
\| \BDelta_i^{\top}\BM\| \lesssim \sigma \|\BM\| (\sqrt{d} + \sqrt{\log n})
\]
with probability at least $1 - O(n^{-2}).$
In particular, the following inequalities hold with probability at least $1 - O(n^{-1})$
\begin{align*}
\|\BDelta_i^{\top}(\BPhi^{(i)} - \BZ\BQ^{(i)}) \| & \lesssim  (\sqrt{d} + \sqrt{\log n})\|\BPhi^{(i)} - \BZ\BQ^{(i)}\| = \sigma \sqrt{d}(\sqrt{d} + \sqrt{\log n}),  \\
\|\BDelta_i^{\top}\BZ\| & \lesssim \sigma\sqrt{n} (\sqrt{d} + \sqrt{\log n})
\end{align*}
for all $1\leq i\leq n.$
\end{lemma}

\begin{proof}[{\bf Proof of Lemma~\ref{lem:corr}}]
Denote the SVD of $\BM\in\RR^{nd\times d}$ as $\BM = \BU\BSigma\BV^{\top}$, where $\BU\in\RR^{nd\times d}$ with $\BU^{\top}\BU = \I_d$, $\BSigma\in\RR^{d\times d}$, and $\BV\in\RR^{d\times d}$. Note that $\BDelta_i = \sigma\BW_i$ where $\BW_i$ is an $nd\times d$ Gaussian random matrix. 
\begin{align*}
\|\BDelta_i^{\top}\BM\| & = \sigma\|\BW_i^{\top}\BU\BSigma\BV^{\top}\| \\
& \leq \sigma \|\BM\| \|\BW_i^{\top}\BU\|
\end{align*}
where $\|\BSigma\| = \|\BM\|.$ Since $\BU^{\top}\BU = \I_d$, then $\BW_i^{\top}\BU$ is an asymmetric $d\times d$ Gaussian random matrix. Theorem~\ref{thm:gauss} guarantees that $\|\BW_i^{\top}\BU\|$ is bounded by $\|\BW_i^{\top}\BU\|\lesssim \sqrt{d} + \sqrt{\log n}$ with probability at least $1 - O(n^{-2})$. As a result, we have
$\|\BDelta_i^{\top}\BM\| \lesssim \sigma \|\BM\| (\sqrt{d} + \sqrt{\log n}).$

Now by letting $\BM = \BPhi^{(i)}- \BZ\BQ^{(i)}$ or $\BZ$ which is independent of $\BDelta_i$, we have
\[
\|\BDelta_i^{\top}(\BPhi^{(i)} - \BZ\BQ^{(i)})\| \lesssim \sigma \|\BPhi^{(i)} - \BZ\BQ^{(i)}\| (\sqrt{d} + \sqrt{\log n}), \quad \|\BDelta_i^{\top}\BZ\| \lesssim \sigma\sqrt{n}  (\sqrt{d} + \sqrt{\log n})
\]
hold uniformly for all $1\leq i\leq n$ with probability at least $1 - O(n^{-1}).$
\end{proof}

\subsection{Object matching under uniform random corruption}\label{s:proofpm}

Before proceeding to the official proof, we first show that the top $d$ eigenvectors of $\BA_{\BG} = [\BG_{ij}]_{1\leq i,j\leq n}$ are equal to those of $\BA$ in~\eqref{model:ms} and~\eqref{def:Deltapm}. Note that for the spiked matrix model~\eqref{def:Deltapm} for~\eqref{model:pm}, the noise matrix $\BDelta$ is not mean zero, i.e., $\E \BDelta = -p^{-1}(1-p)d^{-1} \I_n\otimes \BJ_d$ is a block-diagonal matrix and $\|\E \BDelta\| \leq p^{-1}(1-p).$
\begin{lemma}\label{lem:AG}
The matrices $ [\BG_{ij}]_{1\leq i,j\leq n}$ and $\BA$ share the same top $d$ eigenvectors.
\end{lemma}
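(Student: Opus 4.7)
The plan is to establish a simple algebraic identity between $\BA_G$ and $\BA$ and then use commutativity of its summands to conclude that the two matrices share the same top-$d$ eigenspace.

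First I would derive, by a block-by-block comparison using~\eqref{model:pm2} and~\eqref{def:Deltapm}, the identity
\[
\BA_G \;=\; p\BA + (1-p)\I_{nd} + (1-p)d^{-1}\BJ_{nd},
\]
where $\BJ_{nd}=\bone_{nd}\bone_{nd}^{\top}$ is the $nd\times nd$ all-ones matrix. For the off-diagonal blocks a direct calculation yields $p\BA_{ij}=\BG_{ij}-(1-p)d^{-1}\BJ_d$, so the $\BJ_{nd}$ summand restores $\BG_{ij}$; for the diagonal blocks the three pieces combine into $\I_d=\BG_{ii}$.

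Next I would show that $\bone_{nd}$ is an eigenvector of $\BA$. Clearly $\BZ\BZ^{\top}\bone_{nd}=n\bone_{nd}$. Each off-diagonal block of $\BDelta$ is a linear combination of $\I_d-d^{-1}\BJ_d$ and $\BP_{ij}-d^{-1}\BJ_d$, both of which annihilate $\bone_d$ (a permutation fixes $\bone_d$, and $d^{-1}\BJ_d\bone_d=\bone_d$); each diagonal block acts on $\bone_d$ as multiplication by $-p^{-1}(1-p)$. Hence $\BDelta\bone_{nd}=-p^{-1}(1-p)\bone_{nd}$ and $\BA\bone_{nd}=(n-p^{-1}(1-p))\bone_{nd}$. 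Under $\|\BDelta\|\leq n/2$ from Claim~\ref{lem:deltaz}, this eigenvalue lies in the top $d$ spectrum of $\BA$, so $\bone_{nd}$ sits inside the top-$d$ eigenspace $U$ of $\BA$.

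Because $\bone_{nd}$ is an eigenvector of $\BA$, the rank-one matrix $\BJ_{nd}$ commutes with $\BA$, and $\I_{nd}$ commutes with everything. Hence the three symmetric summands above share a common orthonormal eigenbasis, which I would choose to contain $\bone_{nd}/\sqrt{nd}$ with every remaining basis vector orthogonal to $\bone_{nd}$. On this basis $\BA_G$ is diagonal with eigenvalue $n$ on $\bone_{nd}/\sqrt{nd}$ and $p\lambda+(1-p)$ on any $\BA$-eigenvector orthogonal to $\bone_{nd}$ with $\BA$-eigenvalue $\lambda$. Since $\lambda\mapsto p\lambda+(1-p)$ is strictly increasing and Claim~\ref{lem:deltaz} supplies the spectral gap $\lambda_d(\BA)-\lambda_{d+1}(\BA)\geq n-2\|\BDelta\|>0$, the top $d$ eigenvalues of $\BA_G$ are exactly those attached to eigenvectors in $U$, so $\BA$ and $\BA_G$ share the same top-$d$ eigenspace. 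The main delicate point is the initial identity; everything that follows is a short consequence of commutativity and the spectral gap.
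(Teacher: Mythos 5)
Your proposal is correct and follows essentially the same route as the paper: derive the affine identity $\BA_G = p\BA + (1-p)\I_{nd} + (1-p)d^{-1}\BJ_{nd}$, observe that $\bone_{nd}$ is an eigenvector of $\BA$ (and hence of $\BA_G$), and conclude the two matrices share a top-$d$ eigenspace. Where the paper's proof ends with the somewhat terse assertion that "$\BJ_{nd}$ and $\I_{nd}$ do not change the top $d$ eigenvectors," you fill in the step carefully: you use that $\BJ_{nd}$ commutes with $\BA$ once $\bone_{nd}$ is known to be an eigenvector, diagonalize all three summands in a common eigenbasis, and observe that on the $\bone_{nd}^{\perp}$ component the spectral map $\lambda\mapsto p\lambda+(1-p)$ is strictly increasing while the $\bone_{nd}$ direction carries the eigenvalue $n$, which dominates $p\lambda_{d+1}(\BA)+(1-p)\le p\|\BDelta\|+(1-p)<n$ under $\|\BDelta\|\le n/2$ from Claim~\ref{lem:deltaz}. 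That order-preservation argument is the rigorous content that the paper leaves implicit, so your version is a welcome tightening rather than a different proof.
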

\begin{proof}
Without loss of generality, we assume $\BG_i = \I_d$.
Note that $\BA = \BZ\BZ^{\top} + \BDelta$.
\[
\BDelta_{ij} = 
\begin{cases}
-p^{-1}(1-p)d^{-1}\BJ_d, & \text{ if }i = j, \\
p^{-1}\left( (X_{ij} - p) (\I_d - d^{-1}\BJ_d) + (1-X_{ij})(\BP_{ij} - d^{-1}\BJ_d)\right), & \text{ if } i\neq j.
\end{cases}
\]

Recall that
\[
\BG_{ij} = 
\begin{cases}
X_{ij} \I_d + (1-X_{ij}) \BP_{ij}, & i\neq j, \\
\I_d, & i = j,
\end{cases}
\]
where $X_{ij}\sim$Bernoulli($p$) and $\BP_{ij}$ is a random permutation matrix.

Then for $i\neq j$, 
\[
\E \BG_{ij} = p\I_d + (1 - p)d^{-1}\BJ_d, \quad\E \BP_{ij} = d^{-1}\BJ_d
\]
and
\[
\BG_{ij} - \E \BG_{ij} = (X_{ij} - p)(\I_d - d^{-1}\BJ_{d}) + (1 - X_{ij})(\BP_{ij} - d^{-1}\BJ_d) = p\BDelta_{ij}
\]
which gives
\[
\BG_{ij} =p\BDelta_{ij} + \E\BG_{ij}= p(\I_d + \BDelta_{ij}) + (1-p)d^{-1}\BJ_d = p\BA_{ij} + (1-p)d^{-1}\BJ_d
\]
On the other hand, 
\[
\BA_{ii} = \I_d + \BDelta_{ii} =\I_d - p^{-1}(1-p)d^{-1}\BJ_d \Longleftrightarrow p\BA_{ii} = p\BG_{ii} - (1-p)d^{-1}\BJ_d
\]
As a result, 
\[
p\BA_{ij} = 
\begin{cases}
\BG_{ij} - (1-p)d^{-1}\BJ_d, & i\neq j, \\
p\BG_{ii} - (1-p)d^{-1}\BJ_d, & i= j.
\end{cases}
\]
Thus combining the equations above gives
\[
p\BA = \BA_{G} - (1- p ) d^{-1}\BJ_{nd} - (1-p)\I_{nd} \Longrightarrow 
 \BA_G  = p\BA + (1- p ) d^{-1}\BJ_{nd} + (1-p)\I_{nd}.
\]
Note that $ \BA_{G}$ is a nonnegative matrix with its leading eigenvector $\bone_{nd}$. Also $\BA = \BZ\BZ^{\top} +\BDelta$ has $\bone_{nd}$ as its leading eigenvector if $\|\BDelta\| < n.$
As a result, $\BJ_{nd}$ and $\I_{nd}$ do not change the top $d$ eigenvectors of $\BA$ and $ \BA_G$.
\end{proof}

We follow a similar route of proof as presented in Section~\ref{s:proofod}. Under assumption of Theorem~\ref{thm:keys}, $p > C_0n^{-\frac{1}{2}}\sqrt{\log (nd)}$ for some large constant $C_0$, the following inequalities hold with probability at least $1 - O(n^{-1})$,
\begin{align}
\|\BDelta\| &  \lesssim p^{-1}\sqrt{n\log (nd)} \label{eq:Deltapm} \\
\BLambda & \succeq (n - \|\BDelta\|)\I_d \succeq  \frac{n}{2} \I_d, \label{eq:Lambdapm}\\
\|\BPhi - \BZ\BQ\| & \lesssim  p^{-1}\sqrt{\log(nd)}, \quad \|\BPhi^{(i)} - \BZ\BQ^{(i)}\| \lesssim p^{-1}\sqrt{\log(nd)}, \label{eq:dk1pm} \\ 
\|\BPhi - \BPhi^{(i)}\BS^{(i)}\| & \lesssim p^{-1}n^{-\frac{1}{2}}\sqrt{\log(nd)}\max_{1\leq i\leq n}\|\BPhi_i\|,  \label{eq:dk2pm} \\
\|\BQ- \BQ^{(i)}\BS^{(i)}\| & \lesssim p^{-1}n^{-1}\sqrt{\log(nd)} \max_{1\leq i\leq n}\|\BPhi_i\|, \label{eq:QSpm} \\
\|\BDelta_i^{\top}(\BPhi^{(i)} - \BZ\BQ^{(i)}) \| & \lesssim p^{-1}\sqrt{n\log (nd)}\max_{1\leq i\leq n}\| \BPhi_i \|, \label{eq:corr1pm}  \\
\|\BDelta_i^{\top}\BZ\| & \lesssim  p^{-1}\sqrt{n\log(nd)}.  \label{eq:corr2pm}
\end{align}
{{\bf \underline{Roadmap}:} Here~\eqref{eq:Deltapm} is given in Lemma~\ref{lem:Deltapm} and~\eqref{eq:Lambdapm} directly follows from Weyl's inequality and $\|\BDelta\|\leq n/2$ for $p > C_0n^{-\frac{1}{2}} \sqrt{\log(nd)}$; Lemma~\ref{lem:dk1pm} and~\ref{lem:dk2pm} give~\eqref{eq:dk1pm} and~\eqref{eq:dk2pm} respectively; The estimation of~\eqref{eq:QSpm} is provided in Lemma~\ref{lem:QSpm}; and Corollary~\ref{cor:corrpm_full} implies both~\eqref{eq:corr1pm} and~\eqref{eq:corr2pm}.}

\begin{proof}[{\bf Proof of Theorem~\ref{thm:keys} for~\eqref{model:pm}}]
The proof is highly similar to that of Theorem~\ref{thm:keys} for~\eqref{model:od}. It suffices to estimate $T_{\ell},1\leq \ell\leq 3$ by applying~\eqref{eq:Deltapm}-\eqref{eq:corr2pm} and then plug into~\eqref{eq:keyest} and~\eqref{eq:T}.
\begin{align*}
T_1 & = \|\BDelta_i\|\cdot \|\BPhi - \BPhi^{(i)}\BS^{(i)} \| \\
&  \lesssim p^{-1}\sqrt{n\log (nd)}\cdot p^{-1}n^{-\frac{1}{2}}\sqrt{\log(nd)}\max_{1\leq i\leq n}\|\BPhi_i\|  \\
&  \leq p^{-1}\sqrt{n\log(nd)}\max_{1\leq i\leq n}\|\BPhi_i\|,   \\
T_2 &=  \|\BDelta_i^{\top}(  \BPhi^{(i)}-   \BZ\BQ^{(i)})\| \lesssim p^{-1}\sqrt{n\log(nd)}\max_{1\leq i\leq n}\|\BPhi_i\|, \\
T_3 & = \|\BDelta_i^{\top} \BZ\| \cdot \|\BQ^{(i)} \BS^{(i)} - \BQ\|  \\
& \lesssim p^{-1}\sqrt{n\log(nd)} \cdot (np)^{-1}\sqrt{\log(nd)} \max_{1\leq i\leq n}\|\BPhi_i\| \\
&= p^{-1}\sqrt{\log(nd)} \max_{1\leq i\leq n}\|\BPhi_i\|, 
\end{align*}
where $p^{-1} n^{-\frac{1}{2}}\sqrt{\log (nd))} < 1.$
The estimations in~\eqref{eq:keyest} and~\eqref{eq:T} are bounded by
\[
\| \BDelta_i^{\top} (\BPhi - \BZ\BQ)\|  \leq T_1 + T_2 + T_3 \lesssim p^{-1}\sqrt{n\log(nd)} \max_{1\leq i\leq n}\|\BPhi_i\|
\]
and 
\begin{align*}
\|\BPhi_i - (\BZ + \BDelta_i)^{\top}\BZ\BQ\BLambda^{-1}\| 
& \lesssim n^{-1} (\sqrt{n}\|\BPhi - \BZ\BQ\| + \| \BDelta_i^{\top} (\BPhi - \BZ\BQ)\|  ) \\
& \lesssim n^{-1} ( \sqrt{n}\cdot p^{-1}\sqrt{\log(nd)}+ p^{-1}\sqrt{n\log(nd)} \max_{1\leq i\leq n}\|\BPhi_i\|  ) \\
& \lesssim  p^{-1}n^{-\frac{1}{2}}\sqrt{\log(nd)}\max_{1\leq i\leq n}\|\BPhi_i\|  
\end{align*}
where the first term is bounded by using~\eqref{eq:dk1pm}.
\end{proof}

The estimation of $\|\BDelta\|$ uses the matrix Bernstein inequality, see Theorem~\ref{thm:bern} in the appendix.

\begin{lemma}[{\bf {Proof of~\eqref{eq:Deltapm}}}]\label{lem:Deltapm}
The operator norm of $\BDelta$ is bounded by
\[
\|\BDelta - \E\BDelta\| \lesssim p^{-1}\sqrt{(1-p^2)n\cdot \log (nd)} + p^{-1}\log (nd) 
\]
with probability least $1 - O(n^{-1}d^{-1}).$
In particular, if $p \geq C_0 n^{-\frac{1}{2}}\sqrt{\log (nd)}$ for some large constant $C_0$, 
\[
\|\BDelta \| \leq \|\BDelta - \E\BDelta\|  + \|\E \BDelta\|\lesssim p^{-1}\sqrt{n\log(nd)} \leq \frac{n}{2}.
\] 
where $\|\E\BDelta\| = p^{-1}(1-p).$
\end{lemma}
\begin{proof}
Let $\BZ_{ij}$ be an $nd\times nd$ matrix whose $(i,j)$- and $(j,i)$-block equal $\BDelta_{ij}$ and $\BDelta_{ji}$ respectively and all the other blocks are 0, i.e., 
\[
\BZ_{ij} = \be_i\be_j^{\top}\otimes \BDelta_{ij} + \be_j\be_i^{\top}\otimes \BDelta_{ij}^{\top}
\]
is a symmetric matrix
where $\{\be_i\}_{i=1}^n$ are the canonical basis in $\RR^n.$ Here
\begin{align*}
p\BDelta_{ij} =  (X_{ij} - p)(\I_d - d^{-1}\BJ_d)   + (1 - X_{ij}) (\BP_{ij} - d^{-1}\BJ_d).
\end{align*}
Let's first compute its variance: for $i<j$, we have
\[
\BZ_{ij}\BZ_{ij}^{\top} = \be_i\be_i^{\top}\otimes \BDelta_{ij}\BDelta_{ij}^{\top} + \be_j\be_j^{\top}\otimes \BDelta_{ij}^{\top}\BDelta_{ij}.
\]
By using the independence between $X_{ij}$ and $\BP_{ij}$, the expectations of $\BDelta_{ij}\BDelta_{ij}^{\top}$ and $\BDelta_{ij}^{\top}\BDelta_{ij}$ are
\begin{align*}
\E\BDelta_{ij}\BDelta_{ij}^{\top} & = p^{-2}  \left(\E (X_{ij} - p)^2 (\I_d - d^{-1}\BJ_d) + \E (1-X_{ij})^2(\BP_{ij} - d^{-1}\BJ_d) (\BP_{ij} - d^{-1}\BJ_d)^{\top} \right)  \\
& = p^{-2}(1-p^2)(\I_d - d^{-1}\BJ_d), \\
\E\BDelta_{ij}^{\top} \BDelta_{ij} & =  p^{-2}(1-p^2)(\I_d - d^{-1}\BJ_d)
\end{align*}
where $\E \left[(\BP_{ij} - d^{-1}\BJ_d) (\BP_{ij} - d^{-1}\BJ_d)^{\top} \right] =\E \left[(\BP_{ij} - d^{-1}\BJ_d)^{\top}  (\BP_{ij} - d^{-1}\BJ_d)\right]= \I_d - d^{-1}\BJ_d.$
As a result, it holds that
\begin{align*}
\sum_{i<j} \E \BZ_{ij}\BZ_{ij}^{\top} 
& = p^{-2}(1-p^2) \sum_{i<j} (\be_i\be_i^{\top}+\be_j\be_j^{\top})\otimes (\I_d - d^{-1}\BJ_d) \\
& =  p^{-2}(1-p^2) \sum_{i\neq j}\be_i\be_i^{\top}\otimes (\I_d - d^{-1}\BJ_d)  \\
& = p^{-2}(1-p^2) (n-1)\I_n\otimes (\I_d - d^{-1}\BJ_d)
\end{align*}
which implies $\|\sum_{i<j} \E \BZ_{ij}\BZ_{ij}^{\top} \| \leq p^{-2}(1-p^2)n$.

Note that 
\[
\|\BZ_{ij}\|  \leq p^{-1}\| (X_{ij} - p)\I_d + (1 - X_{ij}) \BP_{ij}  \| \leq 2p^{-1}. 
\]
Applying Bernstein's inequality results in
\[
\|\BDelta - \E\BDelta\| \lesssim p^{-1}\sqrt{(1-p^2)n\cdot \log (nd)} + p^{-1}\log (nd).
\]
with probability at least $1 - O(n^{-1}d^{-1}).$
\end{proof}

\begin{lemma}[{\bf {Proof of~\eqref{eq:dk1pm}}}]\label{lem:dk1pm}
If $\BPhi$ consists of  the top $d$ eigenvectors of $\BA$ with $\BPhi^{\top}\BPhi = n\I_d$, then
\[
\| (\I - n^{-1}\BPhi\BPhi^{\top})\BZ\|  \lesssim p^{-1}\sqrt{\log (nd)}
\]
holds under~\eqref{eq:Deltapm}.
The same bound applies to $\BPhi^{(i)}$. Let $\BPhi^{(i)}$ be the eigenvectors associated to the top $d$ eigenvalues of $\BA^{(i)}$, and then 
\[
\| (\I - n^{-1}\BPhi^{(i)}(\BPhi^{(i)})^{\top})\BZ\|  \lesssim p^{-1}\sqrt{\log (nd)}.
\]
Moreover, we have
\[
n - \sigma_{\min}(\BPhi^{\top}\BZ) \lesssim  p^{-1}\sqrt{n\log (nd)}, \quad n - \sigma_{\min}((\BPhi^{(i)})^{\top}\BZ) \lesssim p^{-1}\sqrt{n\log (nd)}
\]
and
\[
\|\BPhi - \BZ\BQ\| \lesssim p^{-1}\sqrt{\log (nd)}, \quad \|\BPhi^{(i)} - \BZ\BQ^{(i)}\| \lesssim p^{-1}\sqrt{\log (nd)}.
\]
\end{lemma}
\begin{proof}[{\bf Proof of Lemma~\ref{lem:dk1pm}}]
We directly apply Davis-Kahan theorem by letting $\BX = \BZ\BZ^{\top}$ and $\BX_{\BE} = \BA$ in Theorem~\ref{thm:dk}. First we specify the spectral gap:
\[
\delta = |\lambda_d(\BA) - \lambda_{d+1}(\BZ\BZ^{\top})| = |\lambda_d(\BA)|\geq n - \|\BDelta\| \geq n/2
\]
where $\lambda_{d+1}(\BX) = 0.$
\begin{align*}
\| (\I - n^{-1}\BPhi\BPhi^{-1})\BZ \| & \leq ( np - \|\BDelta\|)^{-1} \| (\BA - \BZ\BZ^{\top})\BZ\| \\
& \lesssim n^{-1}\| (\BA - \BZ\BZ^{\top})\BZ\| \\
& \lesssim n^{-1}\| \BDelta\| \cdot \sqrt{n} \\ 
& \lesssim p^{-1}\sqrt{\log (nd)}
\end{align*}
where $\|\BDelta\| \leq p^{-1}\sqrt{n\log(nd)}$ and
$\| \BDelta\BZ\| \leq \|\BDelta\| \|\BZ\| = \sqrt{n}\|\BDelta\|.$
For $(\BPhi^{(i)}, \BZ)$, simply using Theorem~\ref{thm:dk} again with $\BA^{(i)}= \BZ\BZ^{\top} + \BDelta^{(i)}$ leads to
\[
\| (\I - n^{-1}\BPhi^{(i)}(\BPhi^{(i)})^{-1})\BZ \|  \leq (n - \|\BDelta^{(i)}\|)^{-1}\| \BDelta^{(i)}\BZ\| \lesssim p^{-1}\sqrt{\log (nd)}
\]
where $\|\BDelta^{(i)}\| \leq \|\BDelta\| \leq p^{-1}\sqrt{n\log(nd)} < n/4.$

\vskip0.25cm

Combining Lemma~\ref{lem:mat} with the bounds above gives
\[
\|\BPhi - \BZ\BQ\| \lesssim p^{-1}\sqrt{\log (nd)}, \quad \|\BPhi^{(i)} - \BZ\BQ^{(i)}\| \lesssim p^{-1}\sqrt{\log (nd)}.
\]
This provides a lower bound for the smallest singular value of $\BZ^{\top}\BPhi$ and $\BZ^{\top}\BPhi^{(i)}$ which follows from
\[
\|\BZ^{\top}\BPhi - n\BQ\| = \|\BZ^{\top}(\BPhi -\BZ\BQ)\| \lesssim \sqrt{n} \|\BPhi - \BZ\BQ\| \lesssim p^{-1}\sqrt{n\log(nd)}
\]
and the orthogonality of $\BQ.$
\end{proof}

Now we proceed to prove~\eqref{eq:dk2pm}-\eqref{eq:corr2pm} which rely on the following lemma.
\begin{lemma}\label{lem:corrpm}
Let $\BM\in\RR^{nd\times d}$ be a matrix with its $j$th block $\BM_j$, independent of $\BDelta_i$. For each fixed $1\leq i\leq n$, it holds that
\[
\|\BDelta_i^{\top}\BM\|= \left\|\sum_{j=1}^n \BDelta_{ij} \BM_j\right\| \lesssim p^{-1}\sqrt{n\log (nd)} \max_{1\leq j\leq n}\|\BM_j\|
\]
with probability at least $1 - O(n^{-2}d^{-2}).$ 
\end{lemma}

\begin{proof}[{\bf Proof of Lemma~\ref{lem:corrpm}}]
Recall that
\[
\BDelta_{ij} = 
\begin{cases}
-p^{-1}(1-p)d^{-1}\BJ_d, & \text{ if }i = j, \\
p^{-1}\left( (X_{ij} - p) (\I_d - d^{-1}\BJ_d) + (1-X_{ij})(\BP_{ij} - d^{-1}\BJ_d)\right), & \text{ if } i\neq j,
\end{cases}
\]
with $\E \BDelta_{ij} = 0$ for $i\neq j$ and $\BDelta_{ii} = -p^{-1}(1-p)d^{-1}\BJ_d.$ It holds that
\[
\left\|\BDelta_{i}^{\top}\BM\right\| \leq \left\|\sum_{j\neq i} \BDelta_{ij}\BM_j\right\| + \|\BDelta_{ii}\BM_i\| \leq \left\|\sum_{j\neq i} \BDelta_{ij}\BM_j\right\|  + p^{-1}(1-p)\|\BM_i\|.
\]
Now we apply the Bernstein inequality (Theorem~\ref{thm:bern}) to estimate the first term above which is a sum of mean zero independent random matrices.

We first compute $\E \sum_{j\neq i} (\BDelta_{ij}\BM_j)^{\top} (\BDelta_{ij}\BM_j)$. For each $j$, we have
\begin{align*}
\E (\BDelta_{ij}\BM_j)^{\top} (\BDelta_{ij}\BM_j) & = p^{-2} (1-p^2) \BM_j^{\top}(\I_d - d^{-1}\BJ_d)  \BM_j
\end{align*}
where 
\begin{align*}
\E (\BDelta_{ij}^{\top}\BDelta_{ij} ) & = p^{-2}\left(  \E (X_{ij} - p)^2 (\I_d - d^{-1}\BJ_d) +  \E(1-X_{ij})^2 (\BP_{ij} - d^{-1}\BJ_d)(\BP_{ij}^{\top} - d^{-1}\BJ_d) \right) \\
& = p^{-2} (1-p^2) (\I_d - d^{-1}\BJ_d)
\end{align*}
because the cross terms of $\BDelta_{ij}^{\top}\BDelta_{ij}$ are of mean zero and $\E\BP_{ij} = d^{-1}\BJ_d.$
Therefore,
\begin{align*}
\left\| \sum_{j\neq i}\E (\BDelta_{ij}\BM_j)^{\top} (\BDelta_{ij}\BM_j) \right\| 
& \leq p^{-2}(1 - p^2) \left\| \sum_{j\neq i} \BM_j^{\top}(\I_d - d^{-1}\BJ_d)  \BM_j\right\| \\
& \leq p^{-2}(1 - p^2) \sum_{j\neq i} \|\BM_j\|^2 \\
& \leq  p^{-2}(1 - p^2) n \max_{1\leq j\leq n} \|\BM_j\|^2.
\end{align*}

For $\E \sum_{j\neq i} (\BDelta_{ij}\BM_j)(\BDelta_{ij}\BM_j)^{\top} $, we first compute $\E \BDelta_{ij}\BM_j(\BDelta_{ij}\BM_j)^{\top}$:
\begin{align*}
p^2\E (\BDelta_{ij}\BM_j)(\BDelta_{ij}\BM_j)^{\top} & = \E(X_{ij} - p)^2 (\I_d - d^{-1}\BJ_d)\BM_j\BM_j^{\top}(\I_d - d^{-1}\BJ_d)  \\
& \qquad + \E(1- X_{ij})^2 (\BP_{ij} - d^{-1}\BJ_d)\BM_j\BM_j^{\top}(\BP_{ij}^{\top} - d^{-1}\BJ_d) \\
& \preceq \E (X_{ij} - p)^2 \|\BM_j\|^2\I_d + 4 \E(1- X_{ij})^2  \|\BM_j\|^2\I_d \\
& \preceq 4(1-p^2)\| \BM_j\|^2 \I_d.
\end{align*}
Therefore, 
\[
\left\| \sum_{j\neq i}  \E (\BDelta_{ij}\BM_j)(\BDelta_{ij}\BM_j)^{\top} \right\| \leq 4p^{-2}(1-p^2)  \sum_{j\neq i}\left\| \BM_j\right\|^2 \leq 4 p^{-2}(1-p^2)n\max_{1\leq j\leq n}\|\BM_j\|^2
\]
and the variance of $\sum_{j\neq i} \BDelta_{ij}\BM_j$ is bounded by
\begin{align*}
\sigma^2\left( \sum_{j\neq i} \BDelta_{ij}\BM_j \right) 
& \leq 4p^{-2}(1-p^2)n\max_{1\leq j\leq n}\|\BM_j\|^2.
\end{align*}
Each term $\BDelta_{ij}\BM_j$ is bounded by
\[
\|\BDelta_{ij}\BM_j\| \leq \|\BDelta_{ij}\|\|\BM_j\| \leq 2p^{-1}\|\BM_j\|  \leq 2p^{-1}\max_{1\leq j\leq n}\|\BM_j\|, \quad \forall 1\leq j\leq n
\]
where $\|\BDelta_{ij}\| \leq p^{-1} \| (X_{ij} - p)\I_d + (1-X_{ij})\BP_{ij}\|\leq 2p^{-1}$. 
Now applying Bernstein inequality gives
\begin{align*}
\left\|\sum_{j\neq i} \BDelta_{ij}\BM_j \right\| &  \lesssim p^{-1}\sqrt{(1-p^2)n\log (nd)} \max_{1\leq j\leq n}\|\BM_j\| + p^{-1}\log (nd)\max_{1\leq j\leq n}\|\BM_j\| \\
& \lesssim p^{-1}\sqrt{n\log (nd)} \max_{1\leq j\leq n}\|\BM_j\|.
\end{align*}
\end{proof}

\begin{lemma}[{\bf {Proof of~\eqref{eq:dk2pm}}}]\label{lem:dk2pm}
Let $\BPhi$ and $\BPhi^{(i)}$ be the top $d$ eigenvectors of $\BA$ and $\BA^{(i)}$ with $\BPhi^{\top}\BPhi  = (\BPhi^{(i)})^{\top}\BPhi^{(i)}  = n\I_d$ respectively. Then 
\begin{align*}
\|\BPhi - \BPhi^{(i)}\BS^{(i)}\|  \leq 2\| ( \I - n^{-1} \BPhi \BPhi^{\top} )\BPhi^{(i)}\| \lesssim p^{-1} n^{-\frac{1}{2}}  \sqrt{\log (nd)} \max_{1\leq i\leq n} \|  \BPhi_i \|.
\end{align*}
In particular, we have
\[
0\leq n - \sigma_{\min}\left( \BPhi^{\top} \BPhi^{(i)} \right) \lesssim p^{-1}   \sqrt{\log (nd)} \max_{1\leq i\leq n} \|  \BPhi_i \|
\]
and
\[
\| (\BPhi^{\top} \BPhi^{(i)} (\BPhi^{(i)})^{\top}\BPhi)^{\frac{1}{2}} - n\I_d \| \lesssim p^{-1}   \sqrt{\log (nd)} \max_{1\leq i\leq n} \|  \BPhi_i \|.
\]
\end{lemma}
\begin{proof}[{\bf Proof of Lemma~\ref{lem:dk2pm}}]
The $d$th largest eigenvalue of $\BA$ is at least $n - \|\BDelta\|$ and the $(d+1)$th largest eigenvalue of $\BA^{(i)}$ is at most $\|\BDelta\|.$ Thus we have $\delta\geq n-2\|\BDelta\|$ and
\begin{align*}
\| ( \I - n^{-1} \BPhi \BPhi^{\top} )\BPhi^{(i)}\| \lesssim n^{-1} \| (\BA - \BA^{(i)})\BPhi^{(i)} \| =  n^{-1} \| (\BDelta - \BDelta^{(i)})\BPhi^{(i)}\|
\end{align*}
where $\BA - \BA^{(i)} = \BDelta - \BDelta^{(i)}$ holds.
\[
[ (\BDelta - \BDelta^{(i)})\BPhi^{(i)}]_{\ell} = 
\begin{cases}
\BDelta_{\ell i}\BPhi_i^{(i)}, \quad & \text{ if } \ell\neq i, \\
\sum_{k=1}^n \BDelta_{ik}\BPhi_k^{(i)}, \quad & \text{ if } \ell = i.
\end{cases}
\]
Then it holds with probability at least $1 - O(n^{-1}d^{-2})$ that
\begin{align*}
\|(\BDelta - \BDelta^{(i)})\BPhi^{(i)} \| & \leq \| \BDelta_i \BPhi_i^{(i)} \| + \|\BDelta_i^{\top} \BPhi^{(i)}\| \\
& \lesssim \|\BDelta_i\| \cdot \| \BPhi_i^{(i)} \| + p^{-1}\sqrt{n\log (nd)} \max_{1\leq j\leq n}\|\BPhi^{(i)}_j \| \\
& \lesssim p^{-1}\sqrt{n\log (nd)} \max_{1\leq j\leq n} \|  \BPhi_j^{(i)} \|, \quad \forall 1\leq i\leq n
\end{align*}
where $\|\BDelta_i\| \lesssim p^{-1}\sqrt{n\log(nd)}$ uses Lemma~\ref{lem:Deltapm} and $\| \BDelta_i^{\top}\BPhi^{(i)} \| \leq p^{-1}\sqrt{n\log(nd)}\max_{1\leq j\leq n}\|\BPhi_j^{(i)}\|$ follows from the independence between $\BDelta_i$ and $\BPhi^{(i)}$ and Lemma~\ref{lem:corrpm}.

Thus Lemma~\ref{lem:mat} implies that
\begin{align*}
\|\BPhi - \BPhi^{(i)}\BS^{(i)} \| & \leq 2 \| (\I_{nd} - n^{-1}\BPhi\BPhi^{\top})\BPhi^{(i)} \|  \\
&\lesssim n^{-1} \| (\BDelta - \BDelta^{(i)})\BPhi^{(i)}\| \\
&  \lesssim p^{-1}  n^{-\frac{1}{2}} \sqrt{\log (nd)} \max_{1\leq j\leq n} \|  \BPhi_j^{(i)} \|
\end{align*}
holds uniformly for all $1\leq i\leq n$ with probability at least $1 - O(n^{-1}d^{-2})$.
Next we will show that $\max_{1\leq j\leq n} \|  \BPhi_j^{(i)} \| \lesssim \max_{1\leq i\leq n}\|\BPhi_i\|.$
Let $j'$ be the index such that $\|\BPhi_{j'}^{(i)}\| = \max_{1\leq j\leq n}\|\BPhi_j^{(i)}\|$, then applying triangle inequality gives
\[
\| \BPhi^{(i)}_{j'} \| -\| \BPhi_{j'}\| \leq \|\BPhi_{j'} -  \BPhi^{(i)}_{j'} \BS^{(i)}\| \lesssim
p^{-1}n^{-\frac{1}{2}} \sqrt{\log (nd)} \max_{1\leq j\leq n} \|  \BPhi_j^{(i)} \|.
\]
This implies that if $p > C_0n^{-\frac{1}{2}}\sqrt{\log(nd)}$ with a sufficiently large constant $C_0$, then
\begin{equation}\label{eq:Phij}
\max_{1\leq j\leq n}\|\BPhi^{(i)}_j\| \lesssim \|\BPhi_{j'}\| \leq \max_{1\leq j\leq n}\|\BPhi_j\|, \quad \forall 1\leq i\leq n.
\end{equation}
In other words, it holds
\[
\|\BPhi - \BPhi^{(i)}\BS^{(i)} \|  \lesssim p^{-1} n^{-\frac{1}{2}} \sqrt{\log (nd)} \max_{1\leq i\leq n} \|  \BPhi_i \|.
\]
The lower bound for the smallest singular value of $\BPhi^{\top}\BPhi^{(i)}$ directly follows from
\[
\|(\BPhi^{(i)})^{\top}\BPhi - n\BS^{(i)} \|  \leq \sqrt{n}\|\BPhi - \BPhi^{(i)}\BS^{(i)} \| \leq p^{-1}  \sqrt{\log (nd)} \max_{1\leq i\leq n} \|  \BPhi_i \|
\]
and $(\BPhi^{(i)})^{\top}\BPhi^{(i)} = n\I_d.$
\end{proof}

\begin{lemma}[{\bf {Proof of~\eqref{eq:QSpm}}}]\label{lem:QSpm}
Under~\eqref{eq:dk1pm} and~\eqref{eq:dk2pm}, the three orthogonal matrices $(\BQ,\BQ^{(i)},\BS^{(i)})$ defined in~\eqref{def:QS} satisfy
\[
\|\BQ - \BQ^{(i)}\BS^{(i)}\| \lesssim p^{-1}n^{-1}  \sqrt{\log (nd)} \max_{1\leq j\leq n} \|  \BPhi_j \|, \quad \forall 1\leq i\leq n.
\]
\end{lemma}
\begin{proof}[{\bf Proof of Lemma~\ref{lem:QSpm}}]
The proof is exactly the same as that of Lemma~\ref{lem:QS} except under a different setting. For the completeness of presentation, we still provide the proof here.
An important observation is that
\[
\BQ - \BQ^{(i)}\BS^{(i)} = \PP(\BZ^{\top}\BPhi) - \PP(\BZ^{\top}\BPhi^{(i)}\BS^{(i)})
\]
since $\PP(\BZ^{\top}\BPhi^{(i)}\BS^{(i)}) = \BQ^{(i)}\BS^{(i)}$ and $\PP(\BZ^{\top}\BPhi^{(i)}) = \BQ^{(i)}.$
Note that $n - \sigma_{\min}(\BZ^{\top}\BPhi) \lesssim p^{-1}\sqrt{n\log(nd)}$ which follows from Lemma~\ref{lem:dk1pm}. This means $\sigma_{\min}(\BZ^{\top}\BPhi) \geq n/2$ if $p > C_0 n^{-\frac{1}{2}}\sqrt{\log(nd)}$ for a sufficiently large constant $C_0.$
An upper bound of $\|\BQ - \BQ^{(i)}\BS^{(i)}\|$ can be found by applying Lemma~\ref{lem:lip}:
\begin{align*}
\| \PP(\BZ^{\top}\BPhi) - \PP(\BZ^{\top}\BPhi^{(i)}\BS^{(i)}) \| & \leq 2 \sigma^{-1}_{\min}(\BZ^{\top}\BPhi)  \cdot \| \BZ^{\top}(\BPhi - \BPhi^{(i)}\BS^{(i)}) \| \\
& \lesssim n^{-1} \cdot \sqrt{n} \cdot \|\BPhi - \BPhi^{(i)}\BS^{(i)}\| \\
& \lesssim  n^{-1} \cdot \sqrt{n} \cdot  p^{-1} n^{-\frac{1}{2}} \cdot  \sqrt{\log (nd)} \max_{1\leq j\leq n} \|  \BPhi_j \| \\
& \lesssim p^{-1}n^{-1}  \sqrt{\log (nd)} \max_{1\leq j\leq n} \|  \BPhi_j \|
\end{align*}
where 
$\| \BPhi - \BPhi^{(i)}\BS^{(i)} \|\lesssim  p^{-1} n^{-\frac{1}{2}}  \sqrt{\log (nd)} \max_{1\leq i\leq n} \|  \BPhi_i \|$ is given in Lemma~\ref{lem:dk2pm}.
\end{proof}
{
\begin{corollary}[{\bf Proof of~\eqref{eq:corr1pm} and~\eqref{eq:corr2pm}}]\label{cor:corrpm_full}

With probability at least $1-O(n^{-1}d^{-2})$, 
\[
\|\BDelta_i^{\top}(\BPhi^{(i)} - \BZ\BQ^{(i)})\| \lesssim p^{-1}\sqrt{n\log (nd)} \max_{1\leq i\leq n}\| \BPhi_i \|
\]
and
\[
\|\BDelta_i^{\top}\BZ\| \lesssim p^{-1}\sqrt{n\log (nd)}
\]
hold uniformly for all $1\leq i\leq n$.

\end{corollary}}
\begin{proof}

The proof of~\eqref{eq:corr1pm} and~\eqref{eq:corr2pm} directly follows from Lemma~\ref{lem:corrpm}. 
For~\eqref{eq:corr1pm}, we let $\BM = \BPhi^{(i)} - \BZ\BQ^{(i)}$ in Lemma~\ref{lem:corrpm}. Note that $\BPhi^{(i)}$ is the top $d$ eigenvectors of $\BA^{(i)}$ which is independent of $\BDelta_i$. Thus we can apply the concentration bound above and the following holds with probability at least $1 - O(n^{-1})$
\[
\|\BDelta_i^{\top}(\BPhi^{(i)} - \BZ\BQ^{(i)})\| \lesssim p^{-1}\sqrt{n\log (nd)} \max_{1\leq j\leq n} \|\BPhi^{(i)}_j - \BQ^{(i)}\|.
\]
In the proof of Lemma~\ref{lem:dk2pm}, we have~\eqref{eq:Phij}, i.e., $\max_{1\leq j\leq n}\|\BPhi_j ^{(i)}\|\lesssim \max_{1\leq j\leq n}\|\BPhi_j\|$. As a result, we have $\|\BPhi^{(i)}_j - \BQ^{(i)}\| \leq \|\BPhi^{(i)}_j \| + 1\lesssim 2\max_{1\leq j\leq n}\|\BPhi_j\|$ and thus
\[
\|\BDelta_i^{\top}(\BPhi^{(i)} - \BZ\BQ^{(i)})\| \lesssim p^{-1}\sqrt{n\log (nd)} \max_{1\leq j\leq n} \|\BPhi_j\|, \quad 1\leq i\leq n.
\]

It is easier to show~\eqref{eq:corr2pm} holds with probability at least $1 - O(n^{-1})$ by simply choosing $\BM = \BZ$, i.e., $\BM_j = \I_d$, and taking the union bound over $1\leq i\leq n$. 
\end{proof}

\section{Conclusion}

To conclude this work, we discuss a few future directions beyond our current results. 
Our model assumes that the underlying network is complete, i.e., all the pairwise measurements among these group elements are taken. However, the network is usually very sparse in practice, especially in computer vision and imaging sciences. Therefore, for the group synchronization on general networks, it would be very interesting to analyze the spectral methods based on the (normalized) connection Laplacian~\cite{BSS13,SLL20} associated to $\BA_G$ or to study the cycle-edge message passing type algorithm~\cite{LS19}. For the spectral methods based on connection Laplacian, we would encounter new technical difficulties in deriving the blockwise error bound for the bottom eigenvectors of the corresponding (normalized) connection Laplacian. This is because the columns/rows of the connection Laplacian are no longer block-wisely independent, which is crucial in the current theoretical framework. The similar technical issue would also appear when we deal with non-uniform noise scenario.
Another possible direction is extending the leave-one-out technique to the synchronization problem over non-compact groups, for example, the additive group over the real line~\cite{DCT21} and the special Euclidean group~\cite{ARF16,CLS12,RCBL19,LHBC19} under certain statistical models. We leave all these topics to the future work.

\section*{Appendix: important technical ingredients}\label{s:app}
We list all the necessary supporting results in this section.

\begin{theorem}[{\bf Weyl's inequality~\cite{S98}}]\label{thm:w}
For two matrices $\BX$ and $\BY$ of the same size, it holds
\[
|\sigma_{\ell}(\BX) - \sigma_{\ell}(\BY)| \leq \|\BX - \BY\|, \quad \forall \ell
\]
where $\sigma_{\ell}(\cdot)$ denotes the $\ell$th largest singular value of a matrix.
\end{theorem}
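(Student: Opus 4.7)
The plan is to establish Weyl's inequality via the Courant--Fischer min-max variational characterization of singular values. For any matrix $\BM\in\RR^{m\times n}$ and any index $\ell\in\{1,\dots,\min(m,n)\}$, I would invoke the representation
\[
\sigma_{\ell}(\BM) \;=\; \max_{\substack{V \subset \RR^{n}\\ \dim V = \ell}} \; \min_{\substack{v \in V\\ \|v\| = 1}} \|\BM v\|,
\]
which follows by applying the classical Courant--Fischer theorem to the positive semidefinite matrix $\BM^{\top}\BM$, whose $\ell$th largest eigenvalue equals $\sigma_{\ell}(\BM)^{2}$.

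With this formula in hand, the inequality collapses to a one-line triangle-inequality argument. For every unit vector $v$,
\[
\|\BX v\| \;\leq\; \|\BY v\| + \|(\BX - \BY) v\| \;\leq\; \|\BY v\| + \|\BX - \BY\|.
\]
First I would fix an arbitrary $\ell$-dimensional subspace $V\subset\RR^{n}$ and take the minimum of both sides over unit vectors $v\in V$; then I would take the maximum over all such subspaces $V$. By the variational formula, the left-hand side becomes $\sigma_{\ell}(\BX)$ and the first term on the right becomes $\sigma_{\ell}(\BY)$, which yields $\sigma_{\ell}(\BX) - \sigma_{\ell}(\BY) \leq \|\BX - \BY\|$. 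Swapping the roles of $\BX$ and $\BY$ gives the reverse bound, and combining the two produces the absolute-value inequality claimed.

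There is really no substantive obstacle in this argument; the only point requiring care is justifying the min-max formula in the rectangular case, which is entirely standard. As an alternative that avoids even this, one can form the Hermitian dilation
\[
\widetilde{\BM} \;=\; \begin{bmatrix} 0 & \BM \\ \BM^{\top} & 0 \end{bmatrix},
\]
whose nonzero eigenvalues are exactly $\pm\sigma_{\ell}(\BM)$, and then quote the Hermitian version of Weyl's perturbation inequality $|\lambda_{\ell}(\BA)-\lambda_{\ell}(\BB)|\leq \|\BA-\BB\|$ applied to the dilations of $\BX$ and $\BY$. Since $\|\widetilde{\BX}-\widetilde{\BY}\| = \|\BX-\BY\|$, the desired bound drops out immediately.
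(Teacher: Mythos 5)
Your proof is correct. Note that the paper does not actually prove this statement---it cites it as a classical result from Stewart's \emph{Perturbation theory for the singular value decomposition} and moves on---so there is no in-paper argument to compare against. Both routes you sketch are standard and valid: the Courant--Fischer argument applied to $\BM^{\top}\BM$ combined with the triangle inequality is airtight (the one subtlety, that $\min_{v}\|\BX v\|\leq\min_{v}\|\BY v\|+\|\BX-\BY\|$ over a fixed subspace, is handled correctly by evaluating the first minimum at the minimizer of the second), and the Hermitian dilation shortcut is the quickest way to inherit the result from the eigenvalue version of Weyl once one checks that $\|\widetilde{\BX}-\widetilde{\BY}\|=\|\BX-\BY\|$, which holds because the dilation of $\BX-\BY$ has the same singular values as $\BX-\BY$ (each with multiplicity two).
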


\begin{theorem}[\bf Davis-Kahan theorem~\cite{DK70}]\label{thm:dk}
Let $\BX$ and $\BX_E = \BX + \BDelta$ be two symmetric matrices. Suppose $\BPsi_1$ and $\BPsi_{1,\BDelta}$ are the top $d$ eigenvectors of $\BX$ and $\BX_{\BDelta}$ respectively. 
\[
\BX = 
\begin{bmatrix}
\BPsi_1 & \BPsi_2
\end{bmatrix}
\begin{bmatrix}
\BLambda_1 & 0 \\
0 & \BLambda_2
\end{bmatrix}
\begin{bmatrix}
\BPsi_1 & \BPsi_2
\end{bmatrix}^{\top}, ~
\BX_{\BDelta} = 
\begin{bmatrix}
\BPsi_{1,\BDelta} & \BPsi_{2,\BDelta}
\end{bmatrix}
\begin{bmatrix}
\BLambda_{1,\BDelta} & 0 \\
0 & \BLambda_{2,\BDelta}
\end{bmatrix}
\begin{bmatrix}
\BPsi_{1,\BDelta} & \BPsi_{2,\BDelta}
\end{bmatrix}^{\top}
\]
where the columns of $\BPsi_{k}$ and $\BPsi_{k,\BDelta}$ are normalized for $k=1,2$, and $\BLambda_{k}$ and $\BLambda_{k,\BDelta}$ are diagonal matrices with the corresponding eigenvalues. Then it holds that
\[
\| (\I - \BPsi_{1,\BDelta}\BPsi_{1,\BDelta}^{\top})\BPsi_1 \|\leq \frac{ \| \BDelta \BPsi_1 \|}{\delta}
\]
where $\delta$ denotes the spectral gap between $\BLambda_{1,\BDelta}$ and  $\BLambda_2$, i.e., 
$\delta = |\lambda_{\min}(\BLambda_{1,\BDelta}) - \lambda_{\max}(\BLambda_2) |$.
\end{theorem}
Theorem~\ref{thm:dk} is a classical result in matrix perturbation theory.

\begin{lemma}\label{lem:mat}
Suppose $\BX$ and $\BY$ are two tall orthogonal matrices of the same size $n\times r$, i.e., $\BX^{\top}\BX = \BY^{\top}\BY = \I_r$, then
\begin{equation*}
\|\BY - \BX\BR\| \leq 2 \| (\I_n - \BX\BX^{\top})\BY \|, 
\end{equation*}
where 
$\BR = \PP(\BX^{\top}\BY).$


\end{lemma}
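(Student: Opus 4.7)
The plan is to decompose $\BY$ into its component in the column space of $\BX$ and its orthogonal component, and then separately control the angular mismatch between $\BX^\top\BY$ and the orthogonal factor $\BR$. The natural starting point is the identity $\BY = \BX(\BX^\top\BY) + (\I_n - \BX\BX^\top)\BY$, which upon subtracting $\BX\BR$ gives
\begin{equation*}
\BY - \BX\BR \;=\; \BX\bigl(\BX^\top\BY - \BR\bigr) + (\I_n - \BX\BX^\top)\BY.
\end{equation*}
Since $\BX$ is an isometry on its column space, the triangle inequality yields
\begin{equation*}
\|\BY - \BX\BR\| \;\leq\; \|\BX^\top\BY - \BR\| + \|(\I_n - \BX\BX^\top)\BY\|,
\end{equation*}
so it remains to bound $\|\BX^\top\BY - \BR\|$ by $\|(\I_n - \BX\BX^\top)\BY\|$.

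The next step is to invoke the SVD $\BX^\top\BY = \BU\BSigma\BV^\top$, from which $\BR = \PP(\BX^\top\BY) = \BU\BV^\top$. Then
\begin{equation*}
\|\BX^\top\BY - \BR\| \;=\; \|\BU(\BSigma - \I_r)\BV^\top\| \;=\; \|\BSigma - \I_r\| \;=\; 1 - \sigma_{\min}(\BX^\top\BY),
\end{equation*}
where the last equality uses that every singular value of $\BX^\top\BY$ lies in $[0,1]$ because $\BX$ and $\BY$ have orthonormal columns (so $\|\BX^\top\BY\| \leq 1$).

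The key identity connecting this to the orthogonal residual is the projection formula
\begin{equation*}
\|(\I_n - \BX\BX^\top)\BY\|^2 \;=\; \|\BY^\top(\I_n - \BX\BX^\top)\BY\| \;=\; \|\I_r - (\BX^\top\BY)^\top(\BX^\top\BY)\| \;=\; 1 - \sigma_{\min}^2(\BX^\top\BY),
\end{equation*}
using $\|\BA\|^2 = \|\BA^\top\BA\|$ and the idempotence of $\I_n - \BX\BX^\top$. Factoring $1 - \sigma_{\min}^2 = (1-\sigma_{\min})(1+\sigma_{\min}) \geq 1 - \sigma_{\min}$ gives $\|\BX^\top\BY - \BR\| \leq \|(\I_n - \BX\BX^\top)\BY\|^2$. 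Finally, since $\|(\I_n - \BX\BX^\top)\BY\| \leq \|\BY\| = 1$, we can drop the square at the cost of nothing, and the two-term triangle bound collapses to $2\|(\I_n - \BX\BX^\top)\BY\|$, as required.

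There is no real obstacle here; the only mildly subtle point is recognizing that $\|(\I_n - \BX\BX^\top)\BY\|^2 = 1 - \sigma_{\min}^2(\BX^\top\BY)$ so that the orthogonal residual controls $1 - \sigma_{\min}(\BX^\top\BY)$ through an elementary factorization, together with the observation that this residual is automatically bounded by $1$ so the squared term may be upper-bounded by the unsquared term.
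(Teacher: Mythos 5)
Your proof is correct and follows essentially the same route as the paper's: the same decomposition $\BY - \BX\BR = \BX(\BX^\top\BY - \BR) + (\I_n - \BX\BX^\top)\BY$, the same SVD argument reducing the first term to $1-\sigma_{\min}(\BX^\top\BY)$, and the same comparison with $1-\sigma_{\min}^2$. The only minor cosmetic difference is that you use the exact identity $\|(\I_n-\BX\BX^\top)\BY\|^2 = 1-\sigma_{\min}^2(\BX^\top\BY)$ and then drop the square, whereas the paper reaches $1-\sigma_{\min}^2 \leq \|(\I_n-\BX\BX^\top)\BY\|$ directly by submultiplicativity with $\|\BY^\top\|=1$; both yield the identical bound.
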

\begin{proof}
Suppose $\BU\BSigma\BV^{\top}$ is the SVD of $\BX^{\top}\BY.$ Then $\BR = \BU\BV^{\top}\in\RR^{r\times r}$ is orthogonal.
\begin{align*}
\BY - \BX\BR = (\I_n - \BX\BX^{\top})\BY + \BX(\BX^{\top}\BY - \BU\BV^{\top})
\end{align*}
Taking the operator norm yields
\begin{align*}
\|\BY - \BX\BR \| & \leq \|(\I_n - \BX \BX^{\top})\BY\| 
+ \| \BX (\BU\BSigma\BV^{\top} -\BU\BV^{\top}) \|  \\
& \leq \|(\I_n - \BX \BX^{\top})\BY\| + \|\BSigma - \I_r\|
\end{align*}
For $\|\BSigma - \I_r\|$, it suffices to find a lower bound for the smallest singular value of $\BX^{\top}\BY$ since
\[
\|\BSigma - \I_r\| \leq 1 - \sigma_{\min}(\BX^{\top}\BY)
\]
and all the singular values of $\BX^{\top}\BY$ are no larger than 1. Note that
\[
 1 - \sigma_{\min}^2(\BX^{\top}\BY) = \| \I_r - \BY^{\top}\BX\BX^{\top}\BY\| \leq\|\BY^{\top}(\I_n - \BX\BX^{\top})\BY\|\leq \| (\I_n - \BX\BX^{\top})\BY\|
\]
which implies 
\[
1 - \sigma_{\min}(\BX^{\top}\BY)\leq 1-\sigma_{\min}^2(\BX^{\top}\BY) \leq \| (\I_n - \BX\BX^{\top})\BY\| 
\]
where $0\leq \sigma_{\min}(\BX^{\top}\BY)\leq 1.$
As a result, $\|\BY - \BX\BR\| \leq 2\|(\I_n - \BX\BX^{\top})\BY\|$ holds.
\end{proof}

\begin{proof}[{\bf Proof of Lemma~\ref{lem:lip}}]
Let $\BX = \BU_{\BX}\BSigma_{\BX}\BV_{\BX}^{\top}\in\RR^{d\times d}$ and $\BY = \BU_{\BY}\BSigma_{\BY}\BV_{\BY}^{\top}\in\RR^{d\times d}$ be the SVD of $\BX$ and $\BY$ respectively. Here $\BSigma_{\BX}$ is a $d\times d$ PSD (positive semidefinite) matrix which consists of the singular values of $\BX$ and the same applies to $\BSigma_{\BY}.$
Note that the goal here is to estimate the difference between $\BU_{\BX}\BV_{\BX}^{\top} - \BU_{\BY}\BV_{\BY}^{\top}$ and it suffices to bound the difference between $\BU_{\BX}$ and $\BU_{\BY}$, and that between $\BV_{\BX}$ and $\BV_{\BY}$. We will apply the Davis-Kahan theorem to obtain such an upper bound by considering the augmented matrix.
Define the augmented matrix of $\BX$ and $\BY:$
\[
\widetilde{\BX} =
\begin{bmatrix}
0 & \BX \\
\BX^{\top} & 0
\end{bmatrix}, \quad 
\widetilde{\BY} =
\begin{bmatrix}
0 & \BY \\
\BY^{\top} & 0
\end{bmatrix}
\]
and 
\[
\BM_{\BX} :=\frac{1}{\sqrt{2}}
\begin{bmatrix}
\BU_X \\
\BV_X
\end{bmatrix}, \quad
\BM_{\BY} := \frac{1}{\sqrt{2}}
\begin{bmatrix}
\BU_Y \\
\BV_Y
\end{bmatrix}.
\]
It is well-known in linear algebra that $\BM_{\BX}$ and $\BM_{\BY}$ are the normalized eigenvectors of $\widetilde{\BX}$ and $\widetilde{\BY}$ and the corresponding eigenvalues are the singular values of $\BX$ and $\BY$ respectively: $\widetilde{\BX}\BM_{\BX} = \BM_{\BX}\BSigma_{\BX}.$ The other bottom $d$ nonzero eigenvalues of $\widetilde{\BX}$ and $\widetilde{\BY}$ are given by the negative singular values of $\BX$ and $\BY$ respectively. 
Applying the Davis-Kahan theorem (Theorem~\ref{thm:dk}) gives
\begin{align*}
\| (\I_{2d} - \BM_{\BX}\BM_{\BX}^{\top})\BM_{\BY}\|
& \leq \frac{1}{\lambda_{d}(\widetilde{\BX})- \lambda_{d+1}(\widetilde{\BY})} \cdot\| (\widetilde{\BX} - \widetilde{\BY})\BM_{\BY} \| \\
& \leq  \frac{1}{\sigma_{\min}(\BX)+ \sigma_{\min}(\BY)} \cdot \|\BX - \BY\|\|\BM_{\BY}\| \\
& \leq  \frac{ \|\BX - \BY\|}{\sigma_{\min}(\BX)+ \sigma_{\min}(\BY)}
\end{align*}
where the spectral gap $\lambda_{d}(\widetilde{\BX})- \lambda_{d+1}(\widetilde{\BY})$ equals $\sigma_{\min}(\BX)+\sigma_{\min}(\BY)$ since all the eigenvalues of $\widetilde{\BX}$ and $\widetilde{\BY}$ are $\{\pm \sigma_{\ell}(\BX)\}_{\ell=1}^{d}$ and $\{\pm \sigma_{\ell}(\BY)\}_{\ell=1}^{d}$ respectively. Using the definition of $\BM_X$ and $\BM_Y$, we have
\begin{align*}
(\I_{2d} - \BM_{\BX}\BM_{\BX}^{\top})\BM_{\BY} & = \frac{1}{2\sqrt{2}}
\begin{bmatrix}
\I_d & -\BU_X\BV_X^{\top} \\
-\BV_X\BU_X^{\top} & \I_d
\end{bmatrix}  
\begin{bmatrix}
\BU_Y \\
\BV_Y
\end{bmatrix} \\
& = \frac{1}{2\sqrt{2}}
\begin{bmatrix}
\BU_Y  -\BU_X\BV_X^{\top}\BV_Y \\
\BV_Y -\BV_X\BU_X^{\top} \BU_Y
\end{bmatrix}  \\
& = \frac{1}{2\sqrt{2}} 
\begin{bmatrix}
-\I_d & 0\\
0 & \BV_X\BU_X^{\top}
\end{bmatrix}
\begin{bmatrix}
\BU_X\BV_X^{\top} - \BU_Y\BV_Y^{\top} \\
\BU_X\BV_X^{\top} - \BU_Y\BV_Y^{\top}
\end{bmatrix}
\BV_Y
\end{align*}
Note that $\BU_X,\BV_X,\BU_Y,$ and $\BV_Y$ are all orthogonal. Thus
\[
\|(\I_{2d} - \BM_{\BX}\BM_{\BX}^{\top})\BM_{\BY}\| = \frac{1}{2}\| \BU_X\BV_X^{\top} - \BU_Y\BV_Y^{\top}\| = \frac{1}{2}\| \PP(\BX) - \PP(\BY) \|.
\]

As a result, we have
\[
\| \PP(\BX) - \PP(\BY) \| = 2\|(\I_{2d} - \BM_{\BX}\BM_{\BX}^{\top})\BM_{\BY}\|   \leq \frac{2\|\BX-\BY\|}{\sigma_{\min}(\BX)+\sigma_{\min}(\BY)}.
\]

\end{proof}

\begin{theorem}[Matrix Bernstein~\cite{T12}]\label{thm:bern}
Consider a finite sequence $\{\BZ_k\}$ of independent random matrices. Assume that each random matrix satisfies 
\[
\E(\BZ_k) = 0, \quad \| \BZ_k\| \leq R
\]
Then for all $t \geq 0$,
\[
\Pr\left( \left\| \sum_k \BZ_k \right\| \geq t\right) \leq (d_1+d_2)\cdot \exp\left( -\frac{t^2/2}{\sigma^2 + Rt/3} \right)
\]
where
\[
\sigma^2 =\max\left\{ \left\|  \sum_{k} \E \BZ_k^{\top}\BZ_k \right\|, \left\|  \sum_{k} \E \BZ_k\BZ_k^{\top} \right\|\right\}.
\]

\end{theorem}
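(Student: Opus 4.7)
The plan is to establish the matrix Bernstein inequality by the Laplace transform method, following the Ahlswede--Winter/Tropp framework. The first step is to reduce to the Hermitian case via the dilation $\widetilde{\BZ}_k = \begin{bmatrix} 0 & \BZ_k \\ \BZ_k^\top & 0 \end{bmatrix}$, which is a $(d_1+d_2)\times(d_1+d_2)$ symmetric matrix with $\|\widetilde{\BZ}_k\| = \|\BZ_k\|$ and $\lambda_{\max}(\widetilde{\sum_k \BZ_k}) = \|\sum_k \BZ_k\|$. So from now on I can assume the $\BZ_k$ are symmetric of size $d\times d$ with $d = d_1 + d_2$.

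Next I would apply the exponential Markov inequality with parameter $\theta > 0$:
\[
\Pr\Bigl( \lambda_{\max}\bigl( \textstyle\sum_k \BZ_k \bigr) \geq t \Bigr) \leq e^{-\theta t} \, \E \tr \exp\Bigl( \theta \textstyle\sum_k \BZ_k \Bigr),
\]
using the standard bound $e^{\theta \lambda_{\max}(\BM)} \leq \tr e^{\theta \BM}$ for symmetric $\BM$. The decisive step is decoupling the expectation inside the matrix exponential. This is where Lieb's concavity theorem enters: the map $\BA \mapsto \tr \exp(\BH + \log \BA)$ is concave on the positive definite cone for any fixed Hermitian $\BH$. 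Combined with Jensen's inequality and induction on $k$, this yields the master inequality
\[
\E \tr \exp\Bigl( \theta \textstyle\sum_k \BZ_k \Bigr) \leq \tr \exp\Bigl( \textstyle\sum_k \log \E\, e^{\theta \BZ_k} \Bigr).
\]
It is worth noting that a weaker but elementary substitute (Golden--Thompson) suffices when there are only two summands, but the inductive step genuinely needs Lieb; this is the main technical obstacle and the reason the theorem is usually cited rather than reproved.

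The remaining work is a scalar-style MGF bound applied semidefinitely. Since $\|\BZ_k\| \leq R$ and $\E \BZ_k = 0$, using the expansion $e^{\theta x} \leq 1 + \theta x + g(\theta) x^2$ on $|x| \leq R$ with $g(\theta) = (e^{\theta R} - 1 - \theta R)/R^2$, I obtain the semidefinite inequality $\E e^{\theta \BZ_k} \preceq \BI + g(\theta) \, \E \BZ_k^2$ and hence $\log \E e^{\theta \BZ_k} \preceq g(\theta) \, \E \BZ_k^2$ by operator monotonicity of the logarithm. Plugging back in and bounding the trace by dimension times the top eigenvalue gives
\[
\Pr\Bigl( \bigl\| \textstyle\sum_k \BZ_k \bigr\| \geq t \Bigr) \leq (d_1 + d_2)\, \exp\bigl( -\theta t + g(\theta) \sigma^2 \bigr),
\]
where I use $\|\sum_k \E \BZ_k^\top \BZ_k\|$ and $\|\sum_k \E \BZ_k \BZ_k^\top\|$ to control $\lambda_{\max}(\sum_k \E \widetilde{\BZ}_k^2) = \sigma^2$ after the dilation.

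The final step is to optimize $\theta > 0$. Using the elementary estimate $g(\theta) \leq \theta^2 / (2(1 - \theta R/3))$ valid for $\theta R < 3$, the exponent becomes $-\theta t + \theta^2 \sigma^2 / (2(1 - \theta R/3))$, which is minimized by $\theta = t/(\sigma^2 + Rt/3)$. Substituting yields the quoted tail bound $(d_1+d_2)\exp(-t^2/2 / (\sigma^2 + Rt/3))$. The only genuine hurdle in the whole argument is Lieb's concavity; the rest is algebraic manipulation and operator-monotonicity observations.
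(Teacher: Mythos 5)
The paper does not prove this theorem; it is quoted verbatim from Tropp~\cite{T12} and used as a black box. Your sketch is a faithful and correct outline of Tropp's own proof --- self-adjoint dilation to reduce to the Hermitian case, the matrix Laplace transform bound, Lieb's concavity theorem to obtain the master inequality $\E \tr \exp(\theta \sum_k \BZ_k) \leq \tr \exp(\sum_k \log \E e^{\theta \BZ_k})$, the semidefinite MGF bound $\log \E e^{\theta\BZ_k} \preceq g(\theta)\E\BZ_k^2$ with $g(\theta) = (e^{\theta R}-1-\theta R)/R^2$, and finally the estimate $g(\theta) \leq \theta^2/(2(1-\theta R/3))$ with the optimal choice $\theta = t/(\sigma^2 + Rt/3)$ --- so there is nothing to reconcile against the paper, and nothing in your argument to correct.
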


It is easy to see that
\[
\left\| \sum_k \BZ_k \right\| \leq \sqrt{2\gamma\sigma^2 \log (d_1 + d_2)} + \frac{2\gamma R\log(d_1+d_2)}{3} 
\]
with probability at least $1 - n^{-\gamma + 1}.$

\bibliographystyle{abbrv}

\end{document}